\let\newfloat\newfloat@ltx
\algrenewcommand\algorithmicrequire{\textbf{Input:}}
\algrenewcommand\algorithmicensure{\textbf{Output:}}
\long\def\/*#1*/{}
\newcommand{\nix}[1]{}
\newcommand{\ket}[1]{|#1\rangle}
\newtheorem{theorem}{Theorem}
\newtheorem{lemma}[theorem]{Lemma}
\newtheorem{proposition}[theorem]{Proposition}
\newtheorem{remark}[theorem]{Remark}
\newtheorem{definition}[theorem]{Definition}
\begin{document}
\title{Color codes with twists: construction and universal gate set implementation}
\author{Manoj G. Gowda}
\author{Pradeep Kiran Sarvepalli}
\affiliation{
Department of Electrical Engineering, Indian Institute of Technology Madras, Chennai 600 036, India
}

\begin{abstract}
Twists are defects in the lattice that can be used to perform encoded computations.
Three basic types of twists can be introduced in color codes, namely, twists that permute color, charge of anyons and domino twists that permute the charge label of an anyon with a color label.
In this paper, we study a subset these twists from coding theoretic viewpoint.
Specifically, we discuss systematic construction of charge permuting and color permuting twists in color codes.
We show that by braiding alone, Clifford gates can be realized in color codes with charge permuting twists.
We also discuss implementing single qubit Clifford gates by Pauli frame update and CNOT gate by braiding holes around twists in color codes with color permuting twists.
Finally, we also discuss implementing a non-Clifford gate by state injection, thus completing the realization of a universal gate set.
\end{abstract}
\maketitle

\section{Introduction} 
\subsection{Motivation}
Color codes are interesting from the point of view of fault tolerant quantum computation.
In two dimensions they allow transversal implementation of Clifford gates \cite{Bombin2006}.
There are various ways of doing encoded computation using 2D color codes.
One of the early approaches was due to Fowler \cite{Fowler2011}.
In this information was encoded using holes and gates realized using code deformation. 
Independently, Landahl et al.~\cite{Landahl2011} also proposed alternate protocols for fault tolerant quantum computation with color codes. 
They also used  holes for encoding information, 
and code deformation for implementing encoded gates.

Lattice surgery~\cite{Horsman2012,Landahl2014, Litinski2019} is an alternative to quantum computation with holes. 
Lattice surgery was used to show universal computation in color codes and the analysis of the resources required for the same was studied by Landahl and Ryan-Anderson~\cite{Landahl2014}.

Twists are yet another method to encode information into a 2D lattice. 
Furthermore, they can also be used to perform encoded gates. 
Twists are defects in the lattice that permute the label of anyons.
Loosely speaking, an anyon is the syndrome resulting from a violated stabilizer.
Anyons in color codes have far richer structure compared to surface codes thus allowing for more permutations.
Anyons in color codes are characterized by two labels, namely, charge and color.
Twists in color code can permute charge, color or both.
Kesselring et al. \cite {Kesselring2018} classified  the various types of twists possible in color codes. 
However, certain aspects of color codes with twists were left unexplored. 
This motivates us to undertake a study of twists in color codes and their application to fault-tolerant quantum computation.
Another reason for our study comes from the fact that twists can potentially lead to lower complexity for quantum computation. 
For instance,  twists in surface codes are shown to provide gains in space time complexity of computation~\cite{Hastings2015}.

Braiding in surface codes needs both lattice modification as well as stabilizer modification.
A particular type of twist in color codes, viz. charge permuting twist, requires no lattice modification for movement.
Only the stabilizers of the twist face and the faces in the path of the twist movement need to be changed. 
This can be accomplished without lattice modification for charge permuting twists. 
Therefore, braiding procedure is simpler in comparison to surface codes with twists.

\subsection{Previous work and contributions}
The use of dislocations for encoding was suggested by Kitaev~\cite{Kitaev2003}.
The ends of a dislocation are called twists.
Twists in toric code model were first studied by Bombin~\cite{Bombin2010} where twists were shown to behave like non-Abelian anyons.
Twists were studied from the perspective of unitary tensor categories by Kitaev and Kong~\cite{Kitaev2012}.
Yu and Wen~\cite{Yu-Wen2012} studied twists in qudit systems  where the authors showed that twists exhibit projective non-Abelian statistics.
Hastings and Geller~\cite{Hastings2015}  showed that by using twists in surface code along with arbitrary state injection leads to reduction in amortized time overhead.
Twists in surface codes were identified with corners of lattice by Brown et al.~\cite{Brown2017} which helps in performing single qubit Clifford gates.
Further, the authors also proposed a hybrid encoding scheme with holes and twists in surface codes that was made use of to implement CNOT gate.
Surface codes with twists in odd prime dimension and braiding protocols to implement generalized Clifford gates were studied in Ref.~\cite{GowdaSarvepalli2020}.

Twists have also been studied in other class of topological codes.
Bombin~\cite{Bombin2011} studied twists in topological subsystem color codes and showed that Clifford gates can be implemented by braiding twists. 
Litinski and von Oppen~\cite{Litinski2018}  studied twists in Majorana surface codes and have shown that all logical Clifford gates can be done with zero time overhead.
Kesselring et al.~\cite{Kesselring2018} have cataloged the twists in color codes.
They have presented the basic twists with which all twists in color codes can be realized.
We build upon their work, and explore in detail two types of twists in color codes namely, the charge permuting twists and color permuting twists. 
The work of Kesselring et al.~\cite{Kesselring2018} approached color codes with twists from
the abstract theory of boundaries and domain walls. 
In contrast, in this paper we 
take a more concrete approach giving explicit constructions and protocols for quantum computation. 
Our contributions are listed below.
\begin{compactenum}[(i)]
\item We propose systematic constructions of color codes with charge permuting twists and color permuting twists from a $2$-colex.
We focus on $X$ type charge permuting twists. 
while Ref.~\cite{Kesselring2018} focussed on $Y$ type charge permuting twists.
The color permuting twists we study have  different lattice representation than the ones studied in 
Ref.~\cite{Kesselring2018}. This also results in a different string algebra for the  Pauli operators. 
\item We propose an implementation of Clifford gates by braiding charge permuting twists. 
This result is stated in Theorem~\ref{thm:single-qb-Clifford} for single qubit Clifford gates and in Theorem~\ref{thm:multi-qb-entangling} for controlled-$Z$ gate up to a phase gate on control and target qubits.
Ref.~\cite{Kesselring2018} did not provide explicit details of these protocols while they discussed code deformation using twists for implementing quantum gates. 

\item We propose an implementation of single qubit Clifford gates by Pauli frame update and CNOT gate by braiding holes around color permuting twists.
The latter protocol is inspired by the work of Brown et al.~\cite{Brown2017} who proposed using a combination of holes and twists for implementing CNOT gate in the context of surface codes. 
\end{compactenum}
\smallskip

Finally, we discuss the implementation of a non Clifford gate for both types of twists.
The protocol is a minor variation of the one presented in Ref.~\cite{Bravyi2006} and could be of independent interest.
A list of our main results is given in Table~\ref{tab:imp_lm-thm}.

\begin{table}[htb]
    \centering
    \begin{tabular}{p{2.5cm}|p{2.5cm}|p{2.5cm}}
    \hline
    \hline
     Contribution / Twist type & Charge permuting twists & Color permuting twists\\
     \hline
     Code construction & Lemma~\ref{lm:encoded-qubits-charge} & Lemma~\ref{lm:encoded-qubits-cp}\\
     \hline
     Gates & Theorem~\ref{thm:single-qb-Clifford} and Theorem~\ref{thm:multi-qb-entangling} & Theorem~\ref{thm:color-perm-gates} \\
     \hline
    \end{tabular}
    \caption{A list of main results of the paper.}
    \label{tab:imp_lm-thm}
\end{table}

Our treatment of the color codes with twists makes extensive use of mappings between  Pauli operators and strings on the code lattices. 
This approach brings out the geometry of the stabilizer generators and the logical operators in a very transparent fashion. 
Furthermore, it allows for a consistent and simplified analysis of the encoded gates using these codes.

\subsection{Overview}
In Section~\ref{sec:background}, we briefly discuss color codes and review the work presented in Ref.~\cite{Kesselring2018}.
In Section~\ref{sec:charge}, we study charge permuting twists.
while  Section~\ref{sec:color} is devoted to color permuting twists.
We discuss code parameters, logical operators and string representation for Pauli operators in the presence of color permuting twists.
In Section~\ref{sec:gates-charge-permuting}, we show that encoded Clifford gates can be realized by braiding charge permuting twists.
In Section~\ref{sec:gates-color-permuting}, we show that Clifford gates can be realized using charge permuting twists by Pauli frame update and joint parity measurements.
\newline

\noindent \emph{Notation.}
Some frequently used notation in the paper is given in Table~\ref{tab:notation}.

\begin{table}[H]
    \centering
    \begin{tabular}{c|c}
    \hline
    \hline
        Notation & Meaning \\
        \hline
        $\mathsf{F}_c$, $c \in \{r,g,b \}$ & Faces of color $c$ \\
        \hline
        $\mathsf{F}_{c c^\prime}$ & Faces with colors $c$ or $c^\prime$ i.e., $\mathsf{F}_c\cup \mathsf{F}_{c'}$ \\
        \hline
        $V(f)$ & Vertices of a face $f$\\
         \hline
         $A_2(f)$ & Faces that share an edge with the face $f$ \\
         \hline
         $\mathcal{W}_{i,j}^{c}$ & String of color $c$ encircling twists $t_i$ and $t_j$ \\
         \hline
    \end{tabular}
    \caption{Notations used in the paper and their meaning.}
    \label{tab:notation}
\end{table}

\section{Background}
\label{sec:background}
We briefly review the necessary background on 2D color codes, see \cite{Bombin2006} for more details.
Color codes are defined on trivalent and three-face-colorable lattices embedded on a two dimensional surface. Such lattices are called $2$-colexes. 
A well known example of trivalent and three colorable lattice is the honeycomb lattice. 
Two stabilizers generators are defined on every face of the lattice:
\begin{equation}
B_f^{X} = \prod_{v \in V(f)}X_v, \text{ and } B_f^{Z} = \prod_{v \in V(f)} Z_v,
\label{eqn:cc-stabilizers}
\end{equation}
where $V(f)$ denotes the vertices that belong to the face $f$ and $X$ and $Z$ are the Pauli operators.
The stabilizers defined in Equation~\eqref{eqn:cc-stabilizers} apply to faces of all colors. 
However, not all stabilizers are independent. 
They satisfy the following relations~\cite{Bombin2006},
\begin{subequations}
\begin{eqnarray}
\prod_{f \in \mathsf{F}_r} B_f^X &=& \prod_{f \in \mathsf{F}_g} B_f^X = \prod_{f \in \mathsf{F}_b} B_f^X, \\
\prod_{f \in \mathsf{F}_r} B_f^Z &=& \prod_{f \in \mathsf{F}_g} B_f^Z = \prod_{f \in \mathsf{F}_b} B_f^Z.
\end{eqnarray}
\label{eqn:cc-stab-constraint}
\end{subequations}
Here  $\mathsf{F}_c$ denotes the set of faces of color $c \in \{ r, g, b\}$.
Equations~\eqref{eqn:cc-stab-constraint} indicate that there are four dependent stabilizers.
The number of logical qubits depends on the topology of surface on which graph is embedded.
If the surface has genus $g$, then the number of encoded qubits is $4g$.
The genus of  a surface is half the number of independent non-contractible loops that can be drawn on the surface.
For instance, any loop drawn on a sphere can be contracted to a point and hence its genus is zero.
On a torus, one can draw two independent non-contractible loops and hence its genus is $1$.

Of particular interest to us in this paper is the case $g = 0$.
Examples of such lattices are shown in Fig.~\ref{fig:lattice-boundary}.
Two stabilizers, $X$ type and $Z$ type, are defined for each  face of the lattice.
All vertices in the lattice are trivalent and hence we have $3v = 2e$, where $e$ and $v$ are the number of edges and vertices of the lattice respectively.
Using this in the Euler characteristic equation $v + f - e = 2$, where $f$ is the number of faces, we get $2f = v + 4$.
It is easy to verify that this lattice also obeys the constraints given in Equations~\eqref{eqn:cc-stab-constraint}.
Therefore, the number of independent stabilizers is $s = 2f-4 = v$, thus giving zero encoded qubits.

\begin{figure}[htb]
    \centering
    \begin{subfigure}{.225\textwidth}
    \centering
    \includegraphics[scale = 1.25]{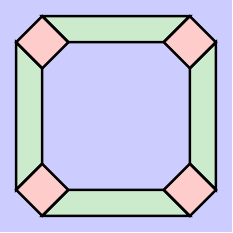}
    \subcaption{}
    \label{fig:sq-octagon-lattice}
    \end{subfigure}
~
\begin{subfigure}{.225\textwidth}
    \centering
    \includegraphics[scale = .5]{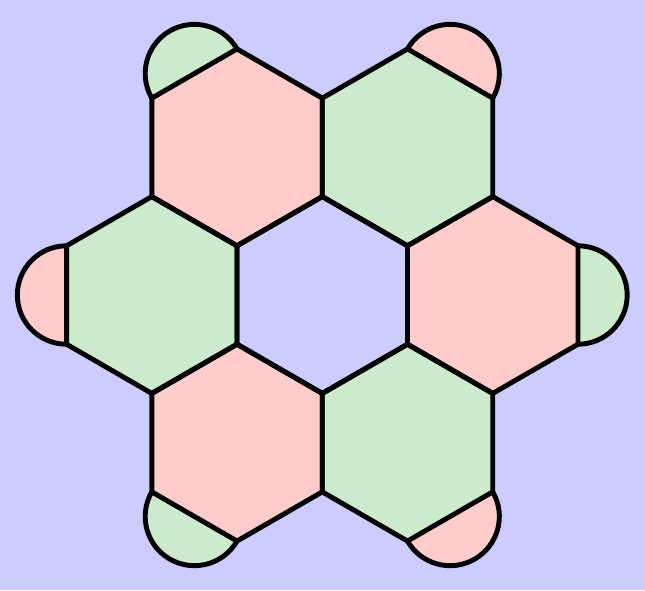}
    \subcaption{}
    \label{fig:hexagon-lattice-boundary}
    \end{subfigure}
    \caption{Trivalent and three colorable lattices with boundary. (a) Square octagon lattice that does not encode any logical qubits. The outer unbounded face has blue color. (b) Hexagon lattice that does not encode any logical qubits. The outer unbounded face has blue color. Stabilizers along the boundary are of weight two.}
    \label{fig:lattice-boundary}
\end{figure}

\subsection{Previous work on twists in color codes}
We now review some relevant material from Ref.~\cite{Kesselring2018}.
The syndromes in color codes can be written as $cp$ where $c \in \{r,g,b\}$ denotes the color of face on which the syndrome is present and $p \in \{x,y,z\}$ denotes the type of stabilizer violated.
If a $Y$ error occurs, then both $X$ and $Z$ stabilizers are violated which is indicated as $cy$.
The syndromes present in the color code are given in Table~\ref{tab:bosons}~\cite{Kesselring2018}.

\begin{table}[htb]
\centering
\begin{tabular}{c|c|c}
\hline
$rx$  & $gx$ & $bx$  \\ \hline
$ry$ & $gy$ & $by$ \\ \hline
$rz$ & $gz$ & $bz$\\\hline
\end{tabular}
\caption{Types of syndromes present in  color codes. The color label of a syndrome viz., $r$, $g$, $b$ denotes the color of the face on which the syndrome is present and the charge label $x$, $y$, $z$ indicate the type of stabilizer violated.}
\label{tab:bosons}
\end{table}

\textit{Twists} are defects in the lattice that permute a label of an anyon to another i.e. they permute anyons when anyons go around twist (more precisely, when they cross the domain wall).
\textit{Domain wall} is a virtual path in the lattice that marks the point where anyons are permuted.
In color codes with twists, there is more than one way to permute anyons and the corresponding twists are given below.

\begin{compactenum}[a)]
\item \textit{Charge permuting twists}. 
These twists exchange the Pauli label of a pair of anyons in a column of Table~\ref{tab:bosons} while leaving the third unchanged.
For example, $x$ and $z$ labels are exchanged and $y$ label is left unchanged.
These twists do not permute the color.
For example, $rx$ is permuted to $rz$ and vice versa but $ry$ is unaltered.

\item \textit{Color permuting twists}. 
These twists correspond to exchange of the color label of a pair of anyons in a row of 
Table~\ref{tab:bosons}.
They leave the color of the third entry in that row unchanged while exchanging the colors of the other two entries. 
Note that these twists permute only color but not charge.
For example a twist that leaves red anyons unchanged but permute $b\alpha$ to $g \alpha$ and vice versa.
This corresponds to exchanging the corresponding entries in two rows in Table~\ref{tab:bosons}.

\item \textit{Domino twists}. Apart from exchanging anyons along rows and columns, one can also exchange them across the diagonal in Table~\ref{tab:bosons}.
In this case, the twist acts by transposition i.e. the diagonal entries $rx$, $gy$ and $bz$ are unaltered whereas the following permutations take place: $ry \leftrightarrow gx$, $rz \leftrightarrow bx$ and $gz \leftrightarrow by$.
This transformation can also be seen as permutation of color and charge labels: $r \leftrightarrow x$, $g \leftrightarrow y$ and $b \leftrightarrow z$.
The anyons $rx$, $gy$ and $bz$ contain the charges that are mutually permuted and hence are left invariant whereas the other bosons are permuted. 
For example, under this permutation $ry$ will be permuted to $bx$ which is the diagonally opposite entry in Table~\ref{tab:bosons}.
\end{compactenum}

\begin{remark}
\label{rmk:other-twists}
It is to be noted that these are fundamental twist types.
One can combine two or more of these twist types to obtain other twists.
\end{remark}

\begin{table}[htb]
    \centering
    \begin{tabular}{p{2cm}|p{2cm}|p{2cm}|p{2cm}}
    \hline
    \hline
       Features / Twist type & Charge permuting  & Color permuting  &  Domino\\
         \hline
        Lattice modification  & Not needed & Needed & Needed \\
         \hline
         Physical qubits added or removed from lattice  & No & Removed &  Added\\
         \hline
         Type of code & non-CSS & non-CSS${}^\dagger$ & non-CSS \\
         \hline
         Geometry of logical operators & Closed strings encircling a pair of twists & Closed strings encircling a pair of twists & Closed strings encircling a pair of twists\\
         \hline
         $T$-line & Absent & Present & Absent \\
         \hline
    \end{tabular}
    \caption{Comparison of twists in color codes.  ${}^\dagger$One can also obtain a CSS code} by choosing $Z$ or $X$ type stabilizer on twist face.
    \label{tab:comp_twists}
\end{table}

A comparison of various twist types is given in Table~\ref{tab:comp_twists}.
For completeness we have also included all the three fundamental types of twists in color codes. For the rest of the paper we only consider charge and color permuting twists. 
In the following sections we assume that the lattices are obtained by embedding graphs on a two dimensional plane.
We also assume that the boundary of lattices have the same color as given in Fig.~\ref{fig:sq-octagon-lattice} and Fig.~\ref{fig:hexagon-lattice-boundary}.

\subsection{Pauli operators as strings in color codes without twists}
In this section, we give a mapping between Pauli operators and strings on color code lattices without twists.
This correspondence is useful in abstracting away the  lattice information and studying the code properties using the string algebra.
The three single qubit Pauli operators, namely, $X$, $Y$ and $Z$ are represented using three different types of strings.
A solid string of any color represents Pauli operator $Z$, a dashed string of any color represents Pauli operator $X$ and a dash-dotted string of any color represents Pauli operator $Y$, see Fig.~\ref{fig:cc-pauli-string-map}.

\begin{figure}[htb]
    \centering
    \includegraphics[scale = 1.5]{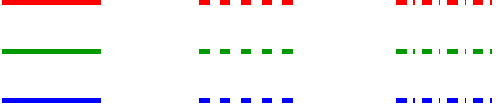}
    \caption{String to Pauli operator correspondence: solid string of any color is mapped to $Z$, dashed string of any color is mapped to $X$ and dash dotted string of any color is mapped to $Y$.}
    \label{fig:cc-pauli-string-map}
\end{figure}

We begin by representing a Pauli operator on a vertex in a color code without twists.
When a Pauli error occurs on a vertex, syndromes are created on all the three faces incident on that vertex, see Fig.~\ref{fig:Qb-cc-pauli-string-1}.
This is because, all the faces have both $Z$ and $X$ type stabilizers defined on them and an error violated at least one of these stabilizers.
A single Pauli is represented by a string  with three terminals as shown in  Fig.~\ref{fig:Qb-cc-pauli-string-5}.
The end point of the strings are interpreted as syndromes.
Now consider applying the same error operator on two adjacent vertices as shown in Fig.~\ref{fig:Qb-cc-pauli-string-3}.
This can be understood from stabilizer viewpoint.
The error operator commutes with the stabilizer of the blue face and hence no syndrome should be observed on this face.
Similarly, on the green face, the syndromes vanish.
(In effect two $bz$ syndromes fuse to vacuum.)
The string representation of Fig.~\ref{fig:Qb-cc-pauli-string-3} is shown in Fig.~\ref{fig:Qb-cc-pauli-string-7}.
This representation is obtained from the single qubit representation shown in Fig.~\ref{fig:Qb-cc-pauli-string-5}.
The two stings on blue and green faces are connected as these faces do not host syndromes.
On the other hand, on the red faces the string is terminated indicating the presence of syndromes.

Note that if one of the red faces hosts a syndrome $rz$, then applying the operator shown in Fig.~\ref{fig:Qb-cc-pauli-string-4} annihilates the syndrome on the face with prior syndrome and creates one on other without any prior syndrome.
This process can be seen as moving the syndrome between adjacent faces of the same color.
The operators that move syndromes are known as hopping operators~\cite{BhagojiSarvepalli2015}.
A compact representation of the string in Fig.~\ref{fig:Qb-cc-pauli-string-7} is given in Fig.~\ref{fig:Qb-cc-pauli-string-8}.
Note that the string is drawn dashed indicating that is represents Pauli operator $X$ and its color is red indicating that red edges are in its support.

\begin{figure}[htb]
    \centering
    \begin{subfigure}{.225\textwidth}
        \centering
        \includegraphics[scale = 1]{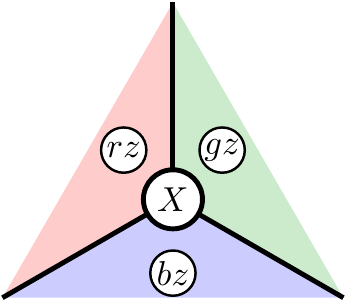}
        \subcaption{}
        \label{fig:Qb-cc-pauli-string-1}
    \end{subfigure}
     ~
    \begin{subfigure}{.225\textwidth}
        \centering
        \includegraphics[scale = 1]{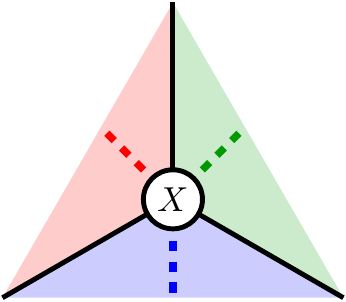}
        \subcaption{}
        \label{fig:Qb-cc-pauli-string-5}
    \end{subfigure}
     ~
    \begin{subfigure}{.225\textwidth}
        \centering
        \includegraphics[scale = 1.15]{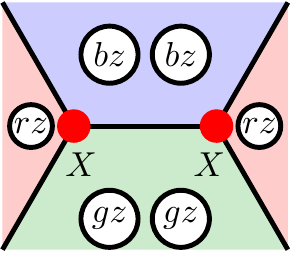}
        \subcaption{}
        \label{fig:Qb-cc-pauli-string-3}
    \end{subfigure}
    ~
    \begin{subfigure}{.225\textwidth}
        \centering
        \includegraphics[scale = 1.15]{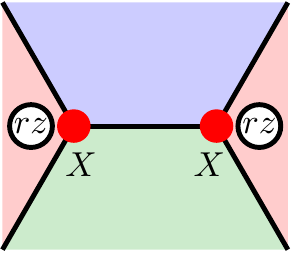}
        \subcaption{}
        \label{fig:Qb-cc-pauli-string-4}
    \end{subfigure}
     ~
    \begin{subfigure}{.225\textwidth}
        \centering
        \includegraphics[scale = 1.15]{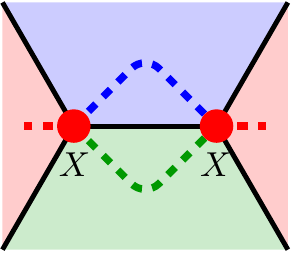}
        \subcaption{}
        \label{fig:Qb-cc-pauli-string-7}
    \end{subfigure}
     ~
    \begin{subfigure}{.225\textwidth}
        \centering
        \includegraphics[scale = 1.15]{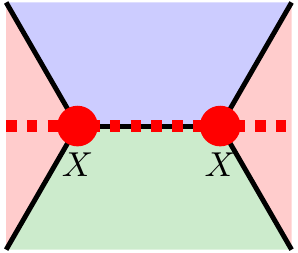}
        \subcaption{}
        \label{fig:Qb-cc-pauli-string-8}
    \end{subfigure}
\caption{Mapping Pauli operators to strings. (a) An $X$ error on the vertex violates $Z$ type stabilizers on the three faces creating three nonzero syndromes. (b) String representation of the single qubit $X$ error. The end points of strings indicate syndromes. (c) Syndromes of a two qubit $X$ error on adjacent qubits. This creates a syndrome on red faces and two syndromes on green and blue faces. (d) Syndromes on blue and green faces are annihilated since the two qubit $X$ operator commutes with stabilizers of those faces. (e) String representation for the two qubit error in Fig.~\ref{fig:Qb-cc-pauli-string-3}. String terminates in faces with syndrome and is continuous in faces without syndrome. (f) Alternative representation for the error in Fig.~\ref{fig:Qb-cc-pauli-string-3}. This string is obtained from Fig.~\ref{fig:Qb-cc-pauli-string-7} by combining blue and green strings.}
\label{fig:str_pauli}
\end{figure}

\section{Charge permuting twists}
\label{sec:charge}
In this section, we discuss charge permuting twists.
The process of twist creation and movement corresponds to code deformation~\cite{Bombin2009}.
After reviewing the procedure to create charge permuting twists,
we derive the number of encoded logical qubits by stabilizer counting.
We also give logical operators for the encoded qubits.
We give a mapping between Pauli operators and strings on the lattice.

Charge permuting twists are faces in the lattice that permute the Pauli labels of anyons that encircle them.
A $p$ type charge permuting twist acts as follows:
\begin{subequations}
\begin{eqnarray}
  cp &\mapsto& cp \\
  cp^\prime &\mapsto& cp^{\prime \prime}\\
  cp^{\prime \prime} &\mapsto& cp^\prime
\end{eqnarray}
\label{eqn:p-twist}
\end{subequations}
where $p$, $p^\prime$ and $p^{\prime \prime}$ are distinct Pauli labels.
An example is the $Y$ twist~\cite{Kesselring2018} that exchanges the $X$ and $Z$ labels of anyons while leaving the $Y$ label unchanged.
The $X$ twist leaves the $X$ label unchanged and exchanges the $Y$ and $Z$ labels, see Fig.~\ref{fig:charge-permuting-twists-action}, and the $Z$ twist exchanges $X$ and $Y$ labels leaving the $Z$ label unchanged.
Lattice modification is not required for creating charge permuting twists but only stabilizers of certain faces need to be modified~\cite{Kesselring2018}.

Before moving on to twist creation, we need a notion of how far apart the twists are as the code distance depends on it.

The dual of a lattice $\Gamma(V, E,F)$ is constructed by mapping the faces in $\Gamma$ to vertices, and connecting two such vertices if the  faces in the primal lattice share an edge. 
This construction maps faces to vertices, edges to edges and vertices to faces in the dual lattice. 
Now we can precisely state how  far apart two charge permuting twists are.

\begin{definition}[Twist separation]
Separation between twists is the distance between the vertices corresponding to twists in the dual lattice.
\label{def:twist-separation}
\end{definition}

As an example, consider the separation between twists in  Fig.~\ref{fig:charge-permuting-twists} where the separation between twists $t_1$, $t_2$ and $t_3$, $t_4$ is three and four respectively.
The domain wall in Fig.~\ref{fig:charge-permuting-twists} between $t_3$ and $t_4$ has been drawn to illustrate that a domain wall need not be the shortest path between twist faces.
The twist separation in that case still conforms to the above definition.
Twist separation determines the code distance, for this reason we include this as a design parameter while constructing color codes with twists.

\subsection{Creation of charge permuting twists}
We now summarize the procedure to create $k$ pairs of charge permuting twists in a $2$-colex. 
In case of charge permuting twists, there are no physical dislocations in the lattice. 
Twists are introduced by breaking the uniformity of  stabilizer assignment for each face of the 2-colex.
To introduce $k$ pairs of charge permuting twists, we first need to choose $2k$ vertices in the dual lattice such that any pair of them are at least a distance $\ell \ge 2$ apart.
Now, connect a disjoint pair of vertices by a path so that no two paths crossover.
The faces corresponding to the terminal points of a path become the twists while the path connecting them becomes a domain wall. 
The faces which correspond to the intermediate vertices of the path are the ones through which the domain wall passes through, see Fig.~\ref{fig:charge-permuting-twists}.
Once the twists and the faces through which domain wall passes through are chosen, assign the stabilizers on them as discussed next.

\begin{figure}[htb]
    \centering
    \begin{subfigure}{.45\textwidth}
        \centering
        \includegraphics[scale = 1.15]{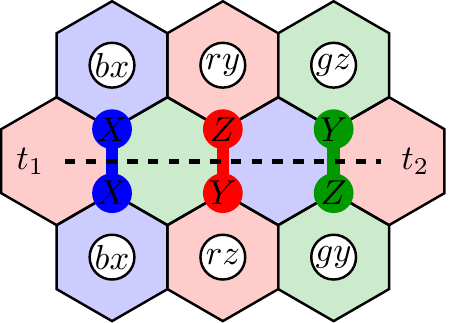}
        \subcaption{}
    \label{fig:charge-permuting-twists-action}
    \end{subfigure}
    ~
    \begin{subfigure}{.45\textwidth}
        \centering
        \includegraphics[scale = .9]{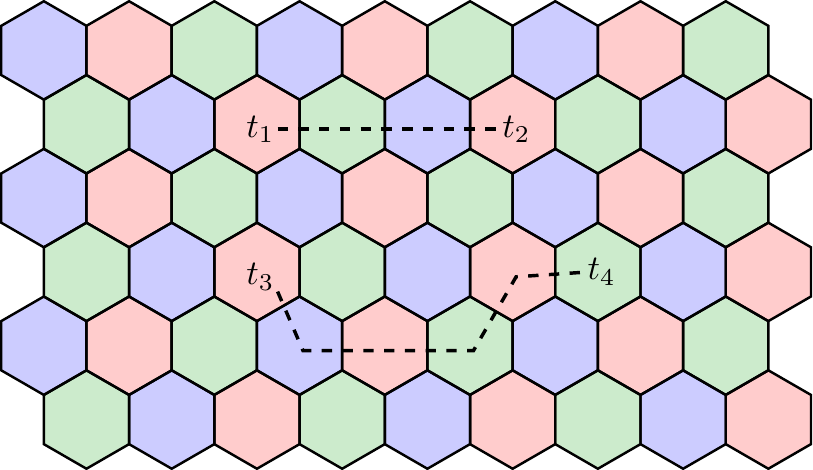}
        \subcaption{}
    \label{fig:charge-permuting-twists}
    \end{subfigure}
    \caption{Charge permuting twists and their action on the anyons. (a) Action of $X$ charge permuting twists on anyons. These twists permute only the charge of the anyons. An anyon with Pauli label $X$ is left unchanged whereas the anyons with $Z$ and $Y$ label are exchanged as shown. (b) Charge permuting twists. In the figure, faces marked $t_1$, $t_2$, $t_3$ and $t_4$ are twists and the dashed line is the domain wall. Charge permuting twists need not be of the same color, see twist pair $t_3$ and $t_4$ which are faces of different color. The domain wall need not be the shortest path between the twist faces, see twist pair $t_3$ and $t_4$.}
\end{figure}

Faces in color code lattices with charge permuting twists can be grouped into three categories: i) twists, ii) faces through which domain wall passes through and iii) faces that are neither twists nor faces through which domain wall passes through.
Stabilizers for the $Y$ twist are given in Ref.~\cite{Kesselring2018}.
We give the stabilizer assignment for the $X$ twist.

Two stabilizers are defined on faces that are neither twists nor through which domain wall passes through:
\begin{equation}
B_f^Z = \prod_{v \in V(f)} Z_v, \text{ and  } B_f^Y = \prod_{v \in V(f)} Y_v. \label{eqn:stab-unmodified-faces}
\end{equation}

The faces through which domain wall passes through has to permute the labels $Y$ and $Z$ of anyons while leaving the $X$ label unchanged.
This will happen if the Pauli operators of stabilizers on one side of the domain wall are of $Y$ type ($Z$ type) and on the other side they are of $Z$ type ($Y$ type).
Let the domain wall partition the vertices of a face $f$ into two sets $M_1$ and $M_2$.
Two stabilizers are assigned to such faces as defined below:
\begin{subequations}
\begin{eqnarray}
  B_{f,1} &=& \prod_{v \in M_1} Y_v \prod_{v \in M_2} Z_v\\
  B_{f,2} &=& \prod_{v \in M_1} Z_v \prod_{v \in M_2} Y_v.
\end{eqnarray}
\label{eqn:charge-domain-stab}
\end{subequations}

\begin{figure}[htb]
\centering
\includegraphics[scale = .9]{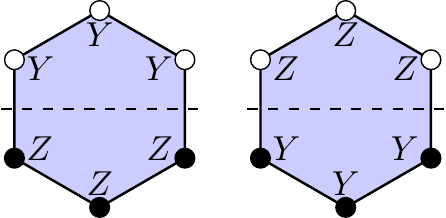}
\caption{Vertex partition for faces comprising domain walls in charge permuting twists. 
Dashed line is the domain wall and it does not pass through any vertex. 
The vertices indicated as dark and white circles form the partition of the vertices of the face.}
\label{fig:domain-wall-stab}
\end{figure}

The stabilizers defined in Equations~\eqref{eqn:charge-domain-stab} are shown in Fig.~\ref{fig:domain-wall-stab}.
The stabilizer assigned to the twist face has to commute with the stabilizer defined on the face adjacent to it through which domain wall passes through.
These two faces share exactly one edge.
Given the assignment in Eq.~\eqref{eqn:charge-domain-stab}~and~\eqref{eqn:stab-unmodified-faces},  
the only possible stabilizer that can be assigned to the twist face is that of $X$ type,
\begin{equation}
    B_{\tau} = \prod_{v \in V(\tau)} X_v.
    \label{eqn:charge-twist-stab}
\end{equation}

\noindent \emph{Stabilizer dependency.} The stabilizer assignment in Eq.~\eqref{eqn:stab-unmodified-faces}--\eqref{eqn:charge-twist-stab} is constrained by the following relations:
\begin{subequations}
\begin{eqnarray}
\prod_{f \in T\cap \mathsf{F}_{rb}}  B_f \prod_{f \in \mathsf{F}_{rb} \cap M}  B_{f,1}B_{f,2} \prod_{f \in \mathsf{F}_{rb} \cap U} B_{f}^ZB_{f}^Y&= &I\\
\prod_{f \in T\cap \mathsf{F}_{rg}}  B_f \prod_{f \in \mathsf{F}_{rg} \cap M}  B_{f,1}B_{f,2} \prod_{f \in \mathsf{F}_{rg} \cap U} B_{f}^ZB_{f}^Y&= &I
\end{eqnarray}
\label{eqn:charge-perm-stab-constraint}
\end{subequations}
where $T$, $M$ and $U$ denote the set of twist faces, modified faces and unmodified faces respectively.
From Equations~\eqref{eqn:charge-perm-stab-constraint}, we can see that one of the green and blue face stabilizers is dependent.
We take the outer unbounded blue face stabilizer to be the dependent one.
Since two stabilizers are defined on the unbounded blue face and only one of them is dependent, we have to measure the other non-local stabilizer which is undesirable.
We therefore discard one of the stabilizers and redefine the unbounded blue face stabilizer as 
\begin{equation}
    B_{f_e} = \prod_{v \in V(f_e)} X_v.
\end{equation}
This stabilizer still satisfies the constraint in Equations~\eqref{eqn:charge-perm-stab-constraint}.

\noindent \emph{Stabilizer commutation.}
Note that any two adjacent faces share two common vertices.
The Pauli operators corresponding to stabilizers are either same or different on the common vertices.
Hence, stabilizer commutation follows.
A more detailed analysis is given in Appendix~\ref{sec:charge-stabilizer}.

\begin{remark}
The stabilizers of color code with $X$ twist can be obtained from that of the $Y$ twist by applying phase gate on all qubits.
\end{remark}

We summarize with the following result on the parameters of color codes with 
charge permuting twists. 
\begin{lemma}[Encoded qubits for charge permuting twists]
A color code lattice with $t$ charge permuting twists encodes 
$\left(t - 1\right)$ logical qubits.
\label{lm:encoded-qubits-charge}
\end{lemma}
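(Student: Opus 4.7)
The plan is a direct stabilizer counting argument. Charge permuting twists introduce no physical dislocations, so the number of data qubits remains $v$, the number of vertices of the $2$-colex; the number of encoded qubits is therefore $v-r$, where $r$ is the rank of the stabilizer group. The task reduces to computing $r$.

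First I would partition the faces into the $t$ twist faces, $m$ domain-wall faces, and $u=f-t-m$ unmodified faces. From the stabilizer assignments in Eqs.~\eqref{eqn:stab-unmodified-faces}, \eqref{eqn:charge-domain-stab}, and \eqref{eqn:charge-twist-stab}, every unmodified or domain-wall face contributes two generators, every twist face contributes one, and the outer unbounded blue face contributes only one generator $B_{f_e}=\prod_{v\in V(f_e)}X_v$ after the discard-and-redefine step. Summing,
\[
N \;=\; 2(u-1) + 1 + 2m + t \;=\; 2(u+m) + t - 1 \;=\; 2f - t - 1.
\]

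Next I would identify the multiplicative relations among these generators. Equations~\eqref{eqn:charge-perm-stab-constraint} furnish two independent nontrivial relations, one for the color pair $rb$ and one for $rg$; the analogous relation for the pair $gb$ is their product and is not independent. Both remain valid after the redefinition on the outer face since $B_{f_e}$ coincides with the product $B_{f_e}^Y B_{f_e}^Z$ appearing in the original form of the constraint, up to a global phase. The main obstacle is to argue that these are the \emph{only} independent relations. For this I would observe that in the twist-free color code there are two additional $Z$-type (and two $Y$-type) relations of the form $\prod_{f\in\mathsf{F}_c}B_f^Z=\prod_v Z_v$; these are broken in the twisted code because a twist face carries only the generator $\prod_v X_v$ while a domain-wall face carries only operators of mixed $Y$--$Z$ type, so neither $\prod_{f\in\mathsf{F}_c}B_f^Z$ nor $\prod_{f\in\mathsf{F}_c}B_f^Y$ can be synthesised from the available generators. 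A short case analysis rules out any other relation.

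Combining the above, $r = N - 2 = 2f - t - 3$. For a trivalent planar lattice the Euler relation $v+f-e=2$ together with $3v=2e$ gives $2f=v+4$, hence $r = v - t + 1$. The number of encoded qubits is therefore
\[
v - r \;=\; t - 1,
\]
as claimed.
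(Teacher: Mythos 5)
Your proposal is correct and follows essentially the same route as the paper's proof: count $2f-t-1$ generators (two per ordinary face, one per twist face and one on the unbounded face), subtract the two dependencies of Eqs.~\eqref{eqn:charge-perm-stab-constraint}, and convert $f$ to $v$ via trivalence and Euler's formula to get $s=v-t+1$ and hence $k=t-1$. The only difference is that you spell out why exactly two relations survive and why they persist after the redefinition of $B_{f_e}$, which the paper leaves implicit.
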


\begin{proof}
In a color code lattice with charge permuting twists, all vertices are trivalent.
Therefore, we have $2e = 3v$, where $e$ and $v$ are the number of edges and vertices of the lattice respectively.
Using this in the Euler's formula, $v + f - e = 2$, where $f$ is the number of faces, we get, $2f = v + 4$.
Only one stabilizer is defined on twist faces and the unbounded face while the remaining faces we define two stabilizer generators.
Therefore, the number of stabilizers is $ 2f - t - 1 = v + 3 - t$.
Accounting for the dependencies (see Equations~\eqref{eqn:charge-perm-stab-constraint}), we obtain, 
$s = v - t + 1 = v - 2 (t /2 - 1) - 1$ independent stabilizer generators.
Since total number of qubits is $v$, the number of encoded qubits is $k = t - 1 = 2 (t /2 - 1) + 1$.
\end{proof}

\subsection{Pauli operators as strings on a  color code lattice with charge permuting twists} 
The  representation of the Pauli operators by strings as given in Fig.~\ref{fig:str_pauli} does not work in the presence of charge permuting twists as these twists permute the charge label of anyons.
Consider the operator shown in Fig.~\ref{fig:Qb-cc-pauli-string-6}.
The upper blue face is twist, domain wall passes through the green face and red faces are neither twists nor faces through which domain wall passes through.
Red faces are normal faces.
The Pauli operator $Z_u Y_v$ creates two $bx$ syndromes on twist face and syndromes $gy$ and $gz$ on the green face.
Also, $ry$ and $rz$ syndromes are created on red faces.
Syndromes on the twist face annihilate each other and hence the string is continuous in twist face.

However, the Pauli operator $Z_u Y_v$  commute with the stabilizers of the green face.
Therefore, string is continuous in green face too, see Fig.~\ref{fig:Qb-cc-pauli-string-2}.
The red faces host anyons and hence the string terminates there.
An alternate representation for  the operator described in Fig.~\ref{fig:Qb-cc-pauli-string-6} is given in Fig.~\ref{fig:Qb-cc-pauli-string-9}.
Note that in Fig.~\ref{fig:Qb-cc-pauli-string-9}, the string changes midway as it crosses the domain wall.

\noindent 
{\em String representation of stabilizers.}
We represent the stabilizers using the string representation just developed.
Twist stabilizer in string notation is given in Fig.~\ref{fig:Qb-cc-charge-twist-stab-string}.
Note that the twist is assigned $X$ type stabilizer and hence the string is dashed.
Also, the $X$ twist does not permute $X$ label and hence the string is not changed as it crosses the domain wall.
An alternate representation of twist stabilizer in Fig.~\ref{fig:Qb-cc-charge-twist-stab-string} would be to take green string instead of blue.
Stabilizers of faces through which domain wall passes through is given in Fig.~\ref{fig:Qb-cc-charge-modified-stab-string-1} and Fig.~\ref{fig:Qb-cc-charge-modified-stab-string-2}.
These stabilizers are of mixed type with $Z$ ($Y$) on vertices on one side of the domain wall and $Y$ ($Z$) on vertices on the other side of domain wall.
Note that the strings change as they cross domain wall.
For other faces that are not twists and through which domain wall does not pass through, the strings are of $Z$ type and $Y$ type.

\begin{figure}[htb]
    \centering
    \begin{subfigure}{.225\textwidth}
        \centering
        \includegraphics[scale = 1]{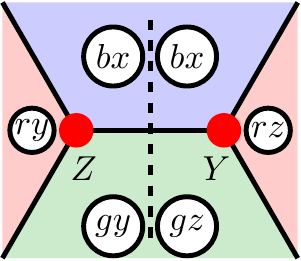}
        \subcaption{}
        \label{fig:Qb-cc-pauli-string-6}
    \end{subfigure}
    ~
    \begin{subfigure}{.225\textwidth}
        \centering
        \includegraphics[scale = 1]{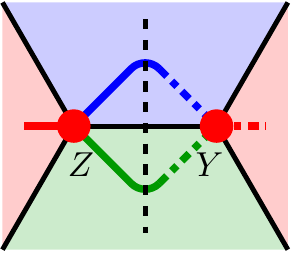}
        \subcaption{}
        \label{fig:Qb-cc-pauli-string-2}
    \end{subfigure}
     ~
    \begin{subfigure}{.225\textwidth}
        \centering
        \includegraphics[scale = 1]{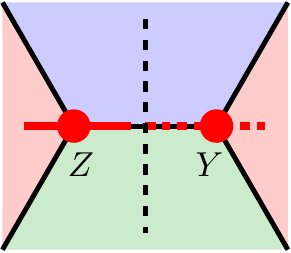}
        \subcaption{}
        \label{fig:Qb-cc-pauli-string-9}
    \end{subfigure}
    ~
    \begin{subfigure}{.225\textwidth}
        \centering
        \includegraphics[scale = 1.15]{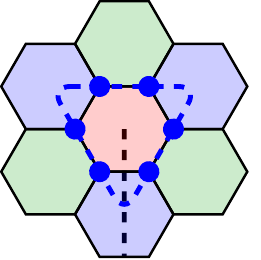}
        \subcaption{}
        \label{fig:Qb-cc-charge-twist-stab-string}
    \end{subfigure}
    ~
    \begin{subfigure}{.225\textwidth}
        \centering
        \includegraphics[scale = 1.15]{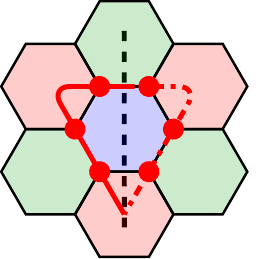}
        \subcaption{}
        \label{fig:Qb-cc-charge-modified-stab-string-1}
    \end{subfigure}
     ~
    \begin{subfigure}{.225\textwidth}
        \centering
        \includegraphics[scale = 1.15]{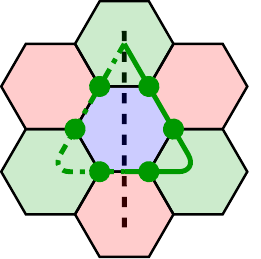}
        \subcaption{}        
        \label{fig:Qb-cc-charge-modified-stab-string-2}
    \end{subfigure}
    \caption{Strings and stabilizers in the presence of charge permuting twists. (a) Syndromes introduced on  blue twist face and  green face. The syndromes on twist face and green face vanish as the error operator commutes with the stabilizers. (b) Strings are continuous in blue and green faces indicating no syndromes are present on them. Open string in red faces indicate syndrome on them. (c) String representation of Fig.~\ref{fig:Qb-cc-pauli-string-6}. Note that syndrome changes its charge while retaining color. (d) String representation of twist stabilizer. The string does not change charge as the twist does not permute $X$ label. (e) String representation of a stabilizer on the face through which domain wall passes through. (f) String representation of another stabilizer on face through domain wall passes through.}
    \label{fig:Qb-cc-charge-modified-stab-string}
\end{figure}

\noindent 
\emph{String representation of logical operators.}
Logical operators (including stabilizers) are closed strings encircling an even number of twists.
A closed string enclosing exactly a pair of twists is a nontrivial logical operator. 
Further, we also need to show that these operators are not generated by stabilizers.
This is done by showing the existence of a string whose operator anticommutes with that of the given string, see Fig.~\ref{fig:LO-X-charge-permuting-lattice} and Fig.~\ref{fig:LO-Z-charge-permuting-lattice}.
These operators are logical $X$ and $Z$ operators respectively.
An important property of a logical operator is that it commutes with all the stabilizers.
Every closed string must enter and exit a face an even number of times. 
Therefore, every face shares an even number of common vertices with the logical operator.
This also holds for faces through which domain wall passes through.
Taking into account the stabilizer structure of domain wall faces, we conclude that the operators shown in Fig.~\ref{fig:LO-charge-permuting-lattice} commute with all stabilizer generators.

\begin{figure}[htb]
    \centering
    \begin{subfigure}{.225\textwidth}
          \centering
          \includegraphics[scale = .9]{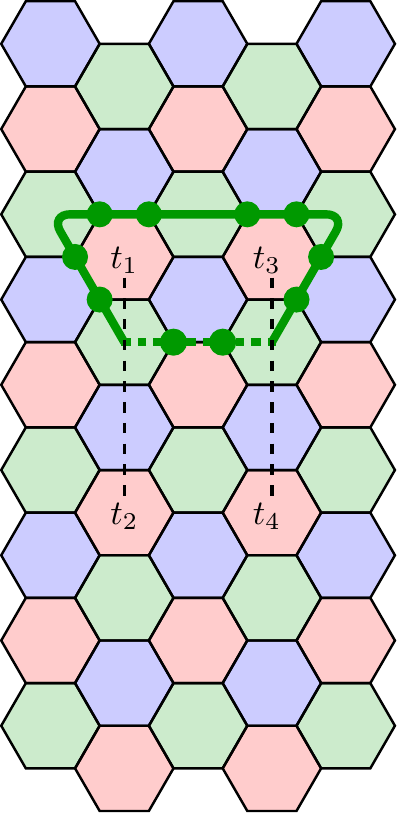}
          \subcaption{}
          \label{fig:LO-X-charge-permuting-lattice}
    \end{subfigure}
    ~
    \begin{subfigure}{.225\textwidth}
          \centering
          \includegraphics[scale = .9]{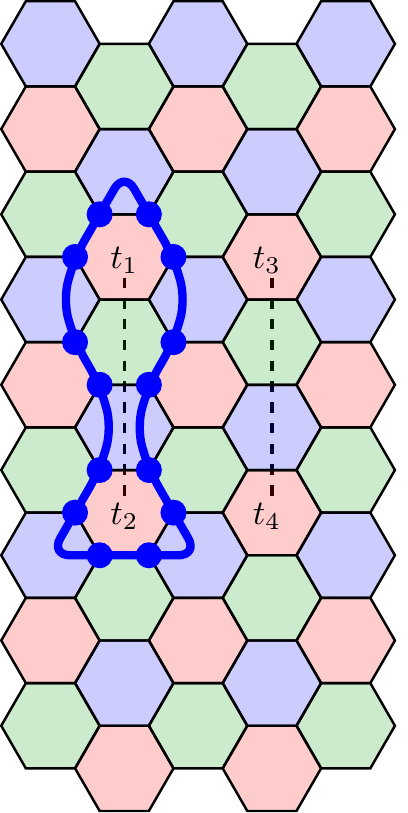}
          \subcaption{}
          \label{fig:LO-Z-charge-permuting-lattice}
    \end{subfigure}
    \caption{Logical operators for one encoded qubit in color code lattice with $X$ type charge permuting twists. Note that both operators have one vertex in common where the Pauli operator differs. (a) Logical $X$ operator depicted on lattice. (b) Logical $Z$ operator depicted on lattice.}
    \label{fig:LO-charge-permuting-lattice}
\end{figure}

Note that the product of twist stabilizers $t_1$, $t_2$ and the stabilizers defined on faces through which domain wall passes through (except the blue face) has the same support as the operator shown in Fig.~\ref{fig:LO-Z-charge-permuting-lattice} with $X$ on vertices in the support.
The product of these stabilizers with the logical $Z$ operator in Fig.~\ref{fig:LO-Z-charge-permuting-lattice} flips the Pauli operator on all vertices in its support to $Y$.
Therefore, the two operators are equivalent up to stabilizers.
The same holds true for logical $X$ operator shown in Fig.~\ref{fig:LO-X-charge-permuting-lattice}.
Pauli operators $Y$ and $Z$ can be flipped by taking the product of twist stabilizers on $t_1$ and $t_3$ and the blue face on which the operator has support.

\begin{figure}[htb]
        \centering
        \begin{subfigure}{.45\textwidth}
            \centering
            \includegraphics[scale = .85]{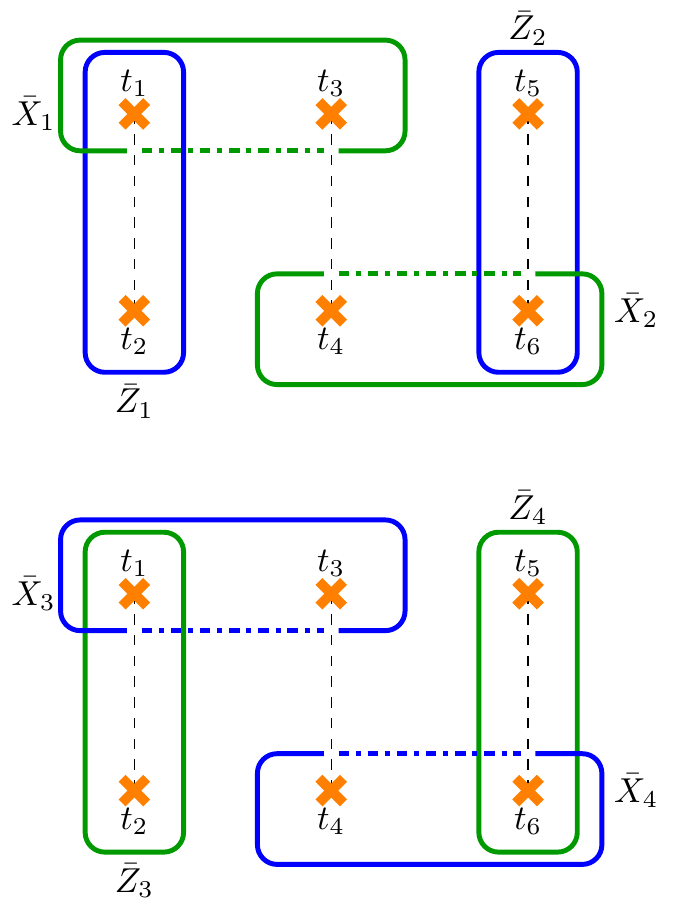}
            \subcaption{}
        \label{fig:lo_charge}
        \end{subfigure}
        ~
        \begin{subfigure}{.45\textwidth}
            \centering
            \includegraphics[scale = .75]{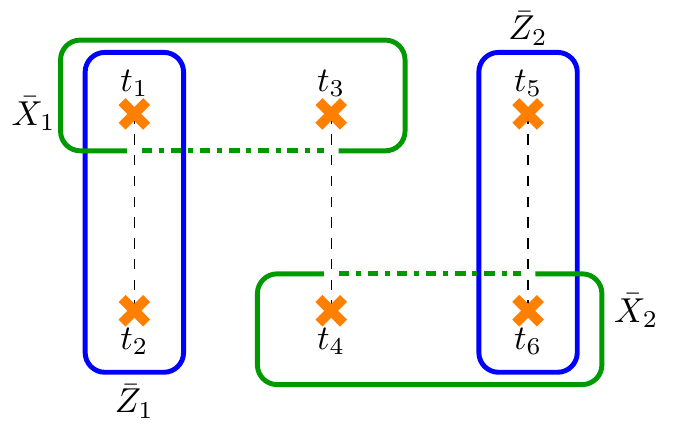}
            \subcaption{}
        \label{fig:lo_charge_canonical}
        \end{subfigure}
        ~
        \begin{subfigure}{.45\textwidth}
            \centering
            \includegraphics[scale = .75]{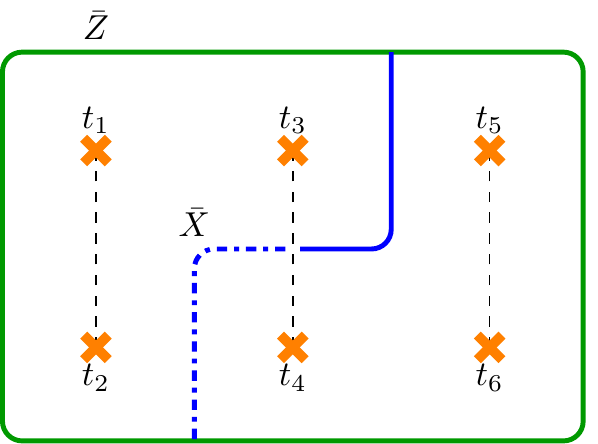}
            \subcaption{}
            \label{fig:LO-charge-perm-extra-qubit}
\end{subfigure}
\caption{Logical operators for encoded qubits in lattices with $X$ type charge permuting twists. (a) Logical operators for encoded qubits in lattice having six $X$ type charge permuting twists. (b) Canonical logical operators for encoded qubits in lattice having six $X$ type charge permuting twists for $\lfloor t/3 \rfloor$ encoding. (c) Logical operators for the the logical qubit in the presence of a pair of twists. The green string has support on the physical qubits on the boundary.}
\label{fig:charge_LO}
\end{figure}

For purposes of simple representation of twist and logical operators, we hide the lattice information and represent only twists, domain walls and logical operators.
Logical operators are represented as closed strings.
These strings are edges in a modified lattice  of the color code.
These modified lattices are  called shrunk lattices. 
A shrunk lattice of color $c$ is obtained by shrinking all faces of color $c$ in the lattice to vertices.
Equivalently, we can contract the edges of the lattice that are not colored $c$.
Observe that upon crossing the domain wall, color of the string is unaffected.
Logical operators when three pairs of $X$ twists are present in the lattice is given in Fig.~\ref{fig:lo_charge}.

The disadvantage with this choice of logical operators is that whenever a gate is to be performed only on a particular encoded qubit by braiding, say encoded qubit $1$ ($2$), we also end up performing a gate on logical qubit $3$ ($4$) (see Fig.~\ref{fig:lo_charge}) which is undesirable.
Therefore, we treat encoded qubits $3$ and $4$ as gauge qubits.
The encoded qubits that do not carry any useful information are called \textit{gauge qubits}.

The canonical form of logical operators for $\lfloor t/3 \rfloor$ encoding is shown in Fig.~\ref{fig:lo_charge_canonical}.
With $\lfloor t/3 \rfloor$ encoding, a closed string encircling three pairs of twists encoding two logical qubits is a stabilizer.
We use $\lfloor t / 3 \rfloor$ encoding which encodes two qubits per six twists for implementing gates by braiding.

\begin{theorem}
A color code with $t$ charge permuting twists with $ \left\lfloor \frac{t}{3} \right\rfloor$ encoding defines an $\left[ \left[n, \left\lfloor \frac{t}{3} \right\rfloor\right]\right]$
subsystem code 
with $\left(t - 1\right) - \left\lfloor \frac{t}{3} \right\rfloor $  gauge qubits.
\end{theorem}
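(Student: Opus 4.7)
The plan is to build this as a direct bookkeeping argument on top of Lemma~\ref{lm:encoded-qubits-charge}. That lemma already supplies the stabilizer-code picture: $t$ charge permuting twists yield $t-1$ independent encoded qubits once the dependencies in Equations~\eqref{eqn:charge-perm-stab-constraint} are accounted for. The subsystem code is obtained by demoting some of these encoded qubits to gauge qubits. So the task is really to count logical vs.\ gauge qubits and to certify that the designated split is consistent.

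First I would formalize the $\lfloor t/3\rfloor$ encoding scheme described around Fig.~\ref{fig:lo_charge_canonical}. The key geometric input is that a closed string of any color encircling three pairs of twists acts trivially (it is equivalent to a product of stabilizers, as argued after Fig.~\ref{fig:LO-charge-permuting-lattice}), while a string encircling a single pair of twists is nontrivial. Grouping the $t$ twists into blocks of six, each block therefore supports two pairs of anticommuting logical operators (the canonical $\overline{X}_i,\overline{Z}_i$ of Fig.~\ref{fig:lo_charge_canonical}), contributing $2\lfloor t/6\rfloor \cdot = \lfloor t/3 \rfloor$ protected qubits once the relations among closed strings around triples of pairs are used to trade off the redundant degree of freedom in each block of six twists. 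I would verify this count explicitly by noting that within each block of six twists, the naive $\binom{6}{2}/2 = $ count of pair-enclosing strings collapses to exactly two independent logical qubits after imposing the closed-string stabilizer relation.

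Next I would identify the remaining $(t-1)-\lfloor t/3\rfloor$ encoded qubits as gauge. Concretely, every pair of twists that is \emph{not} used as one of the canonical $\overline{X}_i,\overline{Z}_i$ pairs defines a closed-string operator that still commutes with the stabilizer group, but which we choose not to preserve; these operators, together with one conjugate partner per pair, generate the gauge group $\mathcal{G}$. I would then check the two defining subsystem-code properties: (i) the canonical logical operators commute with every gauge generator (this follows because any two closed strings on the shrunk lattice have even intersection when they do not share an enclosed twist, and the gauge strings can be chosen to encircle twist pairs disjoint from the logical ones), and (ii) the gauge generators form pairs of anticommuting operators modulo the center, giving independent gauge qubits.

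The main obstacle, and the step I would spend most care on, is the counting of independent gauge generators. The same constraint on closed strings that reduces the naive logical count (three pairs of twists produce a trivial string) must be reused to show that exactly $(t-1)-\lfloor t/3\rfloor$ independent gauge qubits emerge, and that no accidental further relation turns a gauge qubit into a stabilizer. I would handle this by constructing the gauge group in layers, one triple of twist pairs at a time, using Lemma~\ref{lm:encoded-qubits-charge} at the end to certify that the total dimension of stabilizer-plus-gauge equals $n-\lfloor t/3\rfloor$. Once this dimension count matches, the subsystem parameters $[[n,\lfloor t/3\rfloor]]$ with $(t-1)-\lfloor t/3\rfloor$ gauge qubits follow immediately.
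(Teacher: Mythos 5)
Your proposal is correct and follows essentially the same route as the paper, which in fact states this theorem without an explicit proof, treating it as immediate bookkeeping on top of Lemma~\ref{lm:encoded-qubits-charge}: of the $t-1$ encoded qubits, $\lfloor t/3\rfloor$ are retained as logical (one per triple of twists, via the canonical operators of Fig.~\ref{fig:lo_charge_canonical}) and the remaining $(t-1)-\lfloor t/3\rfloor$ are demoted to gauge. Your added checks (even intersection of disjoint closed strings, pairing of gauge generators) are sound and only make explicit what the paper leaves implicit; the one small slip is the identity $2\lfloor t/6\rfloor=\lfloor t/3\rfloor$, which fails for some even $t$ (e.g.\ $t=10$) --- the count should be organized per triple of twists rather than per block of six.
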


\begin{remark}
By adding the gauge operators to the canonical logical operators, equivalent logical operators can be obtained.
For instance, logical operators $\bar{Z}_1$ and $\bar{X}_1$ are equivalent to the operators corresponding to red strings in Fig.~\ref{fig:Qb-cc-logY-1} and Fig.~\ref{fig:Qb-cc-logY-2} respectively.
\end{remark}

\noindent \emph{Code distance}.
Distance of the code is the minimum weight of the operator in $C(S) \setminus  S$ where $S$ is the stabilizer and $C(S)$ is the centralizer of stabilizer. 
The centralizer of a subgroup $S$ of a group $G$ is the set of all elements of $G$ that commute with every element of the group. It is denoted as denoted as $C(S)$ and formally defined as 
\begin{equation*}
    C(S) = \{g \in G \text{ } \vert \text{ } sg = gs, \text{ for all }  s \in S \}.
\end{equation*}
The centralizer and normalizer are same for subgroups of the Pauli group.
A logical operator corresponds to a nontrivial closed string that encircles an even number of twists.
As in the case of surface codes with twists, the shortest of the nontrivial strings are those that encircle a pair of twists.
These strings have weight $O(\ell)$, where $\ell$ is the separation between twists.
Therefore, we conjecture that the distance of the code is $O(\ell)$.

\begin{figure}[htb]
    \centering
    \begin{subfigure}{.45\textwidth}
        \centering
        \includegraphics[scale = .8]{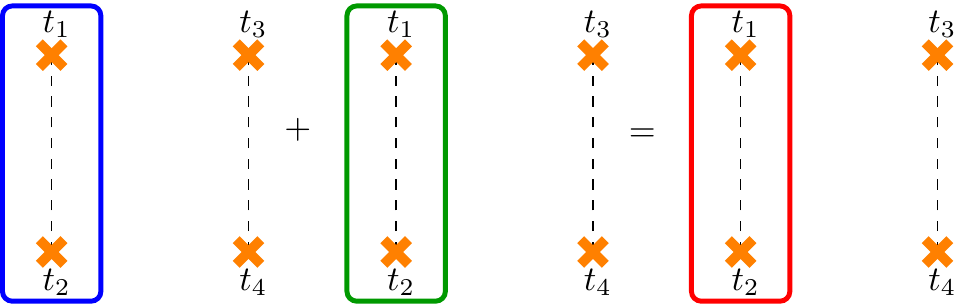}
        \subcaption{}
        \label{fig:Qb-cc-logY-1}
    \end{subfigure}
    ~
    \begin{subfigure}{.45\textwidth}
        \centering
        \includegraphics[scale = .8]{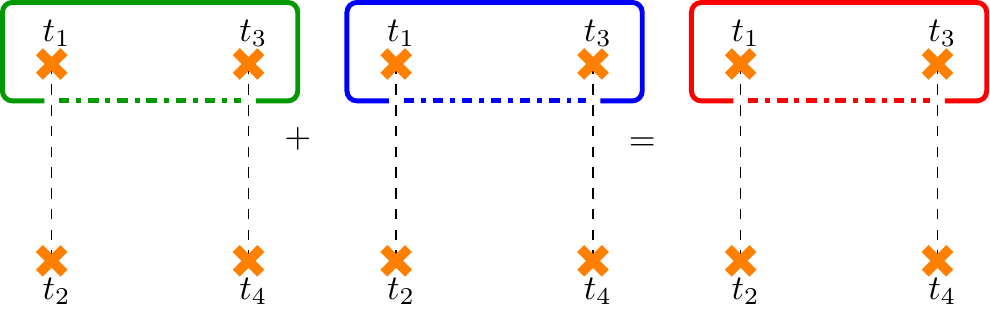}
        \subcaption{}
        \label{fig:Qb-cc-logY-2}
    \end{subfigure}
    \caption{The logical operators for $t/3$-encoding  can also be represented as closed strings of different color by adding gauge operators. Equivalent representation of $\bar{Z}_1$ and $\bar{X}_1$ are shown. (a) Adding a gauge operator (green string) to $\bar{Z}_1$ operator. The result is a red string encircling twists $t_1$ and $t_2$. (b) Logical $\bar{X}_1$ operator, shown as green string, can also be represented an equivalent red string by adding a gauge operator (blue string).}
    \label{fig:equivalent_LO}
\end{figure}

We use the charge permuting twists constructed in this section for realizing encoded Clifford gates by braiding. 
These results are presented in Section~\ref{sec:gates-charge-permuting}.

\section{Color permuting twists }
\label{sec:color}
In this section, we present the construction of color permuting twists.
Our construction differs from the one given in Ref.~\cite{Kesselring2018} and is more along the lines of Ref.~\cite{Bombin2011}.
We first address the modification of an arbitrary $2$-colex to create color permuting twists.
While creating twists, we should also ensure that the code resulting from lattice modification has the specified distance.
This is accomplished by tailoring the separation between twists.
We then discuss stabilizer assignment, code parameters and logical operators.

Color permuting twists are faces in the lattice which permute the color label of anyons that encircle them.
A $c$-type color permuting twist acts as follows:
\begin{subequations}
\begin{eqnarray}
  cp &\mapsto& cp\\
  c^\prime p&\mapsto& c^{\prime \prime} p\\
  c^{\prime \prime} p &\mapsto& c^\prime p
\end{eqnarray}
\label{eqn:c-twist}
\end{subequations}
where the colors  $c$, $c^\prime$ and $ c^{\prime \prime}$ are all distinct.
Observe that color permuting twists fix one color and exchange the other two.
The Pauli label of an anyon is not altered by color permuting twists.

\subsection{Creation and movement}

\noindent \emph{Intuition.} In a $2$-colex, edges of a specific color connect the faces of the same color, see Fig.~\ref{fig:int-color-perm-0}.
As a result, the syndromes can only be moved between faces of same color.
In order to create color permuting twist, we need to introduce an edge between two faces of different color.
Suppose that we introduce an edge $(u,v)$ between blue and green face, see Fig.~\ref{fig:int-color-perm-1}.
This edge cuts through the common red face to the blue and green faces.
This move also makes the vertices $u$ and $v$ tetravalent and splits the red face into two: the upper pentagon face and the lower triangular face.
It is not possible to assign stabilizers to the new red faces such that they commute with the rest~\cite{Anderson2013}.
Therefore, the vertices $u$ and $v$ have to be made trivalent.
This is done by removing the common neighbor $w_1$ to vertices $u$ and $v$, see Fig~\ref{fig:int-color-perm-2}.
The vertex $w_2$ is rendered two-valent by this move. 
The neighbors of the vertex $w_2$ are connected by an edge making them tetravalent.
The tetravalency of the neighbors of $w_2$ is resolved by removing the vertex $w_2$, see Fig~\ref{fig:int-color-perm-3}.
Note that the new edge also connects blue and green faces.
The face in between twists has mixed color.
One can assign either blue or green color to this face.
Similar procedure is followed for moving twists apart.
The twist created in this illustration is red twist since it permutes green and blue labels leaving the red label unchanged.
Other twists like green twist and blue twist are defined similarly.

\begin{figure}[htb]
\centering
    \begin{subfigure}{.225\textwidth}
        \centering
        \includegraphics[scale = 1]{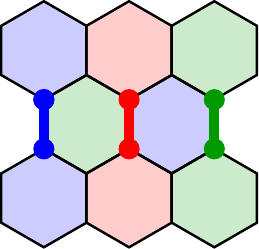}
        \subcaption{}
        \label{fig:int-color-perm-0}
    \end{subfigure}
    ~
    \begin{subfigure}{.225\textwidth}
        \centering
        \includegraphics[scale = 1]{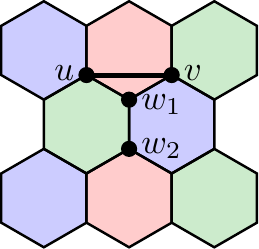}
        \subcaption{}
        \label{fig:int-color-perm-1}
    \end{subfigure}
    ~
    \begin{subfigure}{.225\textwidth}
        \centering
        \includegraphics[scale = 1]{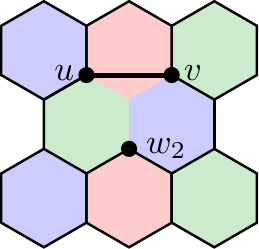}
        \subcaption{}
        \label{fig:int-color-perm-2}
    \end{subfigure}
    ~
    \begin{subfigure}{.225\textwidth}
        \centering
        \includegraphics[scale = 1]{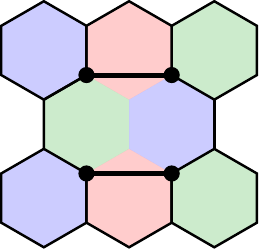}
        \subcaption{}
        \label{fig:int-color-perm-3}
    \end{subfigure}
    \caption{Process of twist creation in a $2$-colex. (a) In a $2$-colex, edges connect faces of the same color. Red, green and blue edges connect the faces of respective color as shown. (b) To create twist, introduce an edge between two faces of different color. This renders two vertices tetravalent. (c) The vertex $w_1$ is removed to resolve the tetravalency of vertices $u$ and $v$. This move also makes the vertex $w_2$ two-valent. (d) The neighbors of two-valent vertex $w_2$ are connected by an edge making them tetravalent. The tetravalency is resolved by removing the vertex $w_2$.}
\end{figure}

\begin{remark}
Note that the twist faces have an odd number of edges.
This breaks the three-face-colorability in the vicinity of such faces of the lattice and creates color permuting twists.
\end{remark}

We would like to introduce $k \ge 1$ pairs of twists of the same color $c$ in a $2$-colex. 
To do so, we first choose an even number of vertices of the same color $c$ in the dual lattice such that any pair of them is at least a distance $\ell$ apart.
A path is introduced between disjoint pairs of vertices so that no two paths overlap.
The challenge faced in constructing color permuting twist is mainly the selection of faces in such a way that the intermediate faces between twists do not grow unbounded.
The aforesaid path is not arbitrary but has to satisfy the properties given below.
\begin{compactenum}[C1)]
    \item The faces corresponding to the nearest $c$-colored vertices on the path are connected by an edge in the original lattice.
    \item Let $u$, $v$ and $w$ are three consecutive $c$-colored vertices on the path and such that distance between $u$ and $v$ is two and distance between $v$ and $w$ is two in the dual lattice.
    The vertex $w$ cannot be at a distance of two from $u$.
\end{compactenum}
The first condition ensures that the faces with color $c^\prime$ and $c^{\prime \prime}$ are merged during twist creation.
The second condition ensures that the intermediate faces between twists do not grow in size.
With this, one can create an even number of twists in an arbitrary $2$-colex.
Note that we do not use edge coloring in our work.
The procedure to create color permuting twists in an arbitrary $2$-colex is given in Algorithm \ref{alg:color-permuting-twists}.

\begin{algorithm}
\caption{Algorithm to create $k$ pairs of $c$-color permuting twists in a $2$-colex.}
    \label{alg:color-permuting-twists}
    \begin{flushleft}
    \algorithmicrequire{ $2$-colex, separation between twists $\ell \ge 2$.}\\
    \algorithmicensure{ Trivalent lattice with $k$ pairs of color permuting twists.}
    \end{flushleft}
    
\begin{algorithmic}[1]
        \State Choose $2k$ vertices of color $c$ in the dual lattice such that any pair of them is at least a distance $\ell$ apart.
        Introduce a path between disjoint pairs of vertices so that no two paths overlap.
        The $c$-colored vertices on a given path $\pi_i$ satisfy conditions C1) and C2).
        Let $\{ e_j^{(i)} \}_{j = 1}^{\ell}$ be the collection of such edges corresponding to the $i^{th}$ path $\pi_i$ in the original lattice, see Fig.~\ref{fig:tc0}.
    	\For{ $i = 1$ to $k$}
    	\State Remove the vertices incident on the edge $e_1^{(i)}$ (see Fig.~\ref{fig:tc1}) and connect the closest two-valent vertices, see Figs.~\ref{fig:tc2},~\ref{fig:tc3}. \Comment {Twist Creation}
    	\State Faces with color $c$ continue to have the same color. Merged face is assigned one of the colors $c^\prime$ or $c^{\prime \prime}$. \Comment Faces that shared the deleted edge as common are merged.
    	\While{$e_{\ell}^{(i)}$ not reached} \Comment Twist Movement
    	\State Repeat steps $3$ and $4$ choosing the next edge along  the path $\pi_i$, see Figs.~\ref{fig:tm1},~\ref{fig:tm2}.
    	\EndWhile
    	\EndFor
    \end{algorithmic}
\end{algorithm}

\begin{remark}
If a third $c$-colored vertex from a given vertex $u$ is at is at a distance two, then multiple edges are introduced during the execution of Algorithm~\ref{alg:color-permuting-twists}.
If multiple edges are introduced, then remove the vertices incident on multiple edges and connect the resulting two-valent vertices.
\end{remark}

\begin{figure}[htb]
    \centering
    \begin{subfigure}{.225\textwidth}
       \centering
       \includegraphics[height = 3cm, width = 4cm]{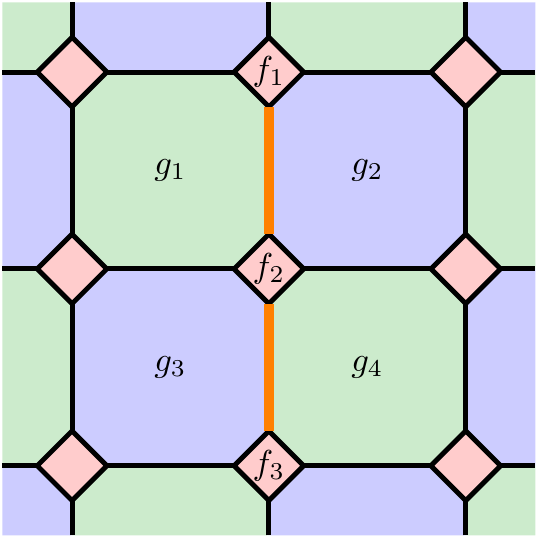}
       \subcaption{}
       \label{fig:tc0}
    \end{subfigure}
    ~
    \begin{subfigure}{.225\textwidth}
       \centering
       \includegraphics[height = 3cm, width= 4cm]{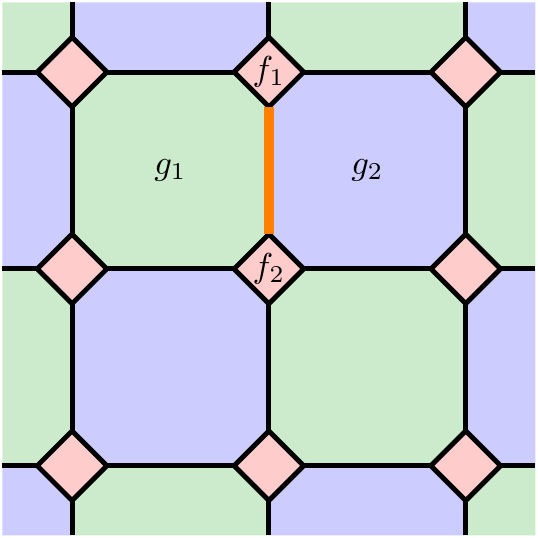}
       \subcaption{}
       \label{fig:tc1}
    \end{subfigure}
    ~
    \begin{subfigure}{.225\textwidth}
       \centering
       \includegraphics[height = 3cm, width = 4cm]{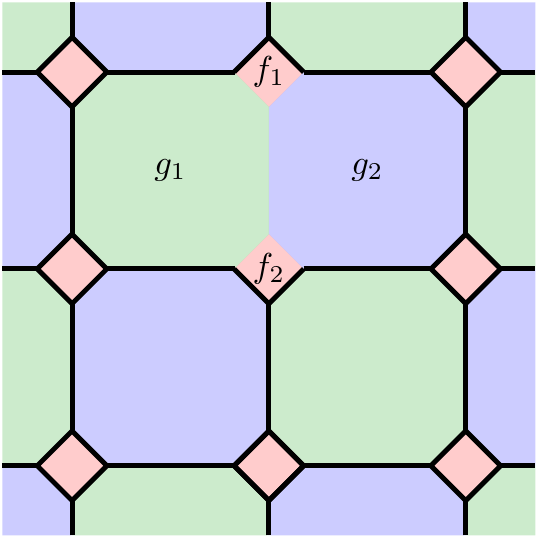}
       \subcaption{}
       \label{fig:tc2}
    \end{subfigure}
    ~
    \begin{subfigure}{.225\textwidth}
       \centering
       \includegraphics[height = 3cm, width = 4cm]{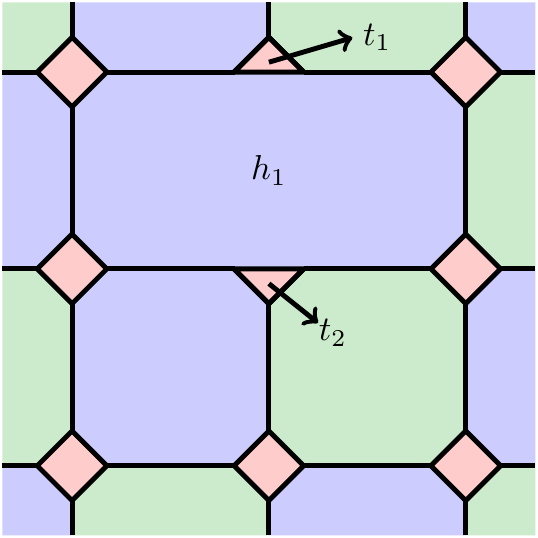}
       \subcaption{}
       \label{fig:tc3}
    \end{subfigure}
    ~
    \begin{subfigure}{.225\textwidth}
       \centering
       \includegraphics[height = 3cm, width = 4cm]{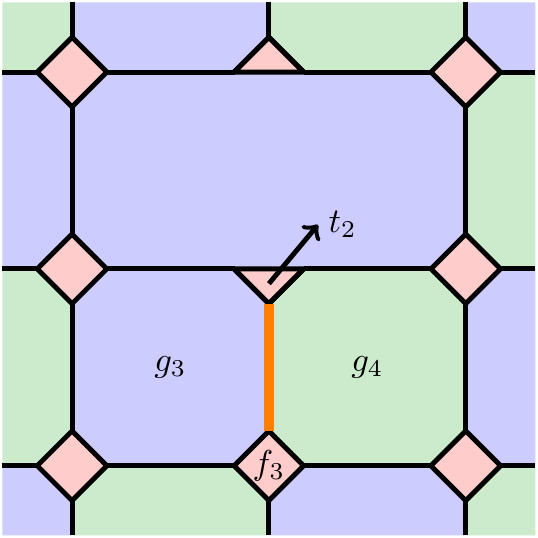}
       \subcaption{}
       \label{fig:tm1}
    \end{subfigure}
    ~
    \begin{subfigure}{.225\textwidth}
       \centering
       \includegraphics[height = 3cm, width = 4cm]{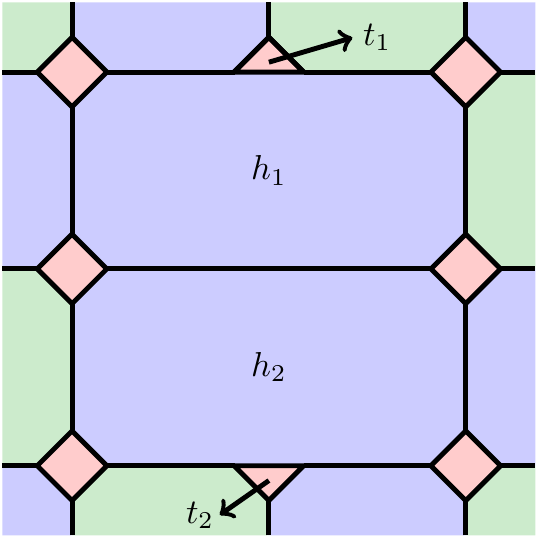}
       \subcaption{}
       \label{fig:tm2}
    \end{subfigure}
    \caption{Illustration of twist creation and movement in the square octagon lattice. (a) The path chosen for twist creation is shown in orange solid line and is of length two. (b) To create twists, remove the vertices that lie along the first edge of the path. (c) Two-valent vertices created in the lattice as a result of removing vertices. (d) Connect the nearest two valent vertices. Color of the new faces is as shown.  (e) To move twist, choose one outgoing edge of the same color as the twist. (f) Repeating the process of twist creation, the twist is moved to the next red face on the path.}
    \label{fig:tm}
\end{figure}

\begin{remark}
The process of twist creation can also be understood as deleting the orange edge in Fig.~\ref{fig:tc2} and then contracting one edge in each divalent vertex.
\end{remark}

\subsection{Stabilizers}
We need not modify a majority of stabilizers for color codes with color permuting twists.
Every face $f$ which is not a twist has an even number of edges.
Hence, two stabilizers are defined on such faces:
\begin{equation}
B_f^X = \prod_{v \in V(f)} X_v, \text{ and } B_f^Z = \prod_{v \in V(f)} Z_v 
\label{eqn:color-perm-nontwist-stab}
\end{equation}
However, on a twist face $\tau$, we cannot define two stabilizers as they will anti-commute. 

A $Y$-type stabilizer is defined on the twist face~\cite{Kesselring2018}:
\begin{equation}
		B_{\tau} = \prod_{v \in V(\tau)} Y_v.
		\label{eqn:color-perm-twist-stab}
\end{equation}

\noindent \emph{Consistency of stabilizer assignment.} If two faces are adjacent, then they share two common vertices.
To check commutation between adjacent face stabilizers, we have to check commutation on these vertices.
The stabilizers are all either $X$ type or $Z$ type.
If the Pauli operator on the common vertices is same, commutation follows.
On the other hand, if the Pauli operators on the common vertices are different (for instance, $X$ or $Z$ on one of the faces and $Y$ on the other face), then there are two anticommutations and the stabilizers commute.
Therefore, all stabilizers commute.

\noindent \emph{Stabilizer dependencies.}
Giving stabilizer dependencies for the general case is difficult.
We restrict to lattices where all twist faces are red.
Then the stabilizer generators have to satisfy the following constraints:
\begin{subequations}
\begin{eqnarray}
  \prod_{f \in \mathsf{F}_b} B_{f}^Z \prod_{f \in \mathsf{F}_g} B_{f}^Z    &=& I, \\
  \prod_{f \in \mathsf{F}_b} B_{f}^X \prod_{f \in \mathsf{F}_g} B_{f}^X  &=& I,
\end{eqnarray}
 \label{eqn:color-perm-stab-constraints}
\end{subequations}
where the product over blue faces includes blue modified faces and the outer unbounded blue face.
These constraints indicate the presence of two dependent stabilizers.
We take the dependent stabilizers to be the ones defined on the unbounded blue face.

\begin{proposition}
Algorithm~\ref{alg:color-permuting-twists} creates color permuting twists.
\end{proposition}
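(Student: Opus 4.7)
The plan is to establish two assertions about each execution of Algorithm~\ref{alg:color-permuting-twists}: (i) the output is a valid trivalent lattice whose twist faces have odd boundary length, and (ii) the action of such a twist face on anyons is precisely the color permutation specified in Equations~\eqref{eqn:c-twist}, with charge labels preserved.

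For (i), I would analyze a single elementary step. Starting from a trivalent $2$-colex, the algorithm removes the two vertices incident to the chosen edge $e_j^{(i)}$; this leaves four nearby vertices with reduced degree, and the subsequent contraction (equivalently, connecting the two closest pairs of resulting two-valent vertices) restores trivalency. Conditions C1 and C2 guarantee that these vertices are well-defined in a bounded neighborhood and that the merging does not collide with edges introduced by previous iterations or cause intermediate faces to grow unboundedly. The merged faces in the interior of the path $\pi_i$ retain even boundary length, so the CSS stabilizers in Eq.~\eqref{eqn:color-perm-nontwist-stab} remain well-defined on them, whereas the face at the terminal of $\pi_i$ has odd boundary length, which is the local obstruction to three-face-colorability and is precisely what distinguishes a twist face. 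Induction on the number of edges along $\pi_i$ extends these local properties to the whole path.

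For (ii), I would use the hopping-operator/string picture developed in Section~\ref{sec:background}. A hopping operator for a $c'p$-anyon acts along an edge of color $c'$, i.e.\ an edge between two $c'$-faces. In the original $2$-colex, edges of color $c'$ connect only $c'$-faces, so $c'p$-anyons cannot be moved to $c''$-faces. After Algorithm~\ref{alg:color-permuting-twists} has been run, however, the new edges introduced along $\pi_i$ are precisely those that bridge a former $c'$-face and a former $c''$-face (the merged intermediate face is assigned one of these two colors). A $c'p$ hopping string entering such a bridge from one side continues on the far side as a $c''p$ hopping string, because the adjacent face it visits next is now of color $c''$. Hence an anyon of type $c'p$ transported across the domain wall emerges as $c''p$; by symmetry $c''p\mapsto c'p$, and $cp$-anyons are unaffected since $c$-colored faces are not involved in the surgery. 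The charge label $p$ is preserved throughout, since hopping operators do not change Pauli type when they move between same-colored faces, exactly matching Eq.~\eqref{eqn:c-twist}.

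The main obstacle will be the case analysis underlying step (i), since the local geometry around each elementary move depends on the assignment of color to the merged face and on how paths from different twist pairs interact. Conditions C1 and C2 are designed precisely to keep this analysis tractable: C1 ensures that the edges required for merging exist in the original lattice, while C2 forbids the pathological configuration in which three consecutive $c$-colored vertices on the path at mutual distance two cause intermediate faces to blow up. I would therefore verify the elementary twist-creation move in Figs.~\ref{fig:tc1}--\ref{fig:tc3} in detail, and then induct on the length of $\pi_i$, using the movement step in Figs.~\ref{fig:tm1}--\ref{fig:tm2} together with conditions C1 and C2 to guarantee that each subsequent move is a well-defined local copy of the base case, extending the domain wall by one step while preserving both trivalency and the anyon-permutation action.
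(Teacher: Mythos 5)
Your proposal is correct and follows essentially the same route as the paper: both arguments reduce the claim to showing that the surgery introduces an edge between faces of different color while preserving trivalency, and then observe that an anyon hopping across such an edge has its color label permuted while its charge label is preserved. The paper's proof is more explicit on two points you gloss over — it invokes the $2$-colex property that the faces adjacent to a given face are two-colorable to justify that the new edges $e_1=(x_1,x_2)$ and $e_2=(y_1,y_2)$ really do connect faces of different colors, and it checks that the hopping operators commute with the $Y$-type twist stabilizer and the modified-face stabilizers — but these are details your plan would naturally absorb.
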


\begin{proof}
To prove that Algorithm~\ref{alg:color-permuting-twists} creates color permuting twists in conjunction with the
stabilizer assignment in Eqs.~\eqref{eqn:color-perm-nontwist-stab}~and~\eqref{eqn:color-perm-twist-stab}, it suffices to show that the algorithm introduces an edge between faces of different color while preserving trivalency of vertices.
When an anyon is moved along this edge, its color is permuted.
Let $u $ and $v$ be the vertices marked for deletion in the algorithm.
Let $N(u) = \{x_1, x_2, v\}$ and $N(v) = \{ y_1, y_2, u\}$.
Note that $e_1 = (x_1, x_2)$ and $e_2 = (y_1, y_2)$ do not exist before twist creation.
Due to the property of $2$-colex that the set of faces sharing an edge with any given face can be colored with two colors, the edges $e_1$ and $e_2$ connect faces of different color.
Algorithm~\ref{alg:color-permuting-twists} removes vertices $u$ and $v$ and introduces edges $e_1$ and $e_2$ thereby creating an edge between faces of different color.
Stabilizers on the unmodified faces are not changed.
Also, the stabilizer on the twist face is of $Y$ type.
Therefore, any operator $P_{x_1} P_{x_2}$, where $P \in \{X,Y,Z \} $ will commute with the twist stabilizer while moving the syndrome from face of one color to another.
Apart from these, the modified face in between twist has edges of different colors incident on it.
As a result, whenever an anyon enters and exits this faces through edges of different color, it color label is permuted.
Stabilizers defined on modified faces are of $Z$ and $X$ type.
Therefore, the error operators that move the syndrome across modified faces commute with the stabilizers defined on them.
\end{proof}

\begin{lemma}[Parity of color permuting twists]
Color permuting twists are created in pairs.
\label{lm:parity-color}
\end{lemma}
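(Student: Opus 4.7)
The plan is to use a parity argument on face degrees. In any graph cellularly embedded in a closed surface (or a planar graph with the outer face included), the handshake identity for faces gives $\sum_{f} \deg(f) = 2|E|$, so the number of odd-degree faces is even. I will show that in a color code lattice produced by Algorithm~\ref{alg:color-permuting-twists}: (i) every non-twist face has even degree, and (ii) every twist face has odd degree. The lemma then follows immediately, since the set of odd-degree faces coincides with the set of twist faces and must have even cardinality.

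First I would verify that in any unmodified $2$-colex every face has even degree. This is a local consequence of trivalency together with three-face-colorability: around any face of color $c$, the three faces meeting at each boundary vertex have all three colors, so as one walks around $\partial f$ the adjacent faces form an alternating cyclic sequence over the two remaining colors $c'$ and $c''$. Such an alternating sequence closes only if its length is even, forcing $\deg(f)$ to be even. This is also the geometric content of the remark in the text that odd degree ``breaks three-face-colorability.''

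Next I would analyze the local modification carried out by Algorithm~\ref{alg:color-permuting-twists} and track how face degrees change. In the basic move (Figs.~\ref{fig:int-color-perm-0}--\ref{fig:int-color-perm-3}), two trivalent vertices are deleted, an edge is added between former neighbors, and two faces of different color are merged through the new edge. A careful case check of the faces touched by this move shows: the merged intermediate face and all other faces touching the modification retain even degree, while the single face that becomes the new twist picks up exactly one extra edge in its boundary walk, producing odd degree. Iterating along a path $\pi_i$ only changes the degree parity of the two endpoint (twist) faces, while every intermediate face stays even.

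Finally, summing $\deg(f)$ over all faces of the modified lattice gives an even number by the handshake identity. By (i) the sum over non-twist faces is even, so the sum over twist faces is also even, forcing the number of twist faces to be even. The only real obstacle is the local bookkeeping in step (ii): one must check that none of the ``auxiliary'' moves (removal of the divalent vertex $w_2$, merging two faces through the inserted edge, contracting the pair of edges at the divalent vertices) introduces an odd-degree non-twist face. Once this is done, the global parity step is immediate, and the lemma follows.
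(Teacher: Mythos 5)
Your proposal is correct and is essentially the paper's own argument: the paper passes to the dual lattice and invokes the fact that the number of odd-degree vertices in a graph is even, which is exactly your handshake identity $\sum_f \deg(f) = 2|E|$ read on the primal faces, combined with the same two observations that twist faces are odd cycles while all other faces have even degree. Your version merely spells out the local degree bookkeeping that the paper leaves to its earlier remarks on the stabilizer assignment and the twist-creation figures.
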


\begin{proof}
Let $\Gamma$ be the lattice with color permuting twists and let $\Gamma^\ast$ be its dual.
Note that color permuting twists are odd cycles and hence vertices corresponding to them have odd degree in the dual lattice.
Since the number of vertices with odd degree should be even in a graph \cite{Diestel} (and hence its embedding on a surface), color permuting twists are created in pairs.
\end{proof}

\noindent{\em $T$-lines and domain walls.}
When color permuting twists are created in the lattice, local three colorability around twist faces is destroyed.
As a consequence, we can find a sequence of edges, having the same color as the twist face, along which faces of the same color meet. On the (red) shrunk lattice these edges form a 
path terminating on the vertices corresponding to the twist faces.
We refer to this sequence of edges as a $T$-line following \cite{Bombin2011}.
For instance, if the twist face is red, then this path is formed by red edges and along this path blue and/or green adjacent faces meet.
An example of trivalent lattice with twists along with $T$-line is shown in Fig.~\ref{fig:hex-lattice}.
Existence of $T$-lines in the presence of color permuting twists is proved in Appendix~\ref{sec:t-line}.
Note that for this specific coloring of the lattice domain walls run parallel to the $T$-line.
A different coloring might be possible where the $T$-line crosses the domain wall as for instance happens
in case of surface codes with twists \cite{GowdaSarvepalli2020}. 
 
\begin{figure}[htb]
\centering
\includegraphics[height = 5.5cm,width = 6.5cm ]{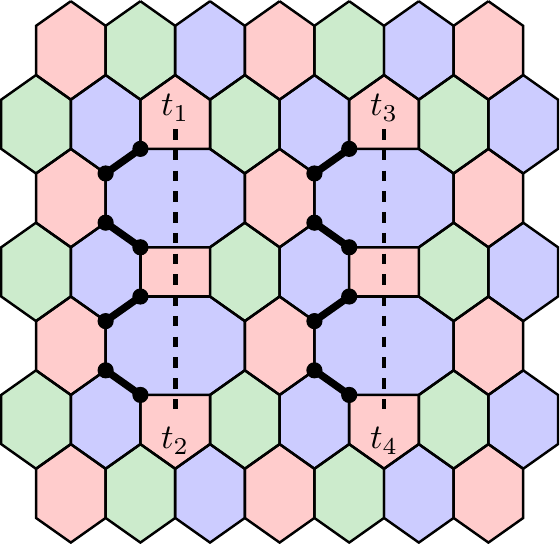}
\caption{Two pairs of color permuting twists in hexagonal lattice. Due to the presence of twists, there is a path, starting from one of the twists and ending on another, along which faces of same color (in this case, blue faces) meet. This path is shown as a solid line. Also, the domain wall is shown as dashed line.}
\label{fig:hex-lattice}
\end{figure}

\noindent \emph{Coding theoretic view of color permuting twist creation.} The process of color permuting twist creation can be understood in terms of classical code puncturing \cite{Huffman2003}.
In classical codes, code puncturing is done by deleting one or more (classical) bits and readjusting the parity check and generator matrices if necessary.
While creating twists in color codes, two qubits are removed from the system, see Fig.~\ref{fig:tc2}.
Suppose that stabilizers are written in symplectic form.
Let $f_1$ and $f_2$ be the top and bottom red faces in the figure and let $f_3$ and $f_4$ be the green and blue faces respectively.
Due to qubit removal, one qubit is removed from the stabilizers on $f_1$.
This leads to anti commutation among the stabilizers on $f_1$.
The anti commutation is rectified by dropping a stabilizer (either $Z$ or $X$ type). 
The same holds true for stabilizers on $f_2$.
Now, the stabilizers on $f_3$ and $f_4$ anti commute with stabilizer on $f_1$.
The resolution of this anticommutation comes from merging the corresponding $Z$ and $X$ stabilizers on $f_3$ and $f_4$.
In this process, we have deleted two columns, one row each of stabilizers of $f_1$ and $f_2$ and merged the rows of $f_3$ and $f_4$.
This process is exactly similar to code puncturing in classical codes.
Therefore, color permuting twist creation is the analogue of code puncturing in quantum codes.

For a general 2-colex into which color permuting twists are introduced the following result holds. 
\begin{lemma}[Encoded qubits for color permuting twists]
A 2-colex with $t$ color permuting twists encodes $(t - 2)$ qubits where $t$ is even.
\label{lm:encoded-qubits-cp}
\end{lemma}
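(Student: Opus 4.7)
The plan is to adapt the stabilizer-counting proof of Lemma~\ref{lm:encoded-qubits-charge} to the lattice produced by Algorithm~\ref{alg:color-permuting-twists}. The three ingredients are: (i) Euler's formula on the modified lattice, (ii) a stabilizer generator count that accounts for the fact that twist faces carry only one generator, and (iii) subtracting the redundancies among the generators.

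First, I would verify that the modified lattice is still trivalent. This is built into Algorithm~\ref{alg:color-permuting-twists}: whenever an edge is removed, the two resulting divalent vertices are merged so that every surviving vertex keeps degree three. Letting $v$, $e$, $f$ denote the number of vertices, edges and faces of the modified lattice, trivalency gives $2e = 3v$, and the sphere-embedding Euler relation $v - e + f = 2$ then yields
\begin{equation*}
  2f = v + 4,
\end{equation*}
exactly as in the charge-permuting case.

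Next, I would count generators. By Eq.~\eqref{eqn:color-perm-nontwist-stab}, each of the $f - t$ non-twist faces contributes two generators ($B_f^X$ and $B_f^Z$), while by Eq.~\eqref{eqn:color-perm-twist-stab} each of the $t$ twist faces contributes a single generator $B_\tau$. The total is therefore
\begin{equation*}
  2(f-t) + t \;=\; 2f - t \;=\; v + 4 - t.
\end{equation*}
From the dependency relations in Eq.~\eqref{eqn:color-perm-stab-constraints} two of these generators are redundant (one $X$-type relation and one $Z$-type relation among the non-twist faces), so the number of independent stabilizers is $s = v + 2 - t$. Since the number of physical qubits equals $v$, the number of encoded qubits is $v - s = t - 2$. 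Parity of $t$ is guaranteed by Lemma~\ref{lm:parity-color}.

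The main obstacle, and the only place where care is needed, is justifying that the number of independent stabilizer relations is exactly two in the general setting. Equations~\eqref{eqn:color-perm-stab-constraints} are stated for the restricted case in which all twists are red, and there the argument is transparent: the twist faces destroy the red $X$ and $Z$ total-product relations of the twist-free color code, leaving only the two product relations among the remaining (blue and green) faces. For a lattice carrying twists of several colors, I would argue that each color class fixed by some twist loses its total-product $X$ and $Z$ relation, while a single $X$-relation and a single $Z$-relation among the unaffected face-color products still survive (here one uses the fact that the merged faces in between twists, although of mixed color label, can be consistently absorbed into one of the two color products on each side of the $T$-line). This yields exactly two dependencies in the general case and completes the count.
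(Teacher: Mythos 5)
Your proof follows essentially the same route as the paper's: trivalency plus Euler's formula give $2f = v + 4$, the raw generator count is $2f - t$ (one generator per twist face, two per non-twist face), and removing the two dependencies of Equations~\eqref{eqn:color-perm-stab-constraints} leaves $s = v - t + 2$ independent generators, hence $t-2$ encoded qubits. Your closing paragraph on twists of several colors goes beyond the paper, which explicitly restricts the dependency analysis to twist faces of a single color (and later assumes all twists share a color), so that sketch is not needed for the lemma as the paper uses it.
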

\begin{proof}
Note that in lattice containing color permuting twists, all vertices are trivalent.
Hence, we have $2e = 3v$, where $e$ and $v = n$ are the number of edges and vertices (qubits) of the lattice respectively.
Also, $v + f - e = 2$, where $f$ is the number of faces in the lattice.
Therefore, we get, $f = \frac{v}{2} + 2 = \frac{n}{2} + 2$.
We count two stabilizers for every face, including twists.
Subtracting the dependent stabilizers, which are $t$ in number 
 gives the number of remaining stabilizers to be $2f - t = n - t + 4$.
Ignoring the outer unbounded face stabilizers (see Equations~\eqref{eqn:color-perm-stab-constraints}), 
we obtain the independent stabilizer generators to be $s = n - 2\left(\frac{t}{2} - 1\right)$.
Hence, the number of encoded qubits is $2(\frac{t}{2} - 1)$.
\end{proof}

\begin{remark}
The number of encoded qubits here is twice that of surface codes with twists~\cite{GowdaSarvepalli2020}.
The reason is, with the introduction of a twist pair in surface codes, one stabilizer was removed and hence one new encoded qubit.
However, in color codes with every twist pair, two stabilizers are removed and hence we get twice the number of encoded qubits.
\end{remark}

\subsection{Pauli operators as strings on a 2-colex with color permuting twists} 
In this subsection, we give a string representation of Pauli operators on a 2-colex in the presence of color permuting twists.
Consider Fig.~\ref{fig:Qb-cc-pauli-string-10} where Pauli operator $Z$ is applied on the vertices of the edge.
The red face on the top is twist, green and blue faces on the right and left of twist are unmodified faces and the bottom face with blue color is the modified face.
The Pauli operators create two $ry$ syndromes on the twist face, $bx$ and $gx$ on the normal blue and green faces respectively and $gx$, $bx$ on the modified face
(the part of the modified face to the right of the domain wall supports anyons with green color label).
Anyons on the twist face annihilate each other and hence the string is continuous in the twist face.
The Pauli operator commutes with the stabilizers on the modified face and therefore string is continuous in modified face too, see Fig.~\ref{fig:Qb-cc-pauli-string-11}.
String  representation of the Pauli operator in Fig.~\ref{fig:Qb-cc-pauli-string-10} is given in Fig.~\ref{fig:Qb-cc-pauli-string-12}.
Note that the string changes color as it crosses the domain wall.

\begin{figure}[htb]
    \centering
    \begin{subfigure}{.225\textwidth}
        \centering
        \includegraphics[scale = 1]{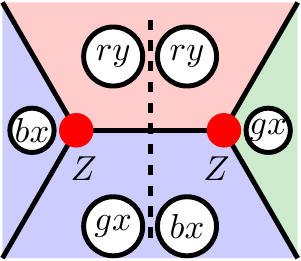}
        \subcaption{}
        \label{fig:Qb-cc-pauli-string-10}
    \end{subfigure}
    ~
    \begin{subfigure}{.225\textwidth}
        \centering
        \includegraphics[scale = 1]{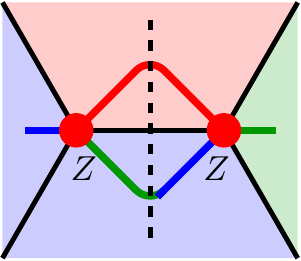}
        \subcaption{}
        \label{fig:Qb-cc-pauli-string-11}
    \end{subfigure}
    ~
    \begin{subfigure}{.225\textwidth}
        \centering
        \includegraphics[scale = 1]{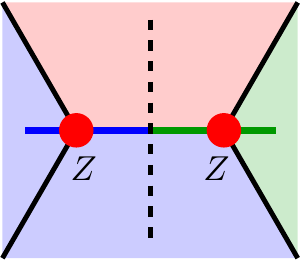}
        \subcaption{}
        \label{fig:Qb-cc-pauli-string-12}
    \end{subfigure}
    ~
    \begin{subfigure}{.225\textwidth}
        \centering
        \includegraphics[scale = 1]{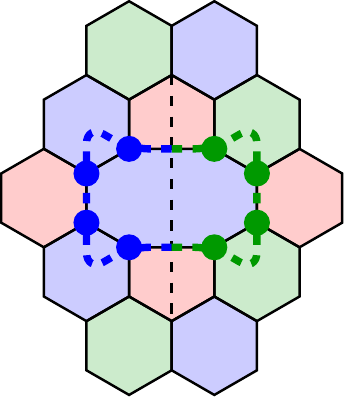}
        \subcaption{}
        \label{fig:Qb-color-perm-modified-string-X}
    \end{subfigure}
    ~
    \begin{subfigure}{.225\textwidth}
        \centering
        \includegraphics[scale = 1]{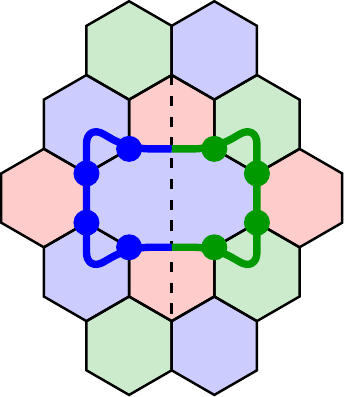}
        \subcaption{}
        \label{fig:Qb-color-perm-modified-string-Z}
    \end{subfigure}
    \caption{Representing Pauli operators as strings in the presence of color permuting twists. (a) Syndromes created as a result of applying $Z$. Upper red face is twist and the lower face with mixed color is modified face. (b) Syndromes on twist vanish. On the modified face if we move the syndrome $gx$ across the domain wall it becomes $bx$ and annihilates the other $bx$ syndrome giving a zero syndrome. (c) The string corresponding to the operator in Fig.~\ref{fig:Qb-cc-pauli-string-10} is of mixed color. The color change happens at the domain wall. (d) $X$ type stabilizer on modified face. (e) $Z$ type stabilizer on modified face.}
    \label{fig:Qb-color-perm-modified-string}
\end{figure}

Stabilizers on modified faces cannot be represented as strings with one color, see Fig~\ref{fig:Qb-color-perm-modified-string-X} and Fig.~\ref{fig:Qb-color-perm-modified-string-Z}.
For the same string, both $X$ and $Z$ stabilizers are defined.
Note the strings change color as they crosses the domain wall.

As in the case with charge permuting twists, logical operators are closed strings encircling an even number ($ \ge 2$) of twists.
(A string encircling an odd number of twists will be an open string.)
The above mentioned logical operators are not generated by stabilizers.
The reason being one can find another string encircling a pair of twists that anticommutes with the given string, see Fig.~\ref{fig:hexagon-lattice-LO-X} and Fig.~\ref{fig:hexagon-lattice-LO-Z}.
The logical operators for one of the encoded qubits is shown in Fig.~\ref{fig:hexagon-lattice-LO}.
Logical $X$ operator is the string which encircles twists not created together, see Fig.~\ref{fig:hexagon-lattice-LO-X}.
Logical $Z$ operator is the string that encircles twists created together, see Fig.~\ref{fig:hexagon-lattice-LO-Z}.
Note that with every pair of twists introduced, the number of encoded qubits increases by two.

\begin{figure}[htb]
    \centering
    \begin{subfigure}{.45\textwidth}
        \centering
        \includegraphics[height = 5.5cm,width = 6.5cm]{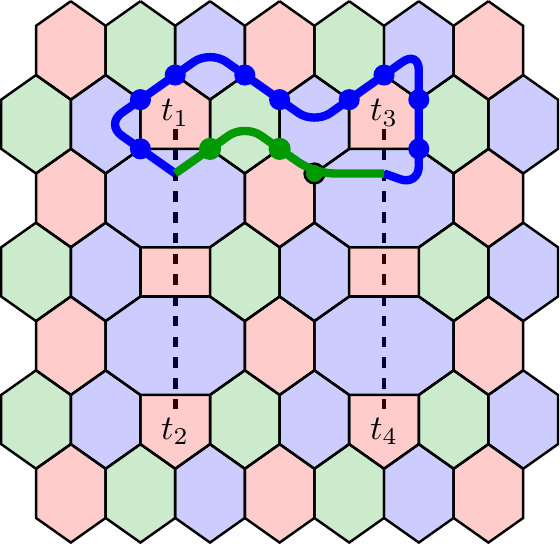}
        \subcaption{}
        \label{fig:hexagon-lattice-LO-X}
    \end{subfigure}
    ~
    \begin{subfigure}{.45\textwidth}
        \centering
        \includegraphics[height = 5.5cm,width = 6.5cm]{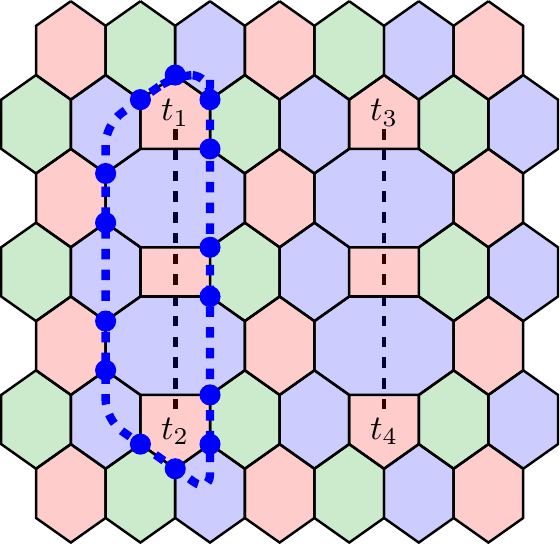}
        \subcaption{}
        \label{fig:hexagon-lattice-LO-Z}
    \end{subfigure}
    \caption{Logical operators for an encoded qubit depicted on lattice. (a) Logical $X$ operator depicted on lattice. Blue circles and green circles with black boundary represent Pauli operator $Z$. (b) Logical $Z$ operator depicted on lattice. Blue circles represent Pauli operator $X$.}
    \label{fig:hexagon-lattice-LO}
\end{figure}

Henceforth, we consider all twists to be of the same color.
The reason for this is that when twist pairs are of different color, logical $X$ operator has the form shown in Fig.~\ref{fig:diff-color-twists}.
To avoid such self-intersecting logical operators, we choose all twists to be of the same color.
Logical operators for encoded qubits in color codes with color permuting twists is given in Fig.~\ref{fig:lo_color}.
The string encircling twists $t_1$ and $t_2$ is the logical $Z$ operator and the one enclosing twists $t_1$ and $t_3$ is the logical $X$ operator.
When two strings of same color intersect they overlap in even number of locations, therefore
they commute. 
Only when strings of different color intersect they overlap in one location, therefore they anticommute. 
From this we conclude that the operators in Fig.~\ref{fig:lo_color} anticommute. 
Even when the support of both logical operators intersect where the colors are same, the size of the common support is always an even number.
We choose to use $\lfloor t/3 \rfloor$ encoding for the same reason as that for charge permuting twists.
The canonical form of logical operators for $\lfloor t/3 \rfloor$ encoding is shown in Fig.~\ref{fig:lo_color_canonical}.

\begin{figure}[htb]
\centering
\begin{subfigure}{.45\textwidth}
    \centering
    \includegraphics[height = 3cm, width = 3cm]{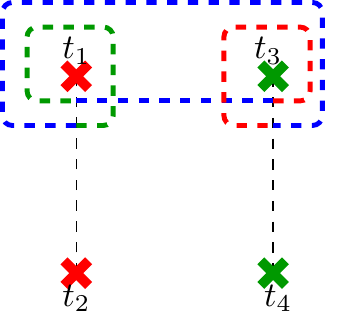}
    \subcaption{}
    \label{fig:diff-color-twists}
\end{subfigure}
~
\begin{subfigure}{.45\textwidth}
\centering
\includegraphics[height = 3cm, width = 6cm]{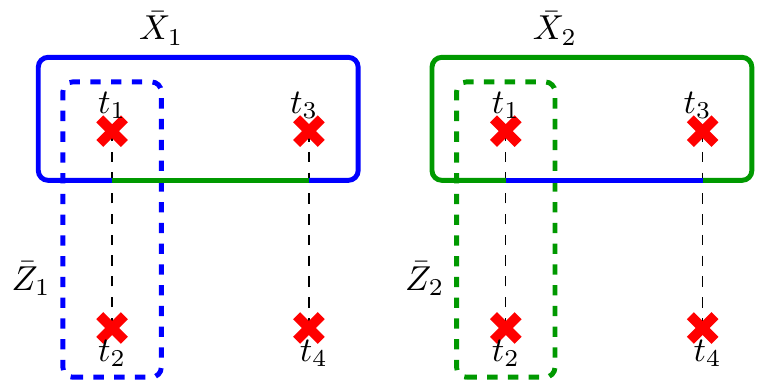}
\subcaption{}
\label{fig:lo_color}
\end{subfigure}
~
\begin{subfigure}{.45\textwidth}
\centering
\includegraphics[height = 3cm, width = 4.5cm]{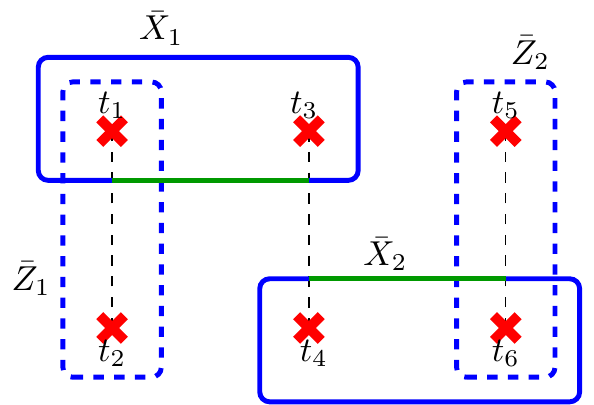}
\subcaption{}
\label{fig:lo_color_canonical}
\end{subfigure}
\caption{Logical operators for encoded qubits in lattices with color permuting twists. (a) Logical $X$ operators when twist pairs are of different color. (b) Logical operators for encoded qubits in lattice having color permuting twists. Operators $\Bar{Z}_1$ and $\Bar{Z}_2$ have $Z$ and $X$ on their support respectively whereas operators $\Bar{X}_1$ and $\Bar{X}_2 $ have $X$ and $Z$ on their support respectively. (c) Canonical logical operators for encoded qubits in lattice having color permuting twists for $\lfloor t/3 \rfloor $.}
\label{fig:color_LO}
\end{figure}

\begin{remark}
With $\lfloor t/3 \rfloor $ encoding, we get $\left(t - 2\right) - \left\lfloor \frac{t}{3} \right\rfloor$ gauge qubits.
\end{remark}

\section{Clifford gates using charge permuting twists}
\label{sec:gates-charge-permuting}
In this section, we show that encoded Clifford gates can be realized with charge permuting twists by braiding alone.
We use $\left \lfloor t / 3 \right \rfloor$ encoding while realizing gates, see Fig~\ref{fig:lo_charge_canonical}.
{Braiding twists $t_i$ and $t_j$ induces a permutation $i\leftrightarrow j$. 
This naturally leads to a transformation of closed strings enclosing the twists. 
Denote a string of color $c$ enclosing twists $t_i$ and $t_j$ by $\mathcal{W}_{i,j}^{c}$.}
{After braiding twists $t_i$ and $t_j$,  the string $\mathcal{W}_{i,j}^{c}$ encircling them is unchanged. 
However, a string $\mathcal{W}_{i,k}^{c}$ encircling twists $t_i$ and other twist $t_k$ after braiding (of $t_i$ and $t_j$) is mapped
to  $\mathcal{W}_{j,k}^c$. In other words, after braiding $\mathcal{W}_{i,k}^c \mapsto \mathcal{W}_{j,k}^c$.}
Let $t_1$, $t_2$ and $t_3$ be twists as shown in Fig.~\ref{fig:braid-single-qubit}.
The list of logical operators and corresponding strings is given in Table~\ref{tab:string_LO}.
In some case a logical operator has an equivalent representation obtained by adding  a gauge operator, see Fig.~\ref{fig:equivalent_LO}. 
However, strings corresponding to logical $\bar{Y}_1 (\bar{Y}_2) $ operator do not have equivalent strings of different color of the form $\mathcal{W}_{2,3}^c (\mathcal{W}_{4,5}^c)$.

\begin{table}[htb]
    \centering
    \begin{tabular}{l|c|c}
    \hline
    \hline
    String & Equivalent string & Logical Operator\\
    \hline
         $\mathcal{W}_{1,2}^{b}$ & $\mathcal{W}_{1,2}^{r}$ & $\bar{Z}_1$ \\
        $\mathcal{W}_{1,3}^{g}$ & $\mathcal{W}_{1,3}^{r}$ & $\bar{X}_1$ \\
        $\mathcal{W}_{2,3}^{r}$ & $-$ & $\bar{Y}_1$ \\
        $\mathcal{W}_{5,6}^{b}$ & $\mathcal{W}_{5,6}^{r}$& $\bar{Z}_2$ \\
        $\mathcal{W}_{4,6}^{g}$ & $\mathcal{W}_{4,6}^{r}$ & $\bar{X}_2$ \\
        $\mathcal{W}_{4,5}^{r}$ & $-$ & $\bar{Y}_2$ \\
        \hline
    \end{tabular}
    \caption{Strings and the corresponding logical operator associated with them. Note that the strings corresponding to logical $Y$ operators are dependent on those of logical $X$ and $Z$ operators. }
    \label{tab:string_LO}
\end{table}

\begin{remark}
The string $\mathcal{W}_{2,3}^r$ corresponding to $\bar{Y}_1$ is not equivalent to $\mathcal{W}_{2,3}^g$, $\mathcal{W}_{2,3}^b$
unlike $\bar{X}_1 $ and $\bar{Z}_1 $. 
It turns out that $\mathcal{W}_{2,3}^g$ and $\mathcal{W}_{2,3}^b$ correspond to operators $\bar{X}_1 \bar{Z}_2$ and $\bar{Z}_1 \bar{X}_2$ respectively.
\end{remark}

\begin{theorem}
[Single qubit Clifford gate]
Let $t_1$, $t_2$ and $t_3$ be twists that encode a logical qubit and let  $\mathcal{W}_{1,2}^{b}$,  $\mathcal{W}_{1,3}^{g}$ and  $\mathcal{W}_{2,3}^{r}$ be the strings corresponding to logical operators $\Bar{Z}$, $\Bar{X}$ and $\Bar{Y}$ respectively.
Then,
\begin{compactenum}[i)]
    \item Braiding $t_1$ and $t_2$ realizes phase gate ($Z$ rotation by $\pi /2$).
    \item Braiding $t_1$ and $t_3$ realizes $X$ rotation by $\pi /2$.
\end{compactenum}
\label{thm:single-qb-Clifford}
\end{theorem}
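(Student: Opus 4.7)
The plan is to track how each canonical logical operator is carried along by the braid, using only (a) the string table just stated in Table~\ref{tab:string_LO}, and (b) the general rule given immediately before the theorem, namely that after braiding $t_i$ and $t_j$ one has $\mathcal{W}_{i,j}^c \mapsto \mathcal{W}_{i,j}^c$ and $\mathcal{W}_{i,k}^c \mapsto \mathcal{W}_{j,k}^c$ for any third twist $t_k$. Once the images of $\bar X$, $\bar Y$, $\bar Z$ are identified with canonical logical operators (possibly up to signs absorbed into a Pauli frame), the induced automorphism of the single-qubit Pauli group fixes the Clifford gate up to a Pauli correction, and it suffices to check that this automorphism matches that of $S$ (respectively $R_x(\pi/2)$).

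For (i), braiding $t_1$ and $t_2$: since $\mathcal{W}_{1,2}^b$ is fixed, $\bar Z \mapsto \bar Z$. The string $\mathcal{W}_{2,3}^r$ representing $\bar Y$ is mapped to $\mathcal{W}_{1,3}^r$, which Table~\ref{tab:string_LO} identifies with $\bar X$, so $\bar Y \mapsto \bar X$ up to sign. Finally, $\bar X = \mathcal{W}_{1,3}^g$ is carried to $\mathcal{W}_{2,3}^g$, which I will then identify with $\bar Y$ in the single-qubit setting by multiplying by an appropriate product of stabilizers and gauge operators so that $\mathcal{W}_{2,3}^g$ and $\mathcal{W}_{2,3}^r$ differ only by stabilizers on faces enclosed by the homotopy between the two strings. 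Together the three transformations $\bar X \mapsto \bar Y$, $\bar Y \mapsto \bar X$, $\bar Z \mapsto \bar Z$ (up to signs) match the action of the phase gate on Pauli operators.

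For (ii), the argument is the mirror image. Braiding $t_1$ and $t_3$ leaves $\mathcal{W}_{1,3}^g$ invariant, giving $\bar X \mapsto \bar X$. The string $\mathcal{W}_{1,2}^b$ representing $\bar Z$ is carried to $\mathcal{W}_{2,3}^b$, and again I will show $\mathcal{W}_{2,3}^b \sim \bar Y$ in the three-twist sector by an explicit color-code relation among the three $\mathcal{W}_{2,3}^c$ strings. The string $\mathcal{W}_{2,3}^r = \bar Y$ is mapped to $\mathcal{W}_{1,2}^r \sim \bar Z$. So the induced permutation is $\bar X \mapsto \bar X$, $\bar Z \mapsto \bar Y$, $\bar Y \mapsto \bar Z$ (modulo signs), which is precisely the action of $e^{-i\pi X/4}$ on the Pauli operators.

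The main obstacle is the third step in each part, namely verifying that a string of a non-canonical color around a pair of twists coincides with the canonical logical operator on that pair in the single-qubit encoding. This is delicate because the remark following Table~\ref{tab:string_LO} warns that in the six-twist setting $\mathcal{W}_{2,3}^g$ represents $\bar X_1 \bar Z_2$ rather than $\bar Y_1$; the relevant difference is exactly a gauge operator that trivializes in the three-twist code. I will make this precise by producing the explicit product of twist stabilizers and domain-wall face stabilizers that converts $\mathcal{W}_{2,3}^g$ into $\mathcal{W}_{2,3}^r$, and verifying it acts as the identity on the single encoded qubit. The residual sign ambiguity is innocuous: any $\pm$ that appears simply selects between the phase gate and its inverse (respectively $R_x(\pm\pi/2)$), both of which are $\pi/2$ rotations about the stated axis and can be absorbed into the Pauli frame.
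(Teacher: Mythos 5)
Your proposal follows essentially the same route as the paper's proof: both apply the braiding rule to the three canonical strings $\mathcal{W}_{1,2}^{b}$, $\mathcal{W}_{1,3}^{g}$, $\mathcal{W}_{2,3}^{r}$ and reduce the theorem to identifying the one nontrivially transformed string (e.g.\ $\mathcal{W}_{2,3}^{g}$) with $\bar{Y}$ up to stabilizers and gauge operators, which the paper carries out via the explicit stabilizer deformation in Appendix~\ref{sec:charge-braiding}. Your added attention to the sign/Pauli-frame ambiguity and to the gauge-operator discrepancy noted after Table~\ref{tab:string_LO} is consistent with, and slightly more explicit than, the paper's treatment.
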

\begin{proof}
The encoding used is as shown in Fig.~\ref{fig:lo_charge_canonical}.
The logical $Y$ operator (up to gauge operators) is shown in Fig.~\ref{fig:Qb-cc-logY-4}.
After braiding twists $t_1$ and $t_2$ we have, $t_1 \leftrightarrow t_2$, $t_3 \rightarrow t_3$ and $t_4 \rightarrow t_4$. Using this we get the transformation given below. 
\begin{eqnarray*}
  \begin{array}{ccc}
      \mathcal{W}^{b}_{1,2} \mapsto \mathcal{W}_{1,2}^{b} & \vert & \bar{Z} \mapsto \bar{Z}\\
      \mathcal{W}^{g}_{1,3} \mapsto \mathcal{W}_{2,3}^{g} & \vert  & \bar{X} \mapsto \bar{Y}\\
      \mathcal{W}_{2,3}^{r} \mapsto \mathcal{W}_{1,3}^{r}  & \vert & \bar{Y} \mapsto \bar{X} \\
  \end{array}
\end{eqnarray*}

The string $\mathcal{W}_{1,2}^{b}$ is unchanged since the twists encircled by it are braided.
Transformation of $\mathcal{W}_{1,3}^{g}$ is shown in Appendix~\ref{sec:charge-braiding} and transformation of $\mathcal{W}_{2,3}^{r}$  to $\mathcal{W}_{1,3}^{r}$ is straightforward.
Similarly, braiding $t_1$ and $t_3$, we get the following transformation.
\begin{eqnarray*}
  \begin{array}{ccc}
       \mathcal{W}_{1,3}^{g} \mapsto \mathcal{W}_{1,3}^{g}    & \vert &  \bar{X} \mapsto \bar{X},\\
  \mathcal{W}_{1,2}^{b} \mapsto \mathcal{W}_{2,3}^{b}  & \vert & \bar{Z} \mapsto \bar{Y} \\
  \mathcal{W}_{2,3}^{r} \mapsto \mathcal{W}_{1,2}^{r} & \vert & \bar{Y} \mapsto \bar{Z} \\
  \end{array}
\end{eqnarray*}
Hence, the theorem  follows.
\end{proof}

Hadamard gate can be realized by $X$ rotation and phase gates.
Hence $Z$ and $X$ rotations by $\pi / 2 $ suffice to realize single qubit Clifford gates by braiding.

\begin{figure}[htb]
    \centering
    \begin{subfigure}{.225\textwidth}
        \centering
        \includegraphics[scale = 1]{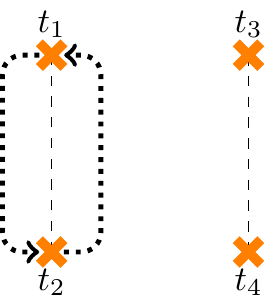}
        \subcaption{}
        \label{fig:braid-single-qubit-rz}
    \end{subfigure}
    ~
    \begin{subfigure}{.225\textwidth}
        \centering
        \includegraphics[scale = 1]{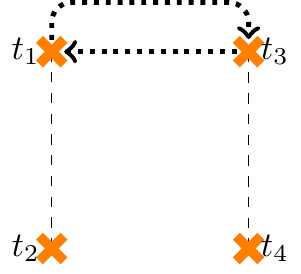}
        \subcaption{}
        \label{fig:braid-single-qubit-rx}
    \end{subfigure}
    \caption{Braiding twists to implement single qubit Clifford gates. (a) To implement a Pauli $Z$ rotation (i.e. Phase gate), we have to braid the twists that are encircled by the logical $Z$ operator viz $t_1$ and $t_2$. (b) Pauli $X$ rotation by $\pi / 2$ is implemented by braiding twists $t_1$ and $t_3$ which are encircled by the string corresponding to logical $X$ operator.}
    \label{fig:braid-single-qubit}
\end{figure}

To complete the Clifford group, we need an entangling gate.
We prove below that braiding the twist pair shared by two logical qubits accomplishes this.
Let $t_{1}$, $t_{2}$ and $t_{3}$ be twists that encode the logical qubit $a$ and let $t_{4}$, $t_{5}$ and $t_{6}$ be twists that encode the logical qubit $b$ in the same block as shown in Fig.~\ref{fig:braid-entangling-adjacent}.
The operator corresponding to string $\mathcal{W}_{3,4}^{b}$ encircling twists $t_{3}$ and $t_{4}$ is the product of operators encircling twists $t_{1}$, $t_{2}$ and $t_{5}$, $t_{6}$  i.e. $\mathcal{W}_{3,4}^{b} = \mathcal{W}_{1,2}^{b} \mathcal{W}_{5,6}^{b}$.
This can be seen easily as the logical operators in the form of closed strings can be opened up and running between boundaries.
This is possible as the logical operators encircling twists created in pairs have support on blue strings which can be deformed to end on the unbounded blue face.
Then, by adding stabilizers, the open strings can be made to encircle twists $t_{3}$ and $t_{4}$.
Similarly, for qubits not in the same block, see Fig.~\ref{fig:braid-entangling-nonadjacent}, the string encircling twists $t_3$, $t_4$ and $t_9$, $t_{10}$ can be deformed in similar way to that in surface codes~\cite{GowdaSarvepalli2020}.
The twists encoding qubits is given in Table~\ref{tab:twists_encoding}.
\begin{table}[htb]
    \centering
    \begin{tabular}{c|c}
     \hline
    \hline
       Encoded qubit  & Twists used \\
       \hline
        $a$ & $t_1$, $t_2$, $t_3$ \\
        \hline
        $b$ & $t_7$, $t_8$, $t_9$\\
        \hline
        $c$ & $t_4$, $t_5$, $t_6$\\
        \hline
        $d$ & $t_{10}$, $t_{11}$, $t_{12}$\\
        \hline
    \end{tabular}
    \caption{Encoded qubits and the twists used for encoding.}
    \label{tab:twists_encoding}
\end{table}

\begin{figure}[htb]
    \centering
    \begin{subfigure}{.225\textwidth}
        \centering
        \includegraphics[scale = .85]{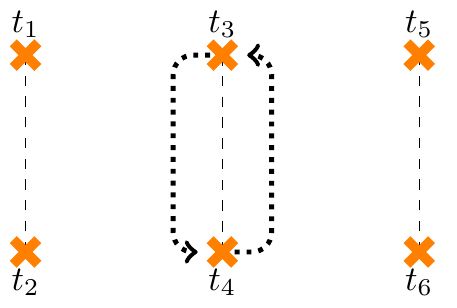}
        \subcaption{}
        \label{fig:braid-entangling-adjacent}
    \end{subfigure}
    ~
    \begin{subfigure}{.225\textwidth}
        \centering
        \includegraphics[scale = .85]{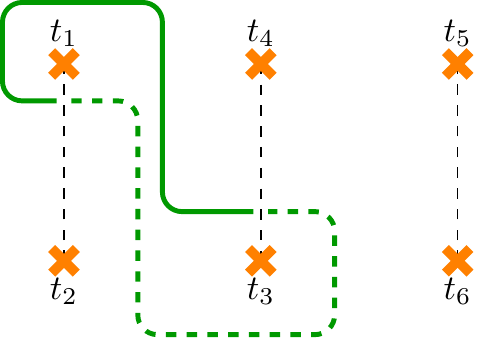}
        \subcaption{}
        \label{fig:braid-entangling-adjacent-log-X}
    \end{subfigure}
    ~
    \begin{subfigure}{.45\textwidth}
        \centering
        \includegraphics[scale = .9]{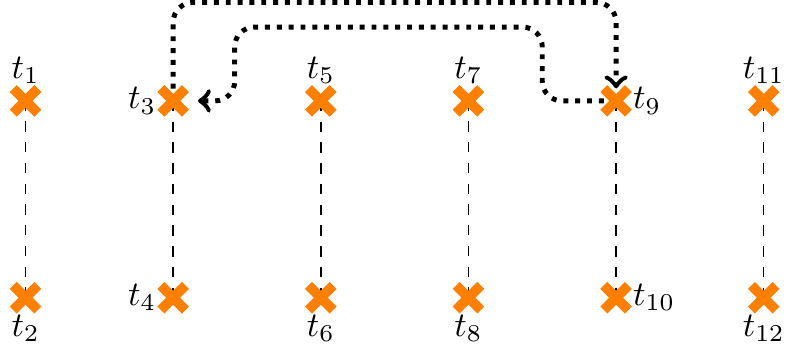}
        \subcaption{}
        \label{fig:braid-entangling-nonadjacent}
    \end{subfigure}
    \caption{Realizing entangling gate between qubits in the same block and different block. (a) To realize entangling gate between two encoded qubits in the same block, braid the twist pair as shown counterclockwise. (b) After braiding $t_3$ and $t_4$, $\Bar{X}_1$ is modified as shown. This string can be expressed as a combination of string encircling twists $t_3$ and $t_4$ and $\Bar{X}_1$ before braiding. (c) An entangling gate between two encoded qubits in different blocks is realized by braiding twists as shown.}
    \label{fig:braid-entangling}
\end{figure}

\begin{theorem}[Entangling gate]
Controlled-Z gate is realized up to phase gate on control qubit $a$ and target qubit $b$ by the following braiding.
\begin{compactenum}[i)]
    \item Braid $t_3$ and $t_4$ for qubits in the same block.
    \item Braid $t_3$ and $t_9$ for qubits in different block.
\end{compactenum}
\label{thm:multi-qb-entangling}
\end{theorem}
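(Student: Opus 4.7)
The plan is to compute how the canonical logical operators of qubits $a$ and $b$ transform under the braiding, using the string representation recorded in Table~\ref{tab:string_LO}, and then identify the induced map on the logical space with $CZ_{a,b}$ composed with the single-qubit phase gates $S_a$ and $S_b$.

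For case (i), braiding $t_3$ and $t_4$ induces the permutation $t_3 \leftrightarrow t_4$ on twist labels while fixing the remaining twists, exactly as in Theorem~\ref{thm:single-qb-Clifford}. Applying this permutation to the string representatives gives
\begin{align*}
\bar{Z}_a = \mathcal{W}_{1,2}^b &\mapsto \mathcal{W}_{1,2}^b, & \bar{X}_a = \mathcal{W}_{1,3}^g &\mapsto \mathcal{W}_{1,4}^g, \\
\bar{Z}_b = \mathcal{W}_{5,6}^b &\mapsto \mathcal{W}_{5,6}^b, & \bar{X}_b = \mathcal{W}_{4,6}^g &\mapsto \mathcal{W}_{3,6}^g,
\end{align*}
so the $\bar{Z}$ generators are immediately preserved, and the real work is to express $\mathcal{W}_{1,4}^g$ and $\mathcal{W}_{3,6}^g$ as products of the canonical logical operators of $a$ and $b$ modulo stabilizers and gauge operators. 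I would accomplish this reduction using three facts already present in the excerpt: (a) in the $\lfloor t/3 \rfloor$ encoding, a closed string of fixed color enclosing all six twists of a block is a stabilizer (the same input used in the text to derive $\mathcal{W}_{3,4}^b = \mathcal{W}_{1,2}^b\mathcal{W}_{5,6}^b$); (b) the remark just before the theorem, which identifies $\mathcal{W}_{2,3}^g$ and $\mathcal{W}_{2,3}^b$ with $\bar{X}_a\bar{Z}_b$ and $\bar{Z}_a\bar{X}_b$ respectively; and (c) the rule that a closed string's logical class depends, modulo stabilizers, only on the set of twists enclosed with odd multiplicity, giving relations of the form $\mathcal{W}_{i,j}^c = \mathcal{W}_{i,k}^c\cdot \mathcal{W}_{k,j}^c$. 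Writing $\mathcal{W}_{1,4}^g = \mathcal{W}_{1,3}^g\cdot\mathcal{W}_{3,4}^g$, evaluating $\mathcal{W}_{3,4}^g$ via (a) and (b), and doing the analogous reduction for $\mathcal{W}_{3,6}^g$ should produce $\bar{X}_a \mapsto \bar{Y}_a\bar{Z}_b$ and $\bar{X}_b \mapsto \bar{Z}_a\bar{Y}_b$. A direct check using $SXS^\dagger = Y$, $SZS^\dagger = Z$ and the standard conjugation action of $CZ$ then confirms that the overall logical map equals $CZ_{a,b}\cdot S_aS_b$, which is the statement of (i).

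For case (ii), I would run the same computation, with the single extra step already sketched in the paragraph preceding the theorem: when $a$ and $b$ sit in different blocks, the blue logical operators can be opened into strings terminating on the outer unbounded blue face and recombined, so that products such as $\mathcal{W}_{3,9}^b$ still reduce to $\bar{Z}_a\bar{Z}_b$ modulo stabilizers. After this identification the remaining combinatorics is formally identical to case (i). The main obstacle is step (c): systematically expressing noncanonical strings such as $\mathcal{W}_{1,4}^g$ (and their cross-block analogues) in the canonical basis of logical, stabilizer, and gauge operators requires careful tracking of the color-dependence of the string algebra, because, as the remark on $\mathcal{W}_{2,3}^g$ versus $\mathcal{W}_{2,3}^b$ illustrates, different colored strings enclosing the same pair of twists can represent qualitatively different products of logical operators across $a$ and $b$. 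The emergence of the $\bar{Y}$ components in the transformed $\bar{X}$ operators, i.e.\ precisely the $S_a,S_b$ phase-gate corrections that appear in the statement of the theorem, is the fingerprint of combining two such color-swap identities, and handling it cleanly is the most delicate part of the argument.
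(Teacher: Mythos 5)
Your plan matches the paper's proof: the paper likewise tracks the permutation $t_3\leftrightarrow t_4$ on the twist labels, obtains exactly the string transformations you list ($\mathcal{W}_{1,3}^g\mapsto\mathcal{W}_{1,4}^g$, $\mathcal{W}_{4,6}^g\mapsto\mathcal{W}_{3,6}^g$, blue strings fixed), and reads off $\bar{X}_a\mapsto\bar{Y}_a\bar{Z}_b$, $\bar{X}_b\mapsto\bar{Z}_a\bar{Y}_b$, i.e.\ controlled-$Z$ up to phase gates on both qubits, with the nonadjacent case treated as a straightforward extension. The one step you flag as delicate is handled in the paper not by your same-color concatenation through $\mathcal{W}_{3,4}^g$ but by the direct mixed-color decompositions $\mathcal{W}_{1,4}^g=\mathcal{W}_{1,3}^r\,\mathcal{W}_{3,4}^b$ and $\mathcal{W}_{3,6}^g=\mathcal{W}_{3,4}^b\,\mathcal{W}_{4,6}^r$, justified by the explicit lattice deformation in Appendix~\ref{sec:charge-braiding}; the two reductions are equivalent in content.
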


\begin{proof}
We prove for the qubits in the same block.
Extension to qubits from nonadjacent block is straightforward.
Encoding used is as shown in Fig.~\ref{fig:lo_charge_canonical}.
After braiding $t_{3}$ and $t_{4}$, we have $t_{3} \leftrightarrow t_{4}$, $t_{m} \rightarrow t_{m}$, $m = 1,2,5,6$. 
We have the following transformation.
\begin{eqnarray*}
  \begin{array}{ccc}
   \mathcal{W}_{3,4}^{b} \mapsto \mathcal{W}_{3,4}^{b}    & \vert &  \bar{Z}_1\bar{Z}_2 \mapsto \bar{Z}_1\bar{Z}_2,\\
  \mathcal{W}_{1,2}^{b} \mapsto \mathcal{W}_{1,2}^{b}  & \vert &  \bar{Z}_1 \mapsto \bar{Z}_1 \\
  \mathcal{W}_{1,3}^{g} \mapsto \mathcal{W}_{1,4}^{g}  & \vert & \bar{X}_1 \mapsto \bar{X}_1 \bar{Z}_1 \bar{Z}_2 = \bar{Y}_1 \bar{Z}_2\\
  \mathcal{W}_{5,6}^{b} \mapsto \mathcal{W}_{5,6}^{b}  & \vert &  \bar{Z}_2 \mapsto \bar{Z}_2 \\
  \mathcal{W}_{4,6}^{g} \mapsto \mathcal{W}_{3,6}^{g}  & \vert & \bar{X}_2 \mapsto \bar{X}_2 \bar{Z}_1 \bar{Z}_2 = \bar{Z}_1 \bar{Y}_2\\
  \end{array}
\end{eqnarray*}

In the third and the last transformations, the deformed strings can be expressed as $\mathcal{W}_{1,3}^r \mathcal{W}_{3,4}^b$ and $\mathcal{W}_{3,4}^b \mathcal{W}_{4,6}^r$ respectively, see Fig.~\ref{fig:Qb-cc-logY} in Appendix~\ref{sec:charge-braiding}.
This transformation is similar to that of combining $\Bar{Z}$ and $\Bar{X}$ to give $\Bar{Y}$.
By noting the logical operator transformation on the right side, we can conclude that braiding twists $t_3$ and $t_4$ results in controlled-$Z$ gate up to a phase gate on control and target qubits.
\end{proof}

After performing the braiding as indicated in Theorem~\ref{thm:multi-qb-entangling}, performing phase gate on control and target qubits will realize controlled-$Z$ gate between them.
Phase gate is performed by braiding the twist pair encircled by the logical $Z$ operator.
The procedure to realize the entangling gate stated in Theorem~\ref{thm:multi-qb-entangling} between the other encoded qubit pairs is given in Table~\ref{tab:entangling_braid_nonadjacent}.
\begin{table}[htb]
    \centering
    \begin{tabular}{c|c}
    \hline
    \hline
         Encoded qubits & Braid to be performed \\
         \hline
         $(a,d)$ & $(t_{3},t_{10})$ \\
         \hline
         $(b,c)$ &  $(t_{4},t_{9})$\\
         \hline
          $(b,d)$ & $(t_{4},t_{10})$\\
          \hline
    \end{tabular}
    \caption{The entangling gate between encoded qubits is realized by performing the braiding indicated. The first qubit in the pair $(a,d)$, viz. $a$, is the control qubit and the other is the target qubit. Controlled-$Z$ is realized by performing phase gate on both control and target qubits.}
    \label{tab:entangling_braid_nonadjacent}
\end{table}

\begin{theorem}
In a color code with charge permuting twists, Clifford gates can be realized by braiding alone.
\end{theorem}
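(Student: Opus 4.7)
The plan is to reduce the statement to the two preceding theorems by invoking a standard generating set for the Clifford group. Since the single-qubit Clifford group is generated by the phase gate $S$ and the Hadamard $H$, and the full $n$-qubit Clifford group is generated by the single-qubit Cliffords together with the controlled-$Z$ gate, it suffices to exhibit a braid implementation of $S$, $H$ and $\mathrm{CZ}$ on the encoded qubits.

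First I would appeal to Theorem~\ref{thm:single-qb-Clifford} to obtain the $\pi/2$ rotations $R_Z(\pi/2) = S$ (up to a global phase) and $R_X(\pi/2)$ on any logical qubit purely by braiding its three associated twists. Using the identity $H \propto R_X(\pi/2)\, R_Z(\pi/2)\, R_X(\pi/2)$, I would then compose the corresponding braids to obtain Hadamard, again by braiding alone. This handles the single-qubit generators.

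Next I would invoke Theorem~\ref{thm:multi-qb-entangling} to obtain $\mathrm{CZ}$ between any pair of encoded qubits in either the same or different blocks, noting that the braid given there yields $\mathrm{CZ}$ only up to a phase gate on the control and on the target qubit. To cancel these residual phase gates, I would apply the inverse phase gate on each of the two qubits, which by the first step is itself realized by braiding (for example, $S^{-1} = S^3$ is three consecutive phase braids on the same twist pair). Composing everything produces a pure $\mathrm{CZ}$ implemented entirely by a sequence of braids, and together with $S$ and $H$ this exhausts a generating set for the Clifford group.

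The main step requiring care is the bookkeeping in the last composition: I have to check that the braid-generated phase corrections act only on the single-qubit sectors of the control and target and therefore commute with the entangling action established in Theorem~\ref{thm:multi-qb-entangling}. This follows because the phase corrections are braids local to the twist triples $(t_1,t_2,t_3)$ and $(t_7,t_8,t_9)$ (or the analogous triples for the chosen pair), which are disjoint from the cross-block braid of $t_3$ with $t_9$ used for the entangling operation, so the logical actions composed on the right of Theorem~\ref{thm:multi-qb-entangling} combine exactly as in the circuit identity $\mathrm{CZ} = (S^{-1}\otimes S^{-1})\cdot(\text{braid of }t_3, t_9)$. With this verification complete, the theorem follows.
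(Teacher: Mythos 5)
Your proposal is correct and follows essentially the same route as the paper, whose proof is simply ``Follows from Theorem~\ref{thm:single-qb-Clifford} and Theorem~\ref{thm:multi-qb-entangling}'' together with the surrounding remarks that Hadamard is composed from the two $\pi/2$ rotations and that the residual phase gates after the entangling braid are cancelled by further phase braids (cf.\ Table~\ref{tab:gate_braiding}). Your version just makes the generating-set reduction and the phase-correction bookkeeping explicit.
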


\begin{proof}
Follows from Theorem~\ref{thm:single-qb-Clifford} and Theorem~\ref{thm:multi-qb-entangling}.
\end{proof}

We now summarize the braiding protocols for  realizing Clifford gates in  Table~\ref{tab:gate_braiding}.
\begin{table}[htb]
    \centering
    \begin{tabular}{c|c}
    \hline
    \hline
         Gate & Twists to be braided \\
         \hline
         $R_Z(\pi / 2)$ & $t_1$ and $t_2$ (Theorem~\ref{thm:single-qb-Clifford}) \\
         \hline
         $R_X(\pi / 2)$ & $t_1$ and $t_3$ (Theorem~\ref{thm:single-qb-Clifford}) \\
         \hline
         Adjacent CZ &  $t_3$ and $t_4$, $t_1$ and $t_2$, $t_5$ and $t_6$ (Theorem~\ref{thm:multi-qb-entangling})\\
         \hline
         Nonadjacent CZ & $t_3$ and $t_9$, $t_1$ and $t_2$, $t_7$ and $t_8$ (Theorem~\ref{thm:multi-qb-entangling})\\
         \hline
    \end{tabular}
    \caption{Braiding protocol for Clifford gates.}
    \label{tab:gate_braiding}
\end{table}

\section{Clifford Gates using color permuting twists}
\label{sec:gates-color-permuting}

In this section, we describe the implementation of Clifford gates using color permuting twists.
We use four color permuting twists to encode a logical qubit, see Fig.~\ref{fig:Qb-cc-four-twist-encoding-color}.

For realizing phase and Hadamard gates, we use Pauli frame update~\cite{Hastings2015}.
Pauli frame update is done classically.
In this procedure, one has to keep track of the Pauli frame for each encoded qubit i.e. one has to maintain the information whether the canonical logical operator labels are exchanged.
Phase gate is realized by interchanging the labels of $X$ and $Y$ logical operators and for realizing Hadamard gate, the labels of $X$ and $Z$ logical operators are interchanged.
If a $Z$ measurement is to be done after Hadamard gate, then after Pauli frame update, we take into account the Pauli frame update and measure the canonical $X$ logical operator.

We implement CNOT gate by making use of an additional ancilla qubit and holes~\cite{Terhal2015, Brown2017}.
This is done by joint $X$ parity measurement and joint $Z$ parity measurement with an ancilla qubit.
The protocol is given in Table~\ref{tab:cc_cnot}.

\begin{table}[htb]
    \centering
    \begin{tabular}{ll}
    \hline
    & Protocol for implementing CNOT gate~\cite{Terhal2015} \\
    \hline 
        (1) & Prepare ancilla in the state $|0\rangle$. \\
         (2) & Perform joint $X$ parity measurement on ancilla and target qubit. \\
         (3) & Perform joint $Z$ parity measurement on ancilla and control qubit.\\
         (4) & Perform Hadamard on ancilla qubit and measure it in $X$ basis.\\
         (5) & Measure ancillas and apply correction as given in Equation~\ref{eqn:CNOT}.\\
         \hline
    \end{tabular}
    \caption{Protocol for implementing CNOT gate by joint measurements with an ancilla.}
    \label{tab:cc_cnot}
\end{table}

Suppose that $|c\rangle$ and $|t\rangle$ are the states of control and target qubits with $c = 0,1$ and $t = 0,1$.
Let the measurement outcome of joint $X$ parity measurement be $m_{xx}$ and those of joint $Z$ parity measurement and ancilla measurement be $m_{zz}$ and $m_{x}$ respectively.
Then it can be shown that at the end of the protocol the initial state is modified as~\cite{Terhal2015}
\begin{equation}
    |c\rangle |t\rangle |0\rangle \rightarrow |c\rangle Z^{m_{xx}} X^{m_{zz}}|c + t\rangle Z^{m_{x}} |+\rangle.
    \label{eqn:CNOT}
\end{equation}
Ancilla qubit is disentangled by measuring in the $X$ basis.

\begin{figure}[htb]
    \centering
    \begin{subfigure}{.15\textwidth}
        \centering
        \includegraphics[scale = .75]{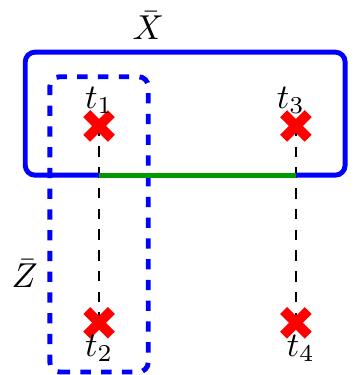}
        \subcaption{}
      \label{fig:Qb-cc-four-twist-encoding-color}
    \end{subfigure}
    ~
    \begin{subfigure}{.15\textwidth}
        \centering
        \includegraphics[scale = .75]{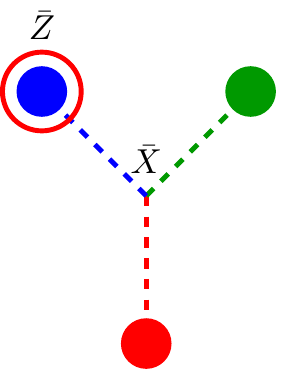}
        \subcaption{}
      \label{fig:primal-qubit-hole}
    \end{subfigure}
    ~
    \begin{subfigure}{.15\textwidth}
        \centering
        \includegraphics[scale = .75]{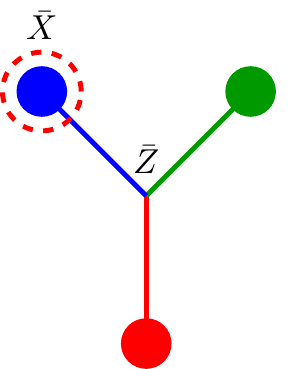}
        \subcaption{}
      \label{fig:hole-encoding}
    \end{subfigure}
\caption{Four twist encoding that used red color permuting twists. Also shown is the dual logical qubit encoded with a triple of holes. (a) Four twist encoding used for realizing encoded gates with color permuting twists. (b) Three holes are used to encode a logical (primal) qubit. (c) Three holes are used to encode a logical (dual) qubit.}
\label{fig:twist-hole-encoding}
\end{figure}

\noindent \emph{Motivation for using holes.} The encoded CNOT gate protocol given in Table~\ref{tab:cc_cnot} involves joint parity measurement of logical $Z$ and logical $X$ operators.
These operators are high weight owing to the separation between the twists.
(We keep the twist separation large to make the distance large.)
Therefore, measuring the the joint operators by doing gates between physical qubits in the support of logical operators and ancilla qubit fault-tolerant will be quite complex.
However, an alternate scheme exists for doing fault-tolerant measurement of these operators.
This alternate scheme in the context of surface codes involves creating holes, braiding them around twists and measuring them~\cite{Brown2017}. 
We adapt this technique to the  color permuting twists.

\noindent \emph{Holes in color code lattices.}
Here we briefly review holes in color code lattices.
For a more detailed treatment of holes in color codes, we refer the reader to Ref.~\cite{Fowler2011}.
A hole is a region in the lattice where stabilizers are not measured.
Suppose that we have to introduce a red hole in the lattice.
We choose a red face and do not measure the stabilizers defined on it.
To expand the hole, we take the vertices on one of the edges incident on the face.
Then $ZZ$ and $XX$ measurements are performed on the qubits on the vertices of the chosen edge.
Using this procedure, a hole can be made arbitrarily big.
We use triple hole encoding~\cite{Fowler2011}.
Two more twists, one each of green and blue colors, are created by the same procedure.
A logical qubit encoded in a hole is shown in Fig.~\ref{fig:primal-qubit-hole} and Fig.~\ref{fig:hole-encoding}.
Logical operators are the strings encircling hole and the string net (in the shape of T).
Depending on whether we choose $Z$ or $X$ logical operator to encircle the hole, we get primal and dual qubits respectively.
The encoding shown in Fig.~\ref{fig:primal-qubit-hole} is that of a primal qubit whereas the encoding in Fig.~\ref{fig:hole-encoding} is that of a dual qubit.

\subsection{Hole-twist braiding}
We now show how logical operators are modified by braiding a hole around twists. 
Three hole encoding~\cite{Fowler2011} is used, see Fig.~\ref{fig:primal-qubit-hole} and Fig.~\ref{fig:hole-encoding}.
In this scheme, a logical qubit is encoded using three holes (of different color).
For primal qubit, the $Z$ logical operator is the string encircling a hole and the logical $X$ operator is the string having tree like structure connecting the three holes.
For the dual qubit, the $X$ logical operator is the string encircling the hole and $Z$ logical operator is the string having tree like structure.
The idea of braiding holes around twists to implement CNOT gate was proposed in Ref.~\cite{Brown2017} in the context of surface codes.
In this paper, we explore the same idea in the context of color permuting twists.

\begin{figure*}
    \centering
    \begin{subfigure}{.45\textwidth}
        \centering
        \includegraphics[scale = .85]{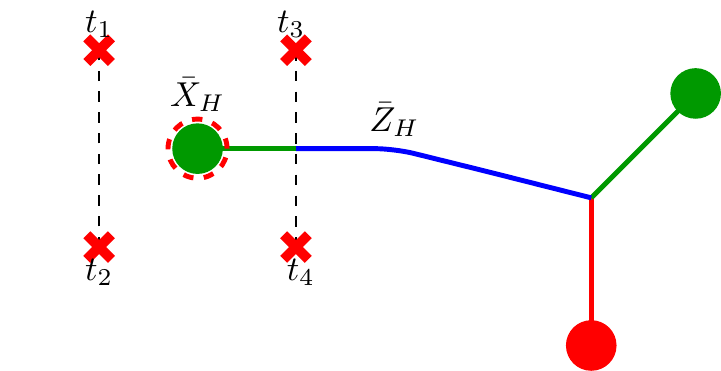}
        \subcaption{}
        \label{fig:twist-hole-braid-1}
    \end{subfigure}
    ~
    \begin{subfigure}{.45\textwidth}
        \centering
        \includegraphics[scale = .85]{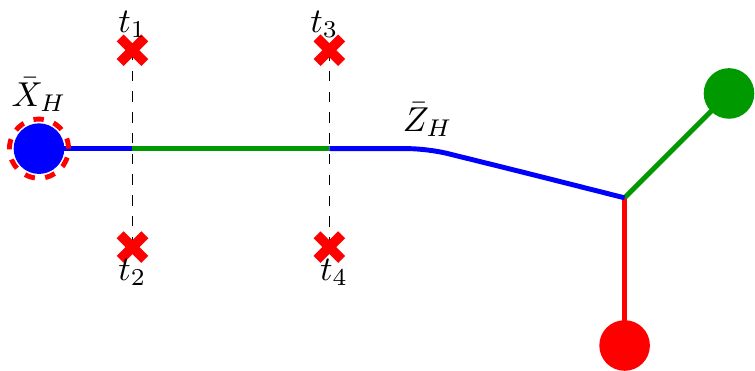}
        \subcaption{}
        \label{fig:twist-hole-braid-2}
    \end{subfigure}
    ~
    \begin{subfigure}{.45\textwidth}
        \centering
        \includegraphics[scale = .85]{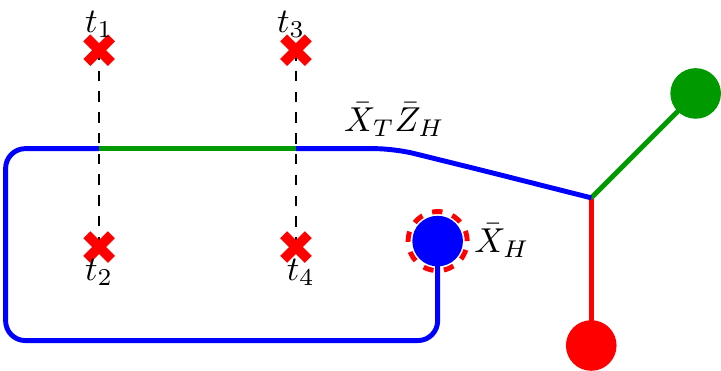}
        \subcaption{}
        \label{fig:twist-hole-braid-3}
    \end{subfigure}
    ~
    \begin{subfigure}{.45\textwidth}
        \centering
        \includegraphics[scale = .85]{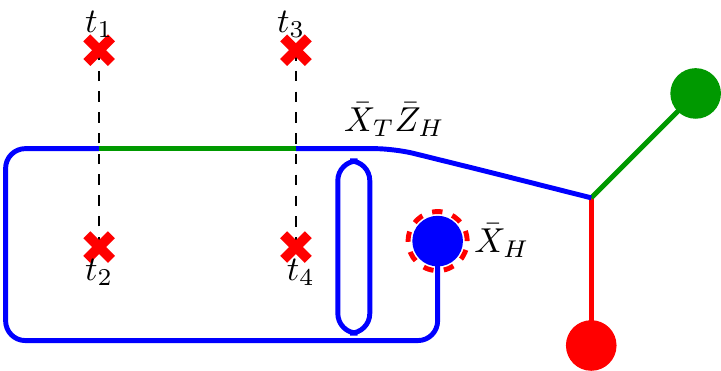}
        \subcaption{}
        \label{fig:twist-hole-braid-4}
    \end{subfigure}
    ~
    \begin{subfigure}{.45\textwidth}
        \centering
        \includegraphics[scale = .85]{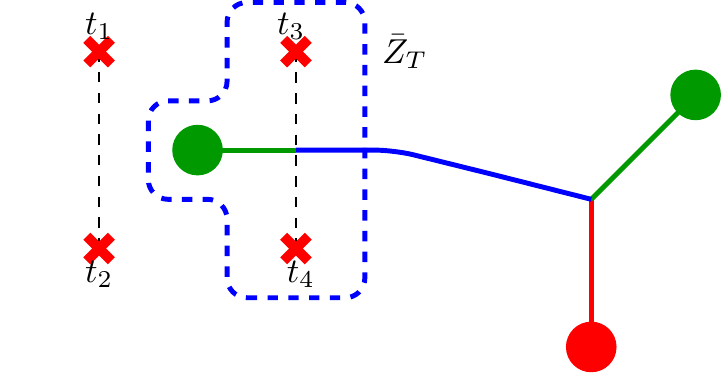}
        \subcaption{}
        \label{fig:twist-hole-braid-5}
    \end{subfigure}
     ~
    \begin{subfigure}{.45\textwidth}
        \centering
        \includegraphics[scale = .85]{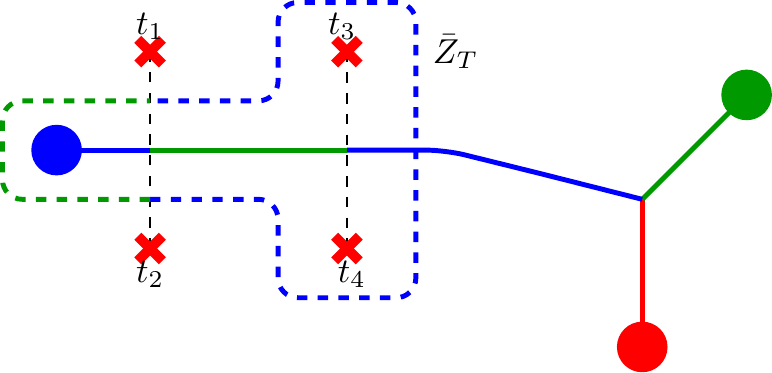}
        \subcaption{}
        \label{fig:twist-hole-braid-6}
    \end{subfigure}
     ~
    \begin{subfigure}{.45\textwidth}
        \centering
        \includegraphics[scale = .85]{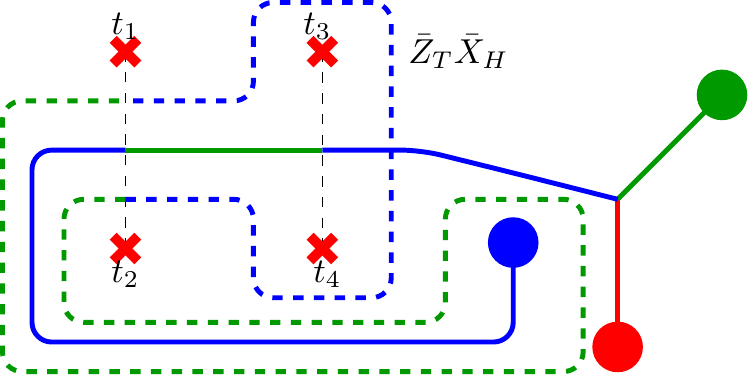}
        \subcaption{}
        \label{fig:twist-hole-braid-7}
    \end{subfigure}
    ~
    \begin{subfigure}{.45\textwidth}
        \centering
        \includegraphics[scale = .85]{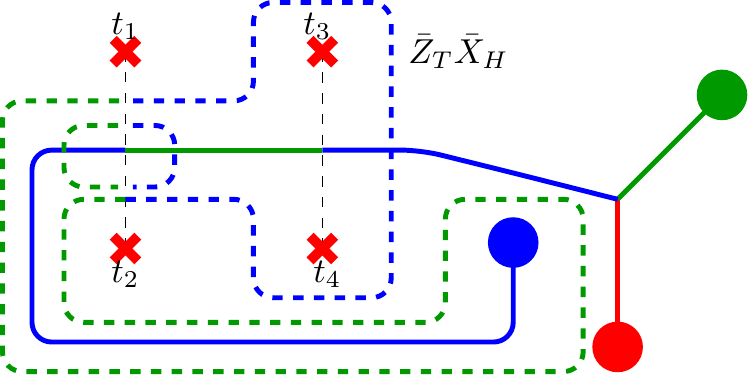}
        \subcaption{}
        \label{fig:twist-hole-braid-8}
    \end{subfigure}
    \caption{Braiding hole around twists $t_2$ and $t_4$ to implement the entangling gate given in Equation~\eqref{eqn:controlled-X}. (a) After crossing the $T$-line, the hole has changed its color from blue to green. Note that $\bar{X}$ does not change color because red twists do not permute the color of red strings. (b) Hole regains its color after crossing the second domain wall. Logical operators of the dual qubit encoded using holes are not transformed. (c) The hole is returned to its position after braiding. Note that the deformed logical $Z$ operator of dual qubit now encircles twists $t_2$ and $t_3$ and is equivalent to $\bar{Z}_H \bar{X}_T$. (d) The equivalence of the deformed logical $Z$ to $\bar{Z}_H \bar{X}_T$ can be seen by adding a stabilizer as shown. (e) Logical $Z$ of the qubit encoded using twists, $\Bar{Z}_T$, is deformed by hole movement as blue string cannot have terminal points on green hole. (f) $\Bar{Z}_T$ and hole change color as they crosses the domain wall of twists $t_1$ and $t_2$. Color of hole and $\Bar{Z}_T$ are different and hence $\Bar{Z}_T$ is further deformed. (g) The deformed logical $Z$ operator of twist qubit is equivalent to $\Bar{Z}_T \Bar{X}_H$. (h) By adding the stabilizer as shown, one can see the equivalence between deformed logical $Z$ operator to $\Bar{Z}_T \Bar{X}_H$.}
    \label{fig:twist-hole-braid}
\end{figure*}

\begin{figure*}
    \centering
    \begin{subfigure}{.45\textwidth}
        \centering
        \includegraphics[scale = .85]{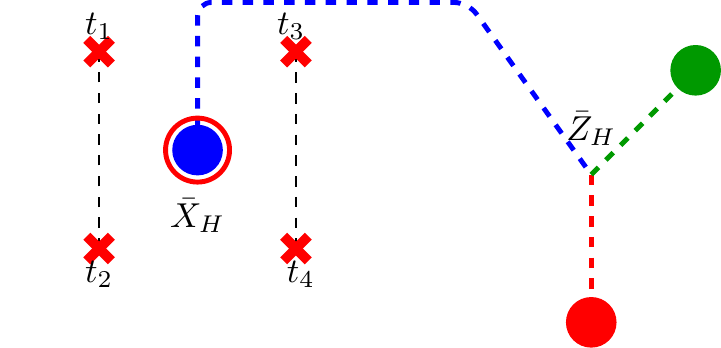}
        \subcaption{}
        \label{fig:hole-twist-braid-1}
    \end{subfigure}
    ~
    \begin{subfigure}{.45\textwidth}
        \centering
        \includegraphics[scale = .85]{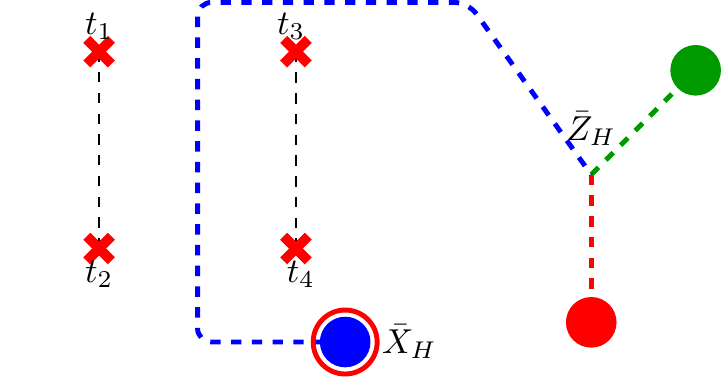}
        \subcaption{}
        \label{fig:hole-twist-braid-2}
    \end{subfigure}
    ~
    \begin{subfigure}{.45\textwidth}
        \centering
        \includegraphics[scale = .85]{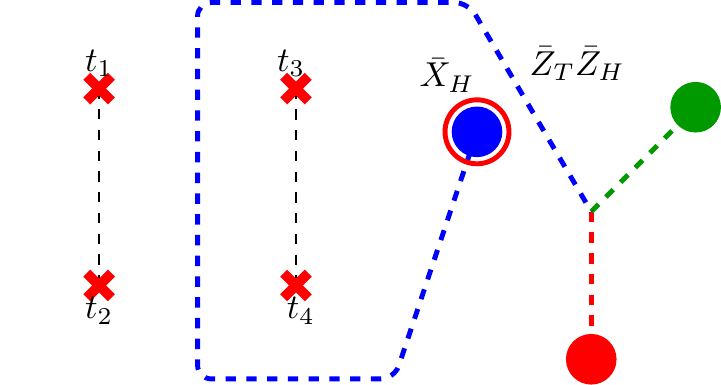}
        \subcaption{}
        \label{fig:hole-twist-braid-3}
    \end{subfigure}
    ~
    \begin{subfigure}{.45\textwidth}
        \centering
        \includegraphics[scale = .85]{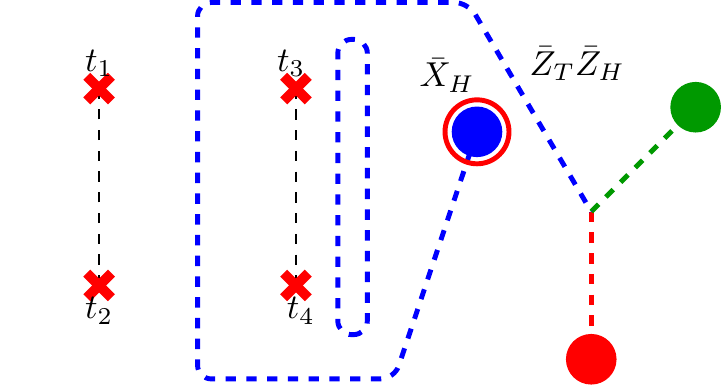}
        \subcaption{}
        \label{fig:hole-twist-braid-4}
    \end{subfigure}
     ~
    \begin{subfigure}{.45\textwidth}
        \centering
        \includegraphics[scale = .85]{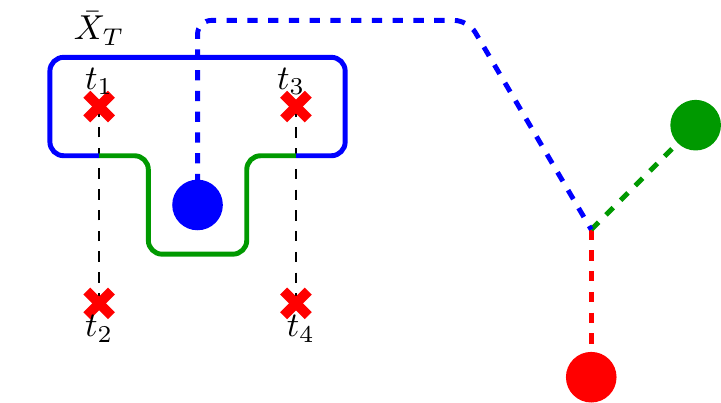}
        \subcaption{}
        \label{fig:hole-twist-braid-5}
    \end{subfigure}
     ~
    \begin{subfigure}{.45\textwidth}
        \centering
        \includegraphics[scale = .85]{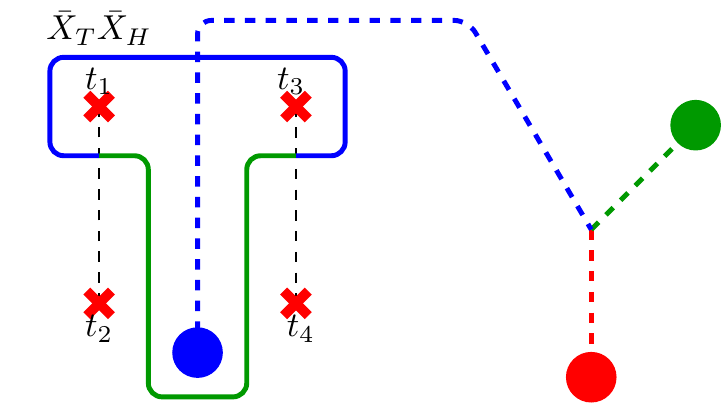}
        \subcaption{}
        \label{fig:hole-twist-braid-6}
    \end{subfigure}
     ~
    \begin{subfigure}{.45\textwidth}
        \centering
        \includegraphics[scale = .85]{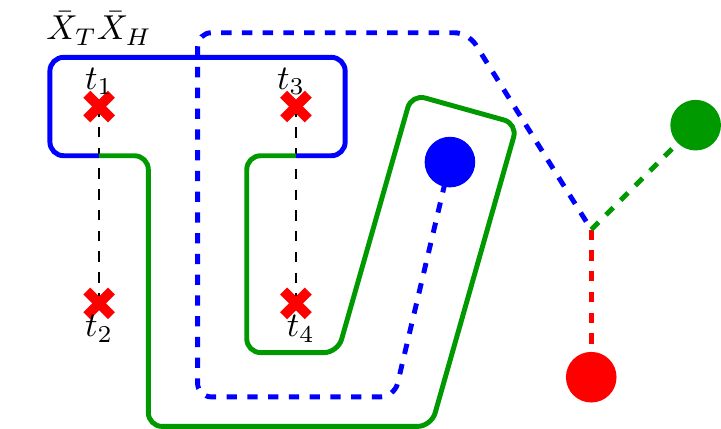}
        \subcaption{}
        \label{fig:hole-twist-braid-7}
    \end{subfigure}
    ~
    \begin{subfigure}{.45\textwidth}
        \centering
        \includegraphics[scale = .85]{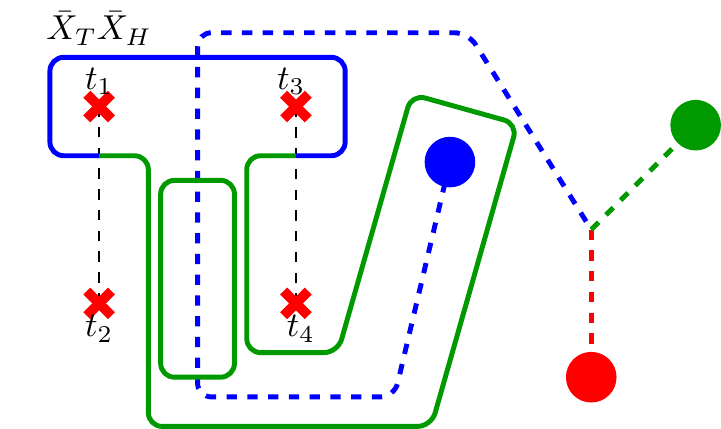}
        \subcaption{}
        \label{fig:hole-twist-braid-8}
    \end{subfigure}
\caption{Braiding hole around twists $t_3$ and $t_4$ to implement CNOT gate with qubit encoded using twists as control and qubit encoded using holes as target. (a) Logical $Z$ operator of the dual qubit is stretched as hole is braided around twists. (b) After braiding $\bar{Z}_H$ is deformed around twists $t_3$ and $t_4$ and encircles them. (c) The deformed logical $Z$ operator of the primal qubit is equivalent to $\bar{Z}_T \bar{Z}_H$. (d) The equivalence of the deformed $\bar{Z}_H$ to $\bar{Z}_T \bar{Z}_H$ can be seen by adding the stabilizer as shown. (e) The operator $\bar{X}_T$ deforms as hole is moved. This is because green string cannot have terminal points in a blue hole. (f) The operator $\bar{X}_T$ is dragged along with the hole and at the end of braiding, $\bar{X}_T$ encircles the hole. (g) The deformed $\bar{X}_T$ is equivalent to $\bar{X}_T \bar{X}_H$. (h) The equivalence stated in Fig.~\ref{fig:hole-twist-braid-7} can be seen by adding the stabilizer as shown.}
\label{fig:hole-twist-braid}
\end{figure*}

Consider the braiding as shown in Fig~\ref{fig:twist-hole-braid}.
Let $\Bar{Z}_{T} $ and $\Bar{X}_{T}$ be the $Z$ and $X$ logical operators of logical qubit encoded by twists and let $\Bar{Z}_{H} $ and $\Bar{X}_{H}$ be the $Z$ and $X$ logical operators of logical qubit encoded by holes.
The hole is braided around twists $t_2$ and $t_4$.
An interesting phenomenon occurs as the hole crosses the domain wall.
Suppose that the hole in question has blue color.
To move the hole across the domain wall, it has to pass through the same.
In doing so, we merge the hole with one of the blue modified faces.
Note that the outgoing edges from the modified faces are incident on green faces, see Fig.~\ref{fig:hex-lattice}.
When the hole is to be further moved from the domain wall, we now have to merge the modified face with one of the green faces.
When the hole is completely deformed away from the domain wall, its color will have permuted.
As the hole the crosses the first domain wall, the color changes from blue to green as shown in Fig.~\ref{fig:twist-hole-braid-1}.
Again when the hole crosses the domain wall second time, its color is changed back to blue as shown in Fig.~\ref{fig:twist-hole-braid-2}.
The hole is returned to its original position as shown in Fig.~\ref{fig:twist-hole-braid-3}.
The operator around twists $t_2$ and $t_4$ is $\bar{X}_T \bar{Z}_H$ which can be seen by adding a stabilizer as shown in Fig.~\ref{fig:twist-hole-braid-4} which decomposes the operator into $\bar{X}_T$ and $\bar{Z}_H$.
The transformation of the logical operators of the qubit encoded using twists is shown from Fig.~\ref{fig:twist-hole-braid-5} to Fig.~\ref{fig:twist-hole-braid-8}.
The logical operator transformation effected by this braiding is given in Table~\ref{tab:hole_twist_horizontal}.
From the transformations, it is clear that braiding hole around twists realizes an entangling gate. 

\begin{table}[htb]
    \centering
    \begin{tabular}{ccc}
    $\Bar{Z}_{T}$ & $\longrightarrow$ & $\Bar{Z}_{T}\Bar{X}_{H}$\\
    $\Bar{X}_{T}$ & $\longrightarrow$ & $\Bar{X}_{T}$ \\
    $\Bar{Z}_{H}$ & $\longrightarrow$ & $\Bar{X}_{T}\Bar{Z}_{H}$\\
    $\Bar{X}_{H}$ & $\longrightarrow$ & $\Bar{X}_{H}$
\end{tabular}
    \caption{Transformation of logical operators after braiding hole around twists as shown in Fig.~\ref{fig:twist-hole-braid}.}
    \label{tab:hole_twist_horizontal}
\end{table}
One can see that this transformation is equivalent to that obtained by applying Hadamard gate on the control qubit before and after CNOT gate.
This gate is used with the target qubit initialized in $|0\rangle$ to measure $X$ operator~\cite{Nielsen2010}.

\begin{lemma}[Twist-hole braiding-1]
Braiding hole encoding the dual qubit around twists $t_3$ and $t_4$ as shown in Fig.~\ref{fig:twist-hole-braid} realizes the gate 
\begin{equation}
    (H \otimes I) CNOT (H \otimes I) = |+\rangle \langle +| \otimes I + |-\rangle \langle -| \otimes X
    \label{eqn:controlled-X}
\end{equation}
with twist qubit as the control and dual qubit as the target.
\label{lm:twist-hole-braiding-1}
\end{lemma}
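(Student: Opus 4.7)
My plan is to prove the lemma by computing the Heisenberg-picture action of the braid on a generating set of logical Pauli operators for the combined twist-plus-hole system, and then identifying the resulting automorphism with the advertised unitary. Concretely, I would use as generators the operators $\bar{X}_T$ and $\bar{Z}_T$ for the twist qubit---the closed blue strings enclosing the two pairs of twists as in Fig.~\ref{fig:Qb-cc-four-twist-encoding-color}---and the operators $\bar{X}_H$ and $\bar{Z}_H$ for the dual qubit, where $\bar{X}_H$ is the loop around a distinguished hole and $\bar{Z}_H$ is the tree-shaped string-net joining the three holes (as in Fig.~\ref{fig:hole-encoding}). The braid transports the hole once around the twist pair, and the goal is to see which of these four generators pick up a companion from the other qubit.

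First I would dispose of the two operators that are preserved. The string $\bar{X}_T$ sits on faces that are disjoint from both the hole and its world-line and is therefore left completely alone by the braid. The operator $\bar{X}_H$ can be deformed to travel rigidly with the hole; since the red color-permuting twists do not permute red strings, any color changes accrued while crossing the $T$-line cancel pairwise by the time the hole returns to its starting position, recovering $\bar{X}_H$ on the nose.

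The two nontrivial transformations arise from (a) the string-net $\bar{Z}_H$ being dragged behind the moving hole and (b) the loop $\bar{Z}_T$ being deformed to avoid a hole whose color is flipped as it crosses the domain wall. For (a), after the braid the pulled-along segment of $\bar{Z}_H$ forms a closed blue loop encircling the braided twist pair. Multiplying by the plaquette stabilizers interior to that loop (compare Fig.~\ref{fig:hole-twist-braid-4}) reduces it to $\bar{X}_T$, yielding $\bar{Z}_H \mapsto \bar{X}_T \bar{Z}_H$. For (b), the key mechanism is the color change described at the start of Sec.~\ref{sec:gates-color-permuting}: a blue hole becomes green on the far side of the $T$-line (Figs.~\ref{fig:twist-hole-braid-1}--\ref{fig:twist-hole-braid-2}), and a blue $\bar{Z}_T$ string that would end on such a hole must be re-routed along green edges. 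When the hole returns to its starting position, that redirected segment closes into a loop around the hole, which is $\bar{X}_H$, giving $\bar{Z}_T \mapsto \bar{Z}_T \bar{X}_H$.

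Once the transformation table is assembled, I would compare it with a direct Heisenberg-picture computation for $U = (H \otimes I)\,\text{CNOT}\,(H \otimes I)$ using $\text{CNOT}: X_c \mapsto X_c X_t,\ Z_t \mapsto Z_c Z_t$ and $H X H = Z$; the resulting action
\[
X_c \mapsto X_c,\quad Z_c \mapsto Z_c X_t,\quad X_t \mapsto X_t,\quad Z_t \mapsto X_c Z_t
\]
matches the braid-induced automorphism with the twist qubit as control and the dual hole qubit as target, proving the lemma. The main obstacle is step (b): giving a clean stabilizer-level argument that the color-flip at the $T$-line forces exactly one factor of $\bar{X}_H$ to be attached, with no stray loops or missing phases. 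This reduces to a careful bookkeeping of string endpoints across the modified plaquettes along the hole's world-line using the string algebra of Sec.~\ref{sec:color}, and it is the only place where the particular geometry of color permuting twists (as opposed to generic surface-code twists) enters the argument in an essential way.
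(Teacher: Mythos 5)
Your proposal follows essentially the same route as the paper: track the four logical generators $\bar{X}_T,\bar{Z}_T,\bar{X}_H,\bar{Z}_H$ in the Heisenberg picture as the hole is braided, use stabilizer multiplication to reduce the deformed strings to products of canonical logical operators (yielding exactly the transformation table $\bar{Z}_T\mapsto\bar{Z}_T\bar{X}_H$, $\bar{Z}_H\mapsto\bar{X}_T\bar{Z}_H$ of Table~\ref{tab:hole_twist_horizontal}), and match the resulting automorphism to $(H\otimes I)\,\mathrm{CNOT}\,(H\otimes I)$. The paper carries the deformation argument through the panels of Fig.~\ref{fig:twist-hole-braid} rather than in prose, but the content, including the color-flip of the hole at the domain wall as the mechanism attaching $\bar{X}_H$ to $\bar{Z}_T$, is the same.
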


\begin{remark}
The string encircling twists $t_2$ and $t_4$  in Fig.~\ref{fig:Qb-cc-four-twist-encoding-color} is equivalent to logical $X$ operator.
\end{remark}

One can verify that by using primal qubit, i.e. the operator encircling the hole is $\Bar{Z}$ and the tree like operator is $\Bar{X}$, and braiding around $t_2$ and $t_4$ the following transformation is realized:
$\Bar{Z}_T \rightarrow \Bar{Z}_T \Bar{Z}_H $, $ \Bar{X}_T \rightarrow \Bar{X}_T$, $ \Bar{Z}_H \rightarrow \Bar{Z}_H$, $\Bar{X}_H \rightarrow \Bar{X}_T\Bar{X}_H$.
One can see that this is CNOT gate between primal qubit and qubit encoded with twists with the primal qubit as control.

To realize CNOT gate between qubits encoded using twists and holes, we use primal qubit.
We do Pauli frame update on the hole so that the logical $X$ and $Z$ are interchanged.
If the primal qubit encoded using holes is deformed around twists $t_3$ and $t_4$, then the transformation given in Table~\ref{tab:hole_twist_vertical} is realized.
The transformation of the logical operators of the primal qubit and the qubit encoded by twists is shown in Fig.~\ref{fig:hole-twist-braid}.
This is CNOT gate between qubit encoded by twist and primal hole with twist qubit as the control.
\begin{table}[htb]
    \centering
    \begin{tabular}{ccc}
    $\Bar{Z}_{T}$ & $\longrightarrow$ & $\Bar{Z}_{T}$\\
    $\Bar{X}_{T}$ & $\longrightarrow$ & $\Bar{X}_{T}\Bar{X}_H$ \\
    $\Bar{Z}_{H}$ & $\longrightarrow$ & $\Bar{Z}_T\Bar{Z}_{H}$\\
    $\Bar{X}_{H}$ & $\longrightarrow$ & $\Bar{X}_{H}$
\end{tabular}
\caption{Transformation of logical operators after braiding hole around twists as shown in Fig.~\ref{fig:hole-twist-braid}.}
\label{tab:hole_twist_vertical}
\end{table}

\begin{lemma}[Twist-hole braiding-2]
Performing Pauli update $\bar{Z} \leftrightarrow \bar{X}$ on the primal qubit and braiding hole encoding primal qubit around twists $t_3$ and $t_4$ as shown in Fig.~\ref{fig:hole-twist-braid} realizes CNOT gate with qubit encoded using twists as control and qubit encoded using holes as target.
\label{lm:twist-hole-braiding-2}
\end{lemma}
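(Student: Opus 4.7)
The plan is to verify the statement by tracking how each of the four canonical logical operators $\bar{Z}_T, \bar{X}_T, \bar{Z}_H, \bar{X}_H$ is mapped under the braid, and then to compare the resulting map with the defining action of the CNOT gate. By the action of CNOT with control $a$ and target $b$, namely $Z_a\mapsto Z_a$, $X_a\mapsto X_a X_b$, $Z_b\mapsto Z_a Z_b$, $X_b\mapsto X_b$, it suffices to reproduce exactly the four entries of Table~\ref{tab:hole_twist_vertical}; so the argument reduces entirely to a string-algebra calculation using the framework developed for color permuting twists.

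First, I would fix the Pauli frame on the hole qubit. Before the update, the primal encoding has $\bar{Z}_H$ encircling the (blue) hole and $\bar{X}_H$ being the tree operator of Fig.~\ref{fig:primal-qubit-hole}. After performing $\bar{Z}\leftrightarrow \bar{X}$, the roles are swapped: $\bar{X}_H$ is the string encircling the hole and $\bar{Z}_H$ is the tree operator. This re-labeling is purely classical, so the physical strings remain the same; what changes is how each physical string will be identified with a logical symbol at the end.

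Next, I would trace the four strings through the braid shown in Fig.~\ref{fig:hole-twist-braid}. The operator $\bar{Z}_T$, a blue string encircling $t_1,t_2$, has support disjoint from both the hole's swept region and from $t_3,t_4$, so no deformation is forced and $\bar{Z}_T\mapsto \bar{Z}_T$. The operator $\bar{X}_H$, the string encircling the hole in the updated frame, moves rigidly with the hole and returns to its starting configuration, giving $\bar{X}_H\mapsto \bar{X}_H$. The two nontrivial transformations come from the operators that are forced to be dragged: $\bar{X}_T$ (a green string between $t_2$ and $t_4$) cannot terminate on the moving blue hole, so it is deformed as in Fig.~\ref{fig:hole-twist-braid-5}--\ref{fig:hole-twist-braid-6} and ends up encircling the hole. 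Adding the stabilizer suggested in Fig.~\ref{fig:hole-twist-braid-8} splits this deformed string into the original $\bar{X}_T$ together with the string encircling the hole, which is $\bar{X}_H$ in the updated frame; hence $\bar{X}_T\mapsto \bar{X}_T\bar{X}_H$. Similarly, the tree operator $\bar{Z}_H$ is stretched along the hole's path across the $T$-line between $t_3$ and $t_4$, as in Fig.~\ref{fig:hole-twist-braid-1}--\ref{fig:hole-twist-braid-2}, and decomposes via the stabilizer of Fig.~\ref{fig:hole-twist-braid-4} into the original $\bar{Z}_H$ together with a blue string around $t_3,t_4$, which is logically equivalent to $\bar{Z}_T$ after adding gauge operators inherited from the four-twist encoding of Fig.~\ref{fig:Qb-cc-four-twist-encoding-color}; thus $\bar{Z}_H\mapsto \bar{Z}_T\bar{Z}_H$.

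Comparing with the CNOT action directly reproduces Table~\ref{tab:hole_twist_vertical} with $T$ as control and $H$ as target, which proves the lemma. I expect the main obstacle to be the last two deformations: one has to argue carefully that the dragged green string is genuinely equivalent (modulo stabilizers and gauge operators of both the twist encoding and the hole encoding) to $\bar{X}_T\bar{X}_H$, and likewise for the tree operator. Here the color change across the $T$-line of the red twists $t_3,t_4$ plays a subtle role, because the blue hole becomes green on crossing the domain wall (as already used in Fig.~\ref{fig:twist-hole-braid-1}), and one must check that this recoloring is consistent with the primal string identifications. Once that consistency is established, the four entries line up and the proof is complete.
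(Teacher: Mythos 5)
Your proposal is correct and follows essentially the same route as the paper: the paper establishes this lemma precisely by tracking the four logical operators $\bar{Z}_T,\bar{X}_T,\bar{Z}_H,\bar{X}_H$ through the braid in Fig.~\ref{fig:hole-twist-braid}, using stabilizer additions to identify the deformed strings with $\bar{Z}_T\bar{Z}_H$ and $\bar{X}_T\bar{X}_H$, recording the result in Table~\ref{tab:hole_twist_vertical}, and recognizing that map as CNOT with the twist qubit as control. The only difference is presentational: the paper delegates the deformation and equivalence arguments to the figure panels and their captions, whereas you spell out the same steps (including the Pauli-frame relabeling of the primal hole qubit) in prose.
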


\begin{remark}
The string encircling twists $t_3$ and $t_4$ in Fig.~\ref{fig:Qb-cc-four-twist-encoding-color} is equivalent to logical $Z$ operator.
\end{remark}

Similarly, by using primal qubit encoded using holes and braiding around $t_3$ and $t_4$, one can realize controlled phase gate between twist and hole with twist as the control.

\begin{figure*}
\centering
    \begin{subfigure}{.45\textwidth}
          \centering
          \includegraphics[scale = .85]{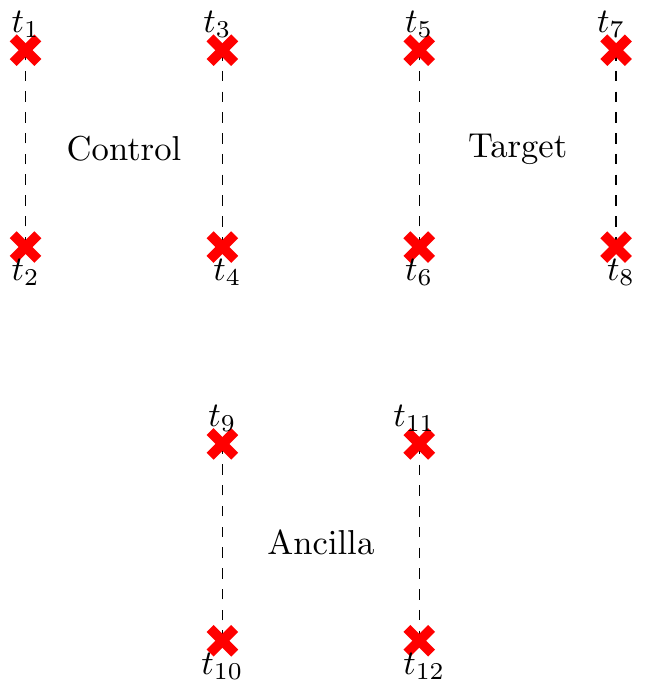}
          \subcaption{}
          \label{fig:color-perm-CNOT-1}
    \end{subfigure}
    ~
    \begin{subfigure}{.45\textwidth}
          \centering
          \includegraphics[scale = .85]{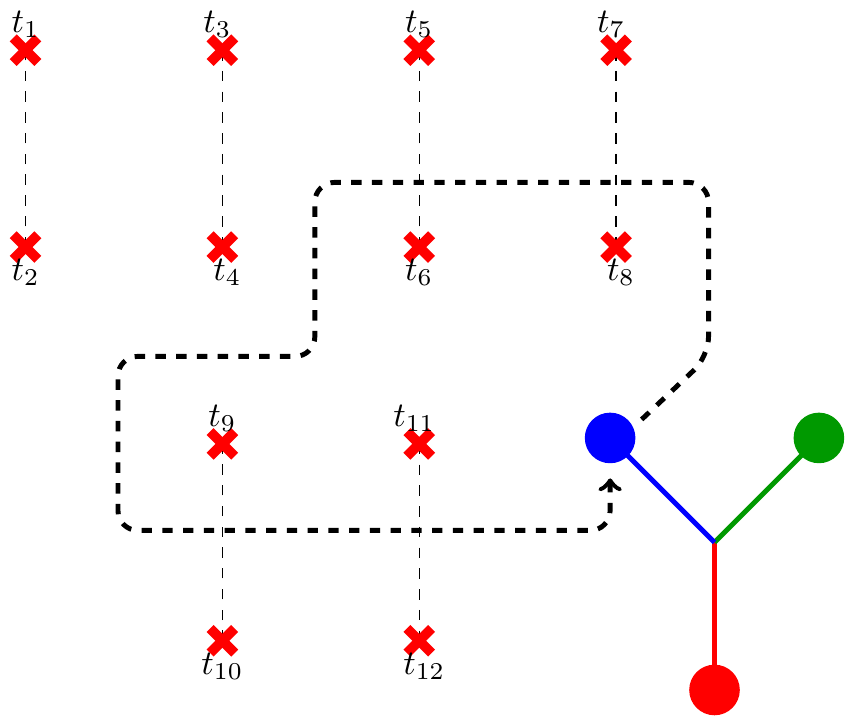}
          \subcaption{}
          \label{fig:color-perm-CNOT-2}
    \end{subfigure}
    ~
    \begin{subfigure}{.45\textwidth}
          \centering
          \includegraphics[scale = .8]{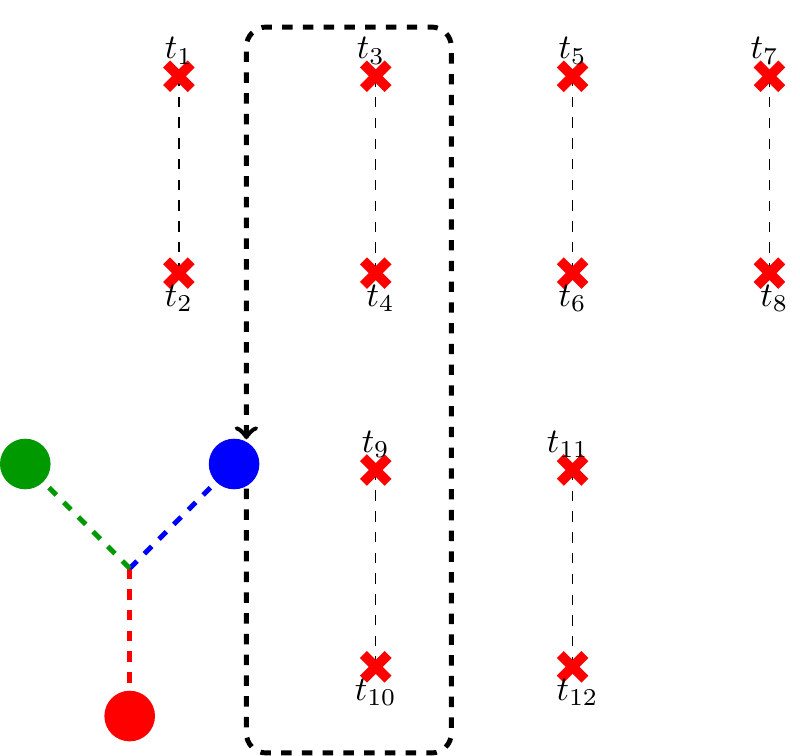}
          \subcaption{}
          \label{fig:color-perm-CNOT-3}
    \end{subfigure}
    ~
    \begin{subfigure}{.45\textwidth}
          \centering
          \includegraphics[scale = .8]{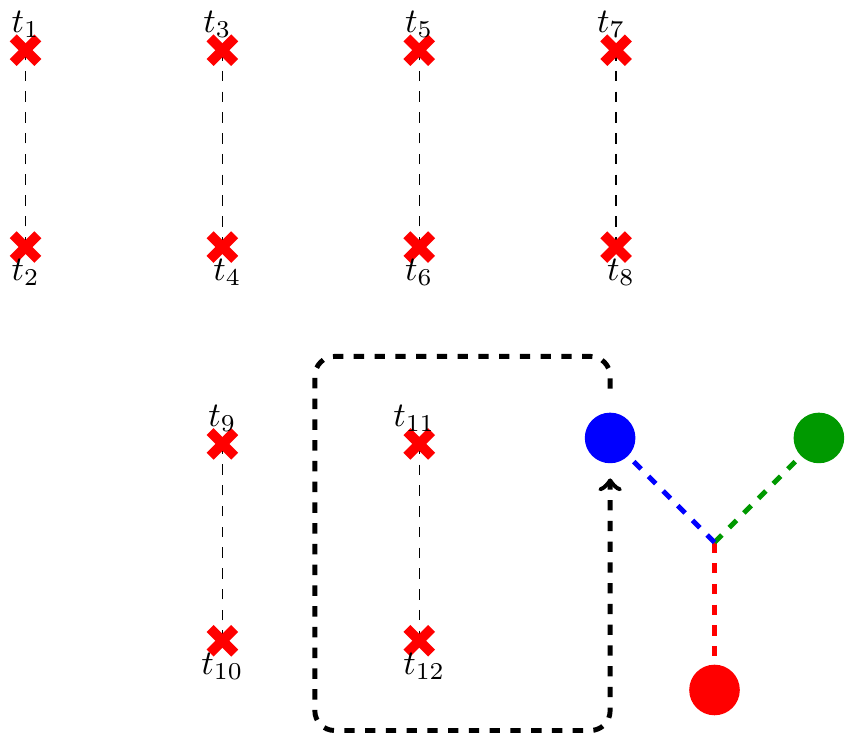}
          \subcaption{}
          \label{fig:color-perm-CNOT-4}
    \end{subfigure}
    \caption{Protocol for implementing CNOT gate using twists and holes. (a) This protocol requires an ancilla to implement CNOT gate. Twists used for encoding control (C), target (T) and ancilla (A) qubits are shown. Logical $Z$ and $X$ operators are in the canonical form. (b) Braiding hole along the operator $\Bar{X}_T \Bar{X}_A$ and measuring it in the end to implement joint $X$ measurement on target and ancilla qubits. (c) Joint $Z$ measurement on control and ancilla qubits is performed by braiding hole along the string $\Bar{Z}_C \Bar{Z}_A$ and measuring it. (d) Ancilla qubit is measured by braiding hole along $\Bar{X}_A$ which is the string encircling twists $t_{11}$ and $t_{12}$  after Pauli frame update measuring it.}
    \label{fig:color-perm-CNOT}
\end{figure*}

\subsection{CNOT Gate} 
We now present the encoded version of the CNOT gate protocol given in Table~\ref{tab:cc_cnot}. 
This protocol makes use of the encoded gates realized using twists and holes as discussed in the previous subsection.

\begin{table}[htb]
    \centering
    \begin{tabular}{ll}
    \hline
    & Protocol for implementing Encoded CNOT gate \\
    \hline 
    (1) & Prepare ancilla encoded using twists in the state $|\Bar{0}\rangle$. \\
    (2) & Perform $\Bar{X}_T \Bar{X}_A$ parity measurement on ancilla and target qubit \\
    & by braiding hole around twists encircled by the operator $\Bar{X}_T \Bar{X}_A$.\\
    (3) & Perform  $\Bar{Z}_C \Bar{Z}_A$ parity measurement on ancilla and control qubit\\
        & by braiding hole around twists encircled by  the operator $\Bar{Z}_C \Bar{Z}_A$.\\
    (4) & Perform Hadamard on ancilla qubit and measure it in $X$ basis.\\
    (5) & Measure the ancillas encoded using holes and apply\\
     & correction as given in Equation~\ref{eqn:CNOT}.\\
    \hline
    \end{tabular}
    \caption{Protocol for implementing encoded CNOT gate by joint measurements with an ancilla.}
    \label{tab:cc_cnot_encoded}
\end{table}

\begin{remark}
Fresh ancillas are prepared for steps $2$ to $4$ in the protocol given in Table~\ref{tab:cc_cnot_encoded} for encoded CNOT gate.
\end{remark}

The procedure for implementing the encoded CNOT gate using color permuting twists is shown in Fig.~\ref{fig:color-perm-CNOT}.
In Fig.~\ref{fig:color-perm-CNOT-1}, control, target and ancilla qubits encoded using a quadruple of twists is shown.
We abbreviate control, target and ancilla qubits as C, T and A respectively.
The ancilla is initialized in the $|\Bar{0}\rangle$ state.
Joint $\Bar{X}_T \Bar{X}_A$ parity measurement by using dual qubit encoded using holes is shown in Fig.~\ref{fig:color-perm-CNOT-2}.
This parity measurement is equivalent to doing the gate indicated in Equation~\eqref{eqn:controlled-X} with target and ancilla $A$ qubits as control and the hole as target.
Recall that braiding a dual qubit around twists encircled by logical $X$ operator realizes the gate in Equation~\eqref{eqn:controlled-X} with twist qubit being control and dual qubit being target.
Braiding hole as shown in Fig.~\ref{fig:color-perm-CNOT-2} realizes the gate in Equation~\eqref{eqn:controlled-X} between target qubit and ancilla qubit encoded using twists as control and qubit encoded using holes as target.
The dual qubit is initialized in the  $|\bar{0}\rangle$ state, braided around twists as shown and measured in the end.
The measurement outcome reveals the parity of the operator $\Bar{X}_T \Bar{X}_A$.
Similarly, to carry out $\Bar{Z}_C \Bar{Z}_A$ parity measurement, a primal qubit encoded by holes is created and its logical $X$ and $Z$ operators are interchanged by Pauli frame update. 
The primal qubit is initialized in the $+1$ eigenstate of $\Bar{X}_H$ so that after Pauli frame update, the state of the primal qubit is $|\bar{0}\rangle$, deformed around twists and measured in the end, see Fig.~\ref{fig:color-perm-CNOT-3}.
This braiding realizes CNOT gate between control and ancilla qubits encoded using twists as control and the dual qubit encoded with holes as target.
Finally, Hadamard on ancilla is performed by Pauli frame update and measurement in the $X$ basis is done as shown in Fig.~\ref{fig:color-perm-CNOT-4}.

We summarize the results of this section in the theorem below.
\begin{theorem}
Using color permuting twists, single qubit Clifford gates are realized by Pauli frame update and CNOT gate by joint parity measurements with an ancilla.
\label{thm:color-perm-gates}
\end{theorem}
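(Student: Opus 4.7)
The plan is to decompose the claim into its two parts and verify each by reducing to the constructions already set up. Since $\{H, S, \text{CNOT}\}$ generates the Clifford group, it suffices to realize $S$, $H$, and CNOT. The strategy is to observe that all single qubit Clifford gates act on the logical Pauli group by a permutation of $\{\bar{X}, \bar{Y}, \bar{Z}\}$ up to signs, so whenever our encoding allows us to relabel the canonical representatives of these operators, the gate is implemented for free in the Pauli frame.

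For the single qubit part, I would first recall that in the four-twist encoding of Fig.~\ref{fig:Qb-cc-four-twist-encoding-color}, the canonical $\bar{Z}$, $\bar{X}$, $\bar{Y}$ are represented by closed strings encircling distinct pairs of twists, and these strings are mutually interchangeable labels in the classical bookkeeping of the code. Phase gate $S$ acts as $\bar{X}\mapsto \bar{Y}$, $\bar{Y}\mapsto -\bar{X}$, $\bar{Z}\mapsto \bar{Z}$, which is effected by swapping the labels assigned to the $\bar X$ and $\bar Y$ strings in the Pauli frame. Hadamard acts as $\bar{X}\leftrightarrow \bar{Z}$, $\bar{Y}\mapsto -\bar{Y}$, implemented analogously by swapping the $\bar X$ and $\bar Z$ labels. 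Since subsequent measurements are tracked through the updated frame (as indicated in the discussion following Fig.~\ref{fig:twist-hole-braid}), both gates are realized without any physical operation on the lattice.

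For the CNOT, I would step through the ancilla-based protocol of Table~\ref{tab:cc_cnot} at the logical level. The ancilla, encoded by a quadruple of twists, is prepared in $|\bar{0}\rangle$. The joint $\bar{X}_T\bar{X}_A$ measurement is implemented by initializing a dual hole qubit in $|\bar{0}\rangle$, braiding it around the twist pair whose encircling string is $\bar{X}_T\bar{X}_A$, and measuring the hole; by Lemma~\ref{lm:twist-hole-braiding-1} this braiding realizes the gate in Equation~\eqref{eqn:controlled-X} with the twist qubits as controls and the hole as target, which upon the prescribed preparation and measurement returns the joint $XX$ parity. Similarly, the $\bar{Z}_C\bar{Z}_A$ measurement is realized by a primal hole with Pauli frame swap, braided around the twist pair whose encircling string is $\bar{Z}_C\bar{Z}_A$, invoking Lemma~\ref{lm:twist-hole-braiding-2}. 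The final $\bar X$ measurement of the ancilla and the corresponding Hadamard are again by Pauli frame update. Combining the outcomes with Equation~\eqref{eqn:CNOT} yields CNOT between control and target up to Pauli corrections that are tracked classically.

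The main obstacle I anticipate is the careful verification that the logical strings encircling the twist pairs used for the two joint parity measurements are indeed equivalent to $\bar{X}_T\bar{X}_A$ and $\bar{Z}_C\bar{Z}_A$, respectively, in the $\lfloor t/3\rfloor$-encoded layout of Fig.~\ref{fig:color-perm-CNOT}; this requires deforming the canonical logical strings through stabilizers and modified faces while keeping track of color changes across domain walls and $T$-lines, exactly as done for the single-qubit braiding in Section~\ref{sec:gates-charge-permuting}. Once this geometric equivalence is established, the theorem follows by assembling Lemmas~\ref{lm:twist-hole-braiding-1} and \ref{lm:twist-hole-braiding-2} with the Pauli frame arguments above.
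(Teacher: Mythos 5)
Your proposal is correct and follows essentially the same route as the paper: single-qubit Clifford gates ($S$ and $H$) by relabeling the canonical logical operators in the Pauli frame, and CNOT via the ancilla-based joint-parity protocol of Table~\ref{tab:cc_cnot_encoded}, with the $\bar{X}_T\bar{X}_A$ and $\bar{Z}_C\bar{Z}_A$ measurements implemented by hole--twist braiding as in Lemmas~\ref{lm:twist-hole-braiding-1} and~\ref{lm:twist-hole-braiding-2}. The geometric verification you flag (that the deformed strings encircling the relevant twist pairs equal the desired joint logical operators) is exactly what the paper carries out in Figs.~\ref{fig:twist-hole-braid}--\ref{fig:color-perm-CNOT}.
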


So far we have discussed implementation of Clifford gates.
To achieve universality, we need a non-Clifford gate.
Non-Clifford gate is implemented using the technique of magic state distillation.
A discussion on the implementation of non-Clifford gate is presented in Appendix~\ref{sec:non-clifford}.
The proposed protocol  is a minor variation of the one given in Ref.~\cite{Bravyi2006}.

\section{Conclusion}
\label{sec:conclusion}
In this paper, we have presented a systematic way to introduce charge permuting and color permuting twists in $2$-colexes.
We have also discussed the coding theoretic aspects of the same.
We have shown that Clifford gates can be implemented using charge permuting twists by braiding alone.
Encoded gates in the case of color permuting twists are implemented by a combination of Pauli frame update for single qubit gates and joint parity measurements with an ancilla for CNOT gate.
Joint measurements are carried out by braiding holes around twists.
To implement non-Clifford gate, we use magic state injection.
A direction for future research could be to use techniques other than magic state distillation such as the one presented in Ref.~\cite{Brown2020} which could be used in the case of color codes with twists.
Decoding color codes with twists is another fruitful area for further exploration.

\section*{Acknowledgement}
We thank the reviewers for their comments which helped in improving the presentation of the paper.
This research was supported by the Science and Engineering Research Board, Department of Science and Technology under Grant No. EMR/2017/005454.

\section*{Appendix}
\appendix
\section{Stabilizer commutation for charge permuting twists}
\label{sec:charge-stabilizer}
As mentioned before, we have three types of faces in the lattice: twists ($\tau$), faces through which domain wall passes through ($f_D$) and faces that are neither twists nor through which domain wall passes through ($f_{\Bar{D}}$).
When any two faces are not adjacent, then they do not share common vertices and hence the stabilizers defined on them commute.
The stabilizer commutation in the case of adjacent faces needs to be checked.
We have to check commutation between the six cases as listed below.

\begin{compactenum}[a)]
    \item \emph{$\tau$ and $\tau$}: 
    When the two twists are adjacent, then they share exactly an edge.
    The Pauli operators corresponding to the two stabilizers are same on each of the common vertices.
    Hence, commutation follows.
    \item \emph{$\tau$ and $f_D$}: These faces share exactly an edge.
    The Pauli operators corresponding to the two stabilizers differ at each common vertex ($X$ on twist and $Y$ and $Z$ on $f_D$ in the case of $X$ twists).
    Hence, commutation follows.
    \item \emph{$\tau$ and $f_{\Bar{D}}$}: The Pauli operators for the twist stabilizer and the normal face stabilizers differ on the common vertices shared between twist and normal faces, . Hence, they commute.
    \item \emph{$f_D$ and $f_D$}: The Pauli operators corresponding to the two stabilizers, $B_{f,1}$ and $B_{f,2}$ are different at each vertex. 
    Note that all the faces have an even number of edges and hence these stabilizer generators anti-commute an even number of times. 
    Therefore, these stabilizers commute.
    \item \emph{$f_D$ and $f_{\Bar{D}}$}: Along the common edge to these faces, the Pauli operators are either same or different. Hence, commutation follows.
    \item \emph{$f_{\Bar{D}}$ and $f_{\Bar{D}}$}: Stabilizers commute as the Pauli operators along the common edge are either same or different.
\end{compactenum}

\onecolumngrid
\section{Braiding of charge permuting twists}
\label{sec:charge-braiding}
In this section, we show that the product of  $\bar{X}_1$ and $\bar{Z}_1$ as shown in Fig.~\ref{fig:equivalent_LO} is equivalent to logical $\bar{Y}_1$ operator up to gauge, see Fig.~\ref{fig:Qb-cc-logY}.
For convenience we drop the subscripts.

\begin{figure}[htb]
    \centering
    \begin{subfigure}{.45\textwidth}
        \centering
        \includegraphics[scale = .75]{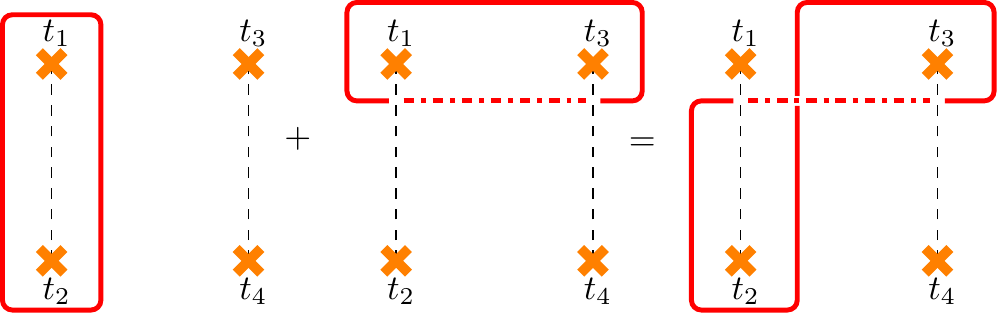}
        \subcaption{}
        \label{fig:Qb-cc-logY-3}
    \end{subfigure}
    ~
    \begin{subfigure}{.45\textwidth}
        \centering
        \includegraphics[scale = .75]{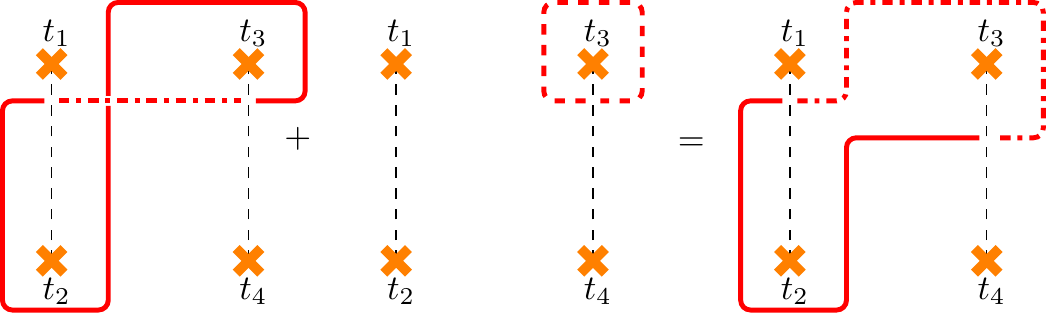}
        \subcaption{}
        \label{fig:Qb-cc-logY-4}
    \end{subfigure}
    \caption{The canonical $\Bar{Y}$ which is the product of $\Bar{Z}$ and $\Bar{X}$ is equivalent to the string operator shown in Fig.~\ref{fig:Qb-cc-logY-4} up to a gauge operator. Note that the logical $X$ and $Z$ in Fig.~\ref{fig:Qb-cc-logY-3} are not canonical. (a) The logical $Z$ are $X$ operators obtained by adding gauge operators are combined. The resulting string encircles twists $t_2$ and $t_3$ and crosses itself once. (b) The self crossing string is brought to the canonical form by adding the stabilizer defined on twist face $t_3$.}
    \label{fig:Qb-cc-logY}
\end{figure}

The deformation of logical $X$ operator to logical $Y$ operator after braiding twists $t_1$ and $t_2$ is shown in Fig.~\ref{fig:cp}.
The string in Fig.~\ref{fig:Qb-cc-charge-Log-X-deformation-9} is logical $Y$ operator up to a gauge operator. 

\begin{figure}[htb]
    \centering
    \begin{subfigure}{.225\textwidth}
        \centering
        \includegraphics[scale = .75]{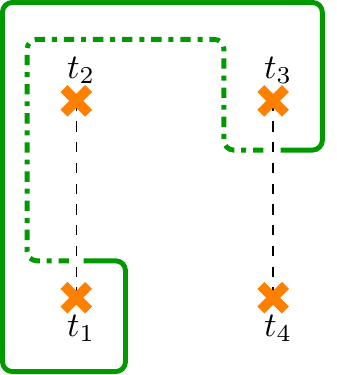}
        \subcaption{}
        \label{fig:Qb-cc-charge-Log-X-deformation-2}
    \end{subfigure}
    ~
    \begin{subfigure}{.225\textwidth}
        \centering
        \includegraphics[scale = .75]{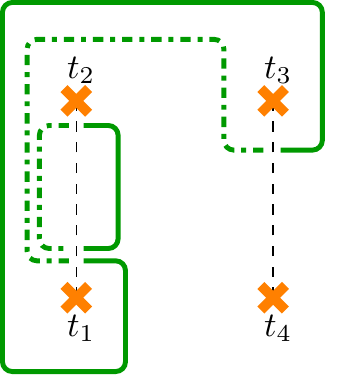}
        \subcaption{}
        \label{fig:Qb-cc-charge-Log-X-deformation-3}
    \end{subfigure}
    ~
    \begin{subfigure}{.225\textwidth}
        \centering
        \includegraphics[scale = .75]{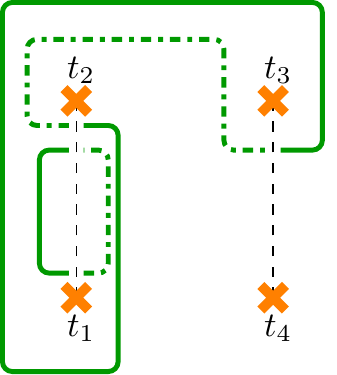}
        \subcaption{}
        \label{fig:Qb-cc-charge-Log-X-deformation-4}
    \end{subfigure}
    ~
    \begin{subfigure}{.225\textwidth}
        \centering
        \includegraphics[scale = .75]{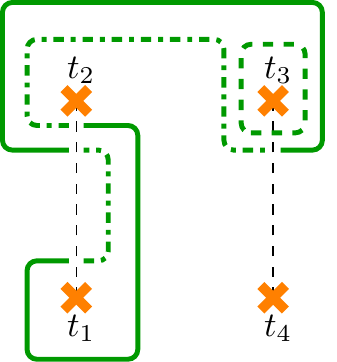}
        \subcaption{}
        \label{fig:Qb-cc-charge-Log-X-deformation-5}
    \end{subfigure}
    ~
    \begin{subfigure}{.225\textwidth}
        \centering
        \includegraphics[scale = .75]{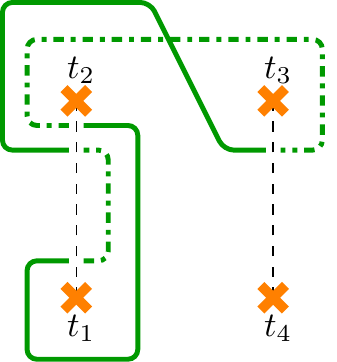}
        \subcaption{}
        \label{fig:Qb-cc-charge-Log-X-deformation-6}
    \end{subfigure}
    ~
    \begin{subfigure}{.225\textwidth}
        \centering
        \includegraphics[scale = .75]{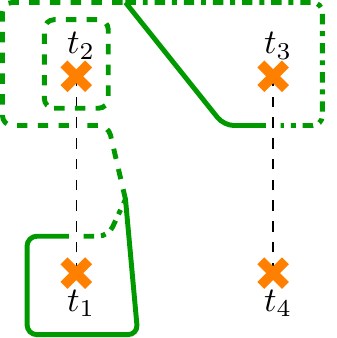}
        \subcaption{}
        \label{fig:Qb-cc-charge-Log-X-deformation-7}
    \end{subfigure}
     ~
    \begin{subfigure}{.225\textwidth}
        \centering
        \includegraphics[scale = .75]{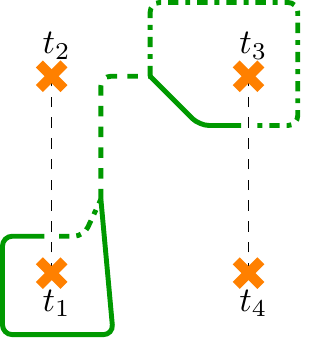}
        \subcaption{}
        \label{fig:Qb-cc-charge-Log-X-deformation-8}
    \end{subfigure}
         ~
    \begin{subfigure}{.225\textwidth}
        \centering
        \includegraphics[scale = .75]{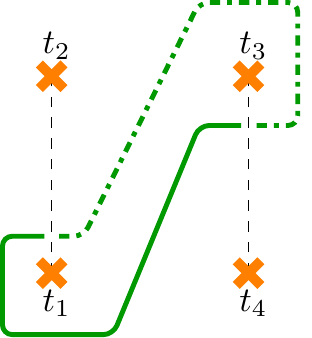}
        \subcaption{}
        \label{fig:Qb-cc-charge-Log-X-deformation-9}
    \end{subfigure}
\caption{Deforming logical $X$ operator after braiding twists $t_1$ and $t_2$ to logical $Y$ operator. The logical $X$ operator after braiding is deformed as shown by adding stabilizers. (a) On braiding twists $t_1$ and $t_2$, the logical $X$ operator is deformed as shown. (b) We add the stabilizer of the faces between domain wall to deform the string shown in Fig.~\ref{fig:Qb-cc-charge-Log-X-deformation-2}. (c) We now add the other stabilizer of faces between domain wall to the string obtained from Fig~\ref{fig:Qb-cc-charge-Log-X-deformation-3}. This stabilizer has Pauli $Z$ and Pauli $Y$ on the left and right of the domain wall respectively. (d) Upon adding the stabilizers of faces between domain wall, a portion of the string is deformed to the right of domain wall. We now add the stabilizer of twist $t_3$. (e) Adding twist stabilizer to the string obtained from Fig.~\ref{fig:Qb-cc-charge-Log-X-deformation-4}, we obtain the string as shown. The $Z$ and $Y$ portion of the string around $t_3$ is interchanged and the string crosses itself. (f) The $Z$ and $Y$ portions of the string are combined to form $X$ string. To this, we add the stabilizer of twist $t_2$ braiding the string completely to the right side of the domain wall. (g) The deformed string as a result of adding the stabilizer of twist $t_2$. Note that the $X$ string is now on the right side of the domain wall. (h) The $X$ string is decomposed into $Z$ and $Y$ strings.}
\label{fig:cp}
\end{figure}
\medskip
\twocolumngrid

\section{Existence of T-line}
\label{sec:t-line}
\begin{lemma}[Existence of $T$-lines]
Every pair of $c$-color permuting twists created together are connected by a $T$-line.
\label{lm:t-lines}
\end{lemma}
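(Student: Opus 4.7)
The plan is to track, during the execution of Algorithm~\ref{alg:color-permuting-twists}, the edges introduced at each step, and show that their union forms a $T$-line connecting the twist pair produced along a single path $\pi_i$ in the dual lattice. The key idea is that each iteration of creation/movement contributes exactly one edge across which local three-face-colorability must fail, and that conditions C1 and C2 are precisely what guarantees that these edges chain together into a single path.

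First, I would isolate one invocation of the creation/movement step acting on an edge $e_j^{(i)}$ of the path. Removing the endpoints $u, v$ of $e_j^{(i)}$ leaves four two-valent vertices, and the algorithm introduces two new edges among them. The two faces of color $c$ adjacent to $e_j^{(i)}$ are merged with the neighboring $c'$- and $c''$-colored faces, and exactly one of the two new edges separates faces that, under the merging rule of step~4, end up carrying the same color. Call this edge the \emph{T-edge} of the iteration. It necessarily has the character of a $c$-colored edge because in a proper $2$-colex any edge between a $c'$-face and a $c''$-face is colored $c$, and it is exactly the edge across which three-face-colorability has broken down.

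Second, I would verify that the T-edges of successive iterations of the algorithm are incident to a common $c$-colored face of the deformed lattice, and therefore can be chained into a single path. This is the role of conditions C1 and C2. Condition C1 ensures that the $c$-colored faces corresponding to consecutive $c$-colored vertices of $\pi_i$ are adjacent in the primal lattice, so their T-edges necessarily share a $c$-colored face and hence a vertex of that face. Condition C2 rules out the degenerate local configurations in which the merging choices would produce two T-edges with no shared endpoint, or in which the path would be forced to revisit the same $c$-face in a self-intersecting manner. Once this is established, chaining the T-edges in order along $\pi_i$ and passing to the $c$-shrunk lattice produces a path of edges whose first and last endpoints are precisely the vertices dual to the two twist faces, because the first and last edges of $\pi_i$ in the dual lattice terminate at those $c$-colored vertices.

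The main obstacle is the middle step. The coloring assignment for the merged intermediate faces is a choice in the algorithm, and different choices could in principle produce different candidate $T$-lines; the hard part is checking that a globally consistent choice always exists which chains the local T-edges into a single connected path terminating at the two twist faces, and that the exclusion provided by C2 is strong enough to rule out the small number of local configurations where no such choice works. The remaining checks (endpoints lying on the twist faces, and same-color faces meeting across each T-edge) reduce to routine local verifications at each iteration of the algorithm.
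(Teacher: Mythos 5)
Your route is genuinely different from the paper's, but as written it has a gap at exactly the step you flag as the ``main obstacle,'' and that step is the whole content of the lemma. You propose to track the edges introduced by each iteration of Algorithm~\ref{alg:color-permuting-twists} and argue that conditions C1 and C2 chain them into a single path; but you do not actually establish that the per-iteration T-edges share endpoints, nor that a globally consistent coloring choice for the merged faces exists --- you only assert that these are ``routine local verifications'' while simultaneously calling the consistency question ``the hard part.'' Since the color assigned to each merged face is a free choice in step~4 of the algorithm, an argument that depends on tracking those choices must either fix a canonical choice and verify the chaining for it, or show the chaining is choice-independent; neither is done.

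The paper's proof avoids this dependency on the algorithm's execution entirely, using only a parity argument about the final lattice. A twist face has an odd number of edges, so the faces in $A_2(\tau_1)$ cannot be consistently two-colored; hence some two adjacent faces $f_{1,1}, f_{1,2}$ around $\tau_1$ carry the same color, and their common edge is $c$-colored and incident on a normal $c$-face $g_1$. Because $f_{1,1}, f_{1,2} \in A_2(g_1)$ share a color, the faces around $g_1$ also cannot be two-colored, producing another same-color adjacency and another $c$-colored clash edge; iterating propagates the clash along $c$-colored edges from one normal $c$-face to the next, and the chain can only terminate at a face whose neighborhood \emph{can} fail to be two-colorable for a different reason, namely another odd face --- the partner twist. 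This is both shorter and more robust than your approach: it is indifferent to the order of operations and to the coloring choices for merged faces, because the existence of a same-color adjacency around every odd face is forced regardless. If you want to salvage your constructive argument, you would need to carry out the case analysis showing that consecutive T-edges along $\pi_i$ always meet at a common $c$-face under C1 and C2; the paper's parity argument is the cleaner way to get existence without that bookkeeping.
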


\begin{proof}
Without loss of generality, assume that the twist faces have red color i.e. the twists permute the color of blue and green anyons.
Consider a twist face in the pair, call it $\tau_1$.
Since twists are faces with an odd number of edges, it is not possible to color faces around them consistently with two colors.
Therefore, there exists a pair of faces $f_{1,1}$ and $f_{1,2}$ which are adjacent and share the same color.
Also, from the construction, one of these faces, say, $f_{1,2}$ is a modified face.
The common edge to faces $f_{1,1}$ and $f_{1,2}$ is red and is incident on a normal red face, call it $g_1$.
Note that $ f_{1,1}, f_{1,2} \in A_2(g_1)$.
As a consequence, faces around $g_1$ also cannot be colored consistently with two colors.
Therefore, there exists a face $f_{2,1} \in A_2(g_1)$ such that $c(f_{2,1}) = c(f_{1,1})$.
The face $f_{2,1}$ can be adjacent to either $f_{1,1}$ or $f_{1,2}$.
Assuming that it is adjacent to $f_{1,2}$, we now have another pair of faces of the same color adjacent to each other.
Continuing this argument by taking $c$-colored faces connected by a $c$-colored edge, we establish the existence of $T$-lines in the presence of color permuting twists.
\end{proof}

\section{Non-Clifford gate}
\label{sec:non-clifford}
In this section we discuss the implementation of a non-Clifford gate.
For both charge permuting and color permuting twists we use magic state distillation, see
Table~\ref{tab:non-clifford}, to implement the non-Clifford gate.
We have dropped the joint $Z$ measurement in the protocol given in Ref.~\cite{Bravyi2006}. 
Similarities and differences with previous protocols is given in Table~\ref{tab:comp_protocol}.

\begin{table}[htb]
    \begin{tabular}{ll}
    \hline
    & Protocol for $\Lambda\left( e^{i \theta} \right)$ gate\\
    \hline
        (0) & Prepare the magic state $|\phi\rangle = 2^{-1/2}(|0\rangle + e^{i \theta}|1\rangle)$. \\
        
        (1) & Perform CNOT gate between data qubit and magic state\\
            & with the former as control and the latter as target.\\
        
        (2) & Measure the magic state qubit. \\
        (3) & If $\theta = \pi / 4$, apply phase gate when measurement outcome is $1$.\\
        \hline
\end{tabular}
\caption{Protocol for implementing non-Clifford gate.}
\label{tab:non-clifford}
\end{table}

\begin{table}[htb]
    \centering
    \begin{tabular}{p{3cm}|p{1.5cm}|p{1.5cm}|p{1.5cm}}
    \hline\hline
    Step     & Ref.~\cite{Bravyi2005} & Ref.~\cite{Bravyi2006}& Proposed protocol\\
    \hline
    Joint $Z$ parity measurement &  present & present & absent\\
    \hline
    CNOT gate &  present & present & present\\
    \hline
    Ancilla measurement &  absent & present & present\\
    \hline
    \end{tabular}
    \caption{Comparison of proposed protocol with some prior work.}
    \label{tab:comp_protocol}
\end{table}

Let $|\psi \rangle = a|0\rangle + b|1\rangle$ be the state of the data qubit on which the non-Clifford gate has to be performed. 
The state of ancillary qubit prepared in magic state is $|\phi \rangle = 2^{-1/2}(|0\rangle + e^{i \theta}|1\rangle)$.
After implementing CNOT gate, the joint state $\ket{\psi}\ket{\phi}$ is transformed to $|\beta \rangle $ where 
	\begin{eqnarray*}
	 |\beta \rangle &=& \frac{1}{\sqrt{2}}\left(a|00\rangle + a e^{i \theta}|01\rangle + b|11\rangle + b e^{i \theta}|10\rangle \right) \\
	                &=& \frac{1}{\sqrt{2}}\left(\left( a|0\rangle + be^{i \theta}|1\rangle \right)  |0\rangle + e^{i \theta}\left( a|0\rangle + be^{-i \theta}|1\rangle \right) |1\rangle \right) \\
	                &=& \frac{1}{\sqrt{2}}\left(\Lambda\left( e^{i \theta} \right)|\psi \rangle |0\rangle + e^{i \theta} \Lambda\left( e^{-i \theta} \right)|\psi \rangle  |1\rangle \right).
	\end{eqnarray*}
Upon measuring ancillary qubit, either the desired gate or its conjugate is implemented on the data qubit with equal probability. 
For $\theta = \pi / 4$, if the measurement outcome is $1$, then the desired transformation on data qubit can be obtained by performing phase gate.

Magic state distillation protocols can be found in Refs.~\cite{Bravyi2012, Jones2013, Haah2017, Chamberland2020}.
Twists in color codes can be initialized in magic state by state injection~\cite{Hastings2015}.
The color code lattice with the encoded magic state is merged with the main lattice by lattice surgery~\cite{Horsman2012}.
Then, the non-Clifford gate is implemented by performing the encoded versions of operations described in the protocol.
After measuring the encoded ancilla qubit i.e. qubit initialized in the magic state, the sublattice containing the encoded ancilla qubit can be detached.

\end{document}